\documentclass[11pt]{article}

\usepackage{latexsym} 
\usepackage{amssymb,amsmath}
\usepackage{epsfig}
\usepackage{pstricks}
\usepackage{mathrsfs}

\usepackage{graphicx}
\usepackage{xspace}

\usepackage{fullpage}

\usepackage{tikz}
\usetikzlibrary{automata}
\usetikzlibrary{snakes,arrows,shapes}
\usetikzlibrary{calc}

\newcounter{newstuff}\setcounter{newstuff}{10}

\newcommand{\support}{\mathtt{support}}
\newcommand{\sort}{\mathtt{sort}}

\newcommand{\children}{{\mathtt{Ch}}}
\newcommand{\child}{{\mathtt{Ch}}}

\newcommand{\Pl}{{\mathscr{P}}}
\newcommand{\Prob}{{\mathbb{P}}}
\newcommand{\Probb}{{\mathcal{P}}}

\newcommand{\Leaves}{{\mathbb{L}}}
\newcommand{\LLeaves}{\vec{\mathbb{L}}}
\newcommand{\Internal}{{\mathbb{W}}}
\newcommand{\Act}{{\mathtt{Act}}}
\newcommand{\height}{{\mathtt{h}}}
\newcommand{\Info}{{\mathcal{I}}}

\newcommand{\Agent}{\mathcal{A}}
\newcommand{\Normal}{\mathcal{N}}
\newcommand{\Normalize}{\mathfrak{D}}
\newcommand{\calG}{\mathcal{G}}
\newcommand{\calA}{\mathcal{A}}
\newcommand{\calE}{\mathcal{E}}
\newcommand{\SysB}{\mathfrak{B}}

\newcommand{\calV}{\mathcal{V}}
\newcommand{\calF}{\mathcal{F}}

\newcommand{\MU}{\ensuremath{\mathsf{K}}}

\newcommand{\PPAD}{\ensuremath{\mathsf{PPAD}}}
\newcommand{\PLS}{\ensuremath{\mathsf{PLS}}}
\newcommand{\TFNP}{\ensuremath{\mathsf{TFNP}}}
\newcommand{\FIXP}{\ensuremath{\mathsf{FIXP}}}
\newcommand{\linFIXP}{\ensuremath{\mathsf{linear\mbox{-}FIXP}}}
\newcommand{\FIXPA}{\ensuremath{\mathsf{FIXP}_a}}

\def\nat{{\mathbb N}}
\def\real{{\mathbb R}}
\def\rat{{\mathbb Q}}

\newcommand{\size}{\mathtt{size}}

\newcommand{\Top}{\mathtt{Top}}

\newenvironment{proof}{\noindent \textbf{Proof. }}{\medskip}

\newtheorem{theorem}{Theorem}

\newtheorem{corollary}[theorem]{Corollary}
 
\newtheorem{lemma}[theorem]{Lemma}
\newtheorem{proposition}[theorem]{Proposition} 

\newtheorem{claim}{Claim}

\newcommand{\commentout}[1]{}

\newcommand{\qed}%
{\penalty 1000 \hfill\penalty 1000\rule{5pt}{5pt}\par\medskip}
\newcommand{\rfotwo}%
{\mbox{\rm FO}^2[<]}
\newcommand{\tl}{\mathrm{TL}}            %

\newcommand{\utl}%
{\mbox{{\rm unary-}}\tl}

\begin{document}

\title{The complexity of computing a (quasi-)perfect equilibrium for an\\  
$n$-player extensive form game
of perfect recall} 
\author{Kousha Etessami\\{\small University of Edinburgh}\\Email: {\tt kousha "at" inf.ed.ac.uk}}

\date{}

\maketitle

\thispagestyle{empty}

\begin{abstract}
We study the complexity 
of computing or approximating refinements of Nash equilibrium
for a given finite $n$-player extensive form game of perfect recall (EFGPR),
where $n \geq 3$.

Our results apply to a number of well-studied refinements:
to sequential equilibrium (SE), which refines both Nash 
and subgame-perfect equilibrium; to 
extensive-form perfect equilibrium (PE), which refines SE;
to normal-form perfect equilibrium (NF-PE);
and to quasi-perfect equilibrium (QPE) which refines both SE and NF-PE.
Of these, the two most refined notions are PE and QPE, which are
incompatible with each other.
By a classic result of Selten (and by a result of van Damme) 
a PE (respectively, a QPE) exists for any EFGPR.  

We show that,
for all these notions of equilibrium,
approximating an equilibrium for a given EFGPR, to within 
a given desired precision,
is $\FIXPA$-complete.
We also consider the complexity of corresponding ``almost'' equilibrium notions,
and show that they are PPAD-complete. 
In particular, we define 
{\em $\delta$-almost $\epsilon$-(quasi-)perfect equilibrium},
and show that
computing one for a given EFGPR, given $\delta > 0$ 
and $\epsilon > 0$, is $\PPAD$-complete.
We show that these notions refine the notion of {\em $\delta$-almost
subgame-perfect equilibrium} for
EFGPRs, which is PPAD-complete.

Thus, approximating one such ($\delta$-almost) equilibrium for
$n$-player EFGPRs, $n \geq 3$, is P-time equivalent to approximating a
($\delta$-almost) NE for a {\em normal form} game with 3 or more
players.  Normal form games are trivially encodable as EFGPRs without
blowup in size. Thus our results extend the celebrated complexity
results for Nash equilibrium in normal form games to various 
refinements of equilibrium in the more general setting
of EFGPRs.

For 2-player EFGPRs, analogous complexity results follow from the algorithms of
Koller, Megiddo, and von Stengel \cite{KolMegvonSteng:1996,vonStengel-efficient-1996}, von Stengel, van den Elzen, and Talman \cite{vonStengel-van-den-elzen-talman:2002}, and Mitersen and S{\o}rensen \cite{MiltSor06,MiltSoren10}.
For $n$-player EFGPRs, an analogous result for Nash equilibrium
and for subgame-perfect equilibrium was given by
Daskalakis, Fabrikant, and Papadimitriou \cite{DFP06}.
No analogous results were known
for the more refined notions of equilibrium for EFGPRs with $3$ or more players.

\commentout{
By contrast to the prior work on 2-player EFGPRs, our results 
make no use of
the {\em sequence form} for EFGPRs. 
By constrast to the work of Daskalakis et. al. \cite{DFP06}, we make no use
of reductions via graphical games.

Instead, we use direct fixed point characterizations
obtained by combining older 
insights from Kuhn and Selten's original {\em agent normal form} for EFGPRs,
and Myerson's alternative definition of PE via $\epsilon$-PEs, 
with insights from new algebraic fixed point functions 
for ($\epsilon$-perfect) equilibria 
of $n$-player normal form games, $n \geq 3$, 
developed recently in \cite{EY07} and \cite{EHMS14}. }

\end{abstract}

\section{Introduction}

{\em Extensive form games} are the fundamental 
mathematical model of  
games that transpire as a sequence
of moves by players over time.
A finite extensive form game is described by a finite tree, 
where each internal node belongs to one of the players (or to chance),
and where each leaf indicates a payoff to every player.  
A ``play'' of the game traces a path in this tree from the root to a leaf,
with each player choosing the child to move to at nodes belonging to it
(the child being chosen randomly at chance nodes, or when players decide
to randomize their moves).  
In general, an extensive form game may be of {\em imperfect information},
meaning roughly that players may need to make moves
without having full knowledge of the current ``state'' 
(i.e., current node of the game tree).
A basic sanity condition for imperfect
information games,  called {\em perfect recall},
requires (roughly) that every player in the game should 
remember all of its own prior moves.
This condition was already put forward
by Kuhn (\cite{Kuhn53}), who showed that games with
perfect recall have nice properties and avoid certain 
pathologies of general extensive form games.
Subsequently, Selten \cite{Selten75}, 
in his seminal work  on 
perfect\footnote{{\em Warning:} 
the word ``perfect'' is over-used as a modifier in game theory: it 
refers to various 
not-necessarily-related concepts, depending on what word it modifies
(  ``information'', ``recall'', ``equilibrium'', $\ldots$  ,  etc. ). }
equilibria,
argued that non-cooperative extensive form games that lack
perfect recall should be rejected as misspecified models.
The assumption of perfect recall has indeed become standard practice in
much of the large literature on
extensive form games. 
Henceforth, we use the abbreviations: EFGPR for ``extensive form game of
perfect recall'',  EFG for ``extensive form game'',
and NFG for ``normal form game''. 

Selten's work made clear that 
Nash equilibrium, and even subgame-perfect equilibrium,
is inadequately refined as a solution concept
for extensive form games.
In particular, there are Nash and subgame-perfect equilibria
of EFGPRs that involve ``non-credible threats'',
rendering them implausible. 
Motivated by this, Selten defined  
a more refined notion of perfect equilibrium,
based on {\em ``trembling hand''} perfection, 
and showed that any EFGPR has at least one perfect equilibrium.
(Selten was awarded a Nobel prize in economics, 
 together with Nash and Harsanyi,
 largely for his work on refinement 
 of equilibria.)
Subsequent work, e.g., by Kreps and Wilson 
on sequential equilibria \cite{Kreps-Wilson:1982},
and by many others, has reaffirmed the imperative
for considering refinements of equilibrium,
especially for extensive form games. 
By now EFGPRs, and equilibrium refinements 
for them, are treated in most standard textbooks
on game theory (see, e.g., \cite{OsRub94,Myerson-book-1997,MasSolZam13,vanDamme91}).

This paper studies the complexity of 
computing or approximating an equilibrium for 
a given EFGPR, with $n \geq 3$ players.  We study 
various important
refinements of NE, 
including:  sequential equilibrium (SE),
extensive form trembling-hand perfect equilibrium (PE),
and quasi-perfect equilibrium (QPE).
All of these notions refine subgame-perfect equilibrium (SGPE).
Of these, PE and QPE
are the most refined notions.\footnote{However, unlike PE and QPE, 
an SE consists not just of 
a suitable behavior profile, but also a {\em system of beliefs}.
We'll see later in what sense PE (and QPE) ``refines'' 
SE (\cite{Kreps-Wilson:1982}).
Our complexity results for SE are also for computing 
its associated belief system.}
Quasi-perfect equilibrium (QPE), defined by van Damme
\cite{vanDamme84}, is incompatible
with PE, meaning that a PE need not be a QPE and a QPE need not be a PE.
Like PE,  QPE also refines NE, SGPE, and SE.
Furthermore, QPE also refines ``normal-form perfect 
equilibrium'' (NF-PE) for EFGPRs, which differs from, and is incompatible with 
(extensive form) PE for EFGPRs.    For the benefit of readers 
confused by all the different mentioned notions of 
equilibrium for EFGPRs,
Figure \ref{fig:refinement-hasse}
of Section \ref{sec:background}
summarizes the refinement relationships that exist (and don't exist) 
between them, by depicting the Hasse diagram of their refinement partial order.

Informally, we show that for all these notions of equilibrium,
approximating an equilibrium for given $n$-player EFGPR
within a given desired precision $\delta > 0$
(or computing an ``$\delta$-almost equilibrium'' for given $\delta > 0$)
is no harder than approximating a ($\delta$-almost) NE for a given
$3$-player {\em normal form game}.
NFGs are trivially encodable as EFGPRs without blowup
in size.  Thus our results extend the celebrated complexity results
for computing/approximating an NE for NFGs to
the much more general setting of EFGPRs,
and furthermore ``perfection comes at 
no extra cost in complexity''.
Before stating our results more precisely, we must first discuss prior related work. 

\vspace*{0.05in}

\noindent {\bf Related work.}
Equilibrium computation,
and  its connection to fixed point computation,
has been studied for decades, both for normal form and extensive
form games.
Papadimitriou \cite{Papa_parity_94} defined the search problem
complexity class $\PPAD$ in order to capture the complexity of
problems related to computing an equilibrium.\footnote{It is
well-known 
that already for 2-player NFGs, 
computing a {\em specific} NE, e.g., that optimizes
total payoff or other objectives,
is NP-hard \cite{GilZem89,ConSan03}. So, in this paper, whenever we speak
of a problem of computing (or approximating) ``an'' equilibrium,
possibly of a {\em refined} kind, we are not more specific than that: 
{\em any} equilibrium
{\em of that kind} will do.}
It follows from the correctness of the Lemke-Howson algorithm 
that computing an NE for 2-player NFGs is in $\PPAD$.
It similarly follows from  Scarf's algorithm that given an 
$n$-player NFG (for any $n$),
and given $\epsilon > 0$,
computing 
a ``$\epsilon$-NE'' (which we call ``$\epsilon$-almost-NE'' in this paper, to avoid confusion with other notions\footnote{We do so 
to avoid confusion when we combine ``$\epsilon$-almost''
with other notions, particularly
Myerson's $\epsilon$-PEs (\cite{Myerson78}).})
is in $\PPAD$;   this is
a strategy profile where  
no player can improve its own payoff by
more than $\epsilon$ by unilaterally deviating from its strategy. 
In a celebrated series of result in 2006, 
Chen and Deng \cite{Chen-Deng06}, and  
Daskalakis et. al. \cite{DasGP09}, showed that
both of these problems are $\PPAD$-complete.
For games with $3$ (or more) players,
specified by an integer payoff table,  
all the NEs
may have irrational numbers (\cite{Nash51}).
Thus, we can not compute an NE {\em exactly} for them (at least not in
the Turing model of computation).
With
Yannakakis in \cite{EY07},
we showed that for games with 3 (or more) players,
an $\epsilon$-NE may in fact
be nowhere near
any actual NE, unless $\epsilon > 0$ is so small that its 
{\em binary} encoding size is exponential in the size of the
game; thus, finding an $\epsilon$-NE may
tell us nothing about the location of any actual NE. 
In \cite{EY07} we considered the complexity of
computing an actual NE
to within a desired number of bits of precision, i.e.,
computing a strategy profile that has
$\ell_\infty$-distance at most $\delta > 0$ to
some NE, for a given $\delta$.
We showed that this problem is complete
for a
natural complexity class which we called 
\FIXPA.\footnote{We also showed in \cite{EY07} that
approximating an actual NE, even within $\ell_\infty$-distance $c < 1/2$
for 3-player NFGs, 
is {\em ``hard''}
in that even placing this in NP would resolve
long standing open problems in arithmetic vs. Turing complexity.}
Informally, \FIXPA\  
is the class of discrete search problems that can
be reduced to approximating, within desired $\ell_\infty$-distance 
$\delta > 0$, a (any) Brouwer fixed point of a continuous function given
by an algebraic circuit using gates $\{+,-,*,/,\max,\min\}$.
(We will later formally define \FIXPA, as well as its {\em real-valued}
 progenitor $\FIXP$, and the piecewise-linear fragment $\linFIXP$ ($= \PPAD$).)
Very recently, in a paper with Hansen, Miltersen, and S{\o}rensen 
\cite{EHMS14}, building on \cite{EY07},
we have shown that for NFGs with $n \geq 3$ players,
approximating a ``trembling-hand perfect equilibrium'' (PE)
within desired precision is also $\FIXPA$-complete.
Since PEs refine NEs, we only had to show containment
in $\FIXPA$.   Interestingly, it was shown previously
in \cite{HMS10} that given a 3-player NFG, 
deciding whether a {\em given} strategy profile is a PE is NP-hard
(unlike for NEs, for which this is easily in P-time).

Research on the computation of equilibria 
for EFGs, with and without perfect recall,  
also has a long and rich history.
Of course for perfect information games 
computing a NE or SGPE
is easily in P-time using Kuhn's classic
``backward induction'' (\cite{Kuhn53}).
On the other hand, for imperfect information games {\em without} perfect
recall,  it was pointed out by Koller and Megiddo 
\cite{koller1992complexity} (and by others, e.g., \cite{blair1993games})
that even for $1$-player games computing or approximating 
a (any) NE is  NP-hard  (it can easily encode 3SAT).
By contrast, for 1-player EFGPRs an equilibrium (i.e., an optimal strategy)
can be computed easily in P-time by dynamic programming, 
as shown by Wilson \cite{wilson1972}.

Of course, one way to compute an equilibrium for an EFGPR (or EFG) is to 
first convert it to an NFG,
and then apply any algorithm applicable to NFGs.
The problem with this approach is that, 
even for EFGPRs, a {\em standard} conversion from extensive to 
normal
form incurs exponential
blowup.\footnote{Even notions of {\em reduced} normal 
form in general incur exponential blowup for EFGPRs.  We will not
elaborate on reduced norm form, but roughly it means redundant strategies
of the EFGPR are not considered in the normal form.} 
  Thus, even a P-time algorithm
for NFGs requires  exponential time if 
applied naively in this way to EFGPRs.
In the other direction, an NFG can trivially be encoded
as an ``equivalent'' EFGPR which is not much bigger,
so that any equilibrium computation problem for NFGs is P-time
reducible to an analogous problem for EFGPRs.

In a series of important works in the 1990s,  Koller, Megiddo, and von Stengel
\cite{koller1992complexity,vonStengel-efficient-1996,
KolMegvonSteng:1996} 
obtained equilibrium algorithms for 2-player
EFGPRs with complexity bounds
that essentially match those of 2-player NFGs.   In particular,  Koller
and Megiddo \cite{koller1992complexity} 
showed that for 2-player zero-sum EFGPRs an
NE (i.e., a minimax profile) in behavior strategies
can be computed in P-time using linear programming.
Furthermore, by using the {\em sequence form} 
(\cite{Romanovskii62,vonStengel-efficient-1996})
of  EFGPRs,  Koller, Megiddo, and von Stengel
(\cite{KolMegvonSteng:1996}) 
showed that one can apply variants of Lemke's algorithm to
certain LCPs associated with 2-player EFGPRs to compute an (exact) NE 
in behavior strategies.
A consequence of their result (when combined 
with Chen and Deng's $\PPAD$-hardness result
for 2-player NFGs \cite{Chen-Deng06}) is
that computing an NE for 2-player EFGPRs is $\PPAD$-complete.
Later, von Stengel, van den Elzen, and Talman \cite{vonStengel-van-den-elzen-talman:2002}, using the sequence form, gave a similar Lemke-like algorithm 
for computing  
a ``normal form perfect equilibrium'' (NF-PE)
\footnote{A {\em normal-form perfect equilibrium} (NF-PE), 
is a (behavior) profile that induces a (mixed profile) 
PE of the {\em standard} NFG associated with the 2-player EFGPR.
In general, this is {\em not} equivalent to  (extensive form) PE
for EFGPRs (see \cite{vanDamme91}, Chapter 6).
In fact, unlike (extensive form) PE, a NF-PE need not 
 even give a subgame-perfect equilibrium of the EFGPR.
We will revisit the distinction
between NF-PE and PE for EFGPRs when we provide formal
definitions. Our results apply to both PE and NF-PE.} for 2-player EFGPRs.
More recently,  Miltersen and S{\o}rensen have used the sequence
form to give related Lemke-like algorithms for computing
both a SE \cite{MiltSor06}  
and a QPE \cite{MiltSoren10}
for 2-player EFGPRs.
As pointed out by Miltersen and
S{\o}rensen in \cite{MiltSoren10}, van Damme's 
existence proof for a QPE in any EFGPR, given
in \cite{vanDamme84},
is somewhat roundabout: it uses the existence
of a {\em proper} equilibrium 
in a NFG (\cite{Myerson78}), and it uses
a relationship established in \cite{vanDamme84}  
between proper equilibrium in NFGs
and QPEs of any EFGPR that has that NFG as its
standard normal form.  Miltersen and
S{\o}rensen  state in \cite{MiltSoren10} 
that ``{As far as we know, no very simple and direct
proof of existence [of QPE] is known.}''
They note that their results furnish a different
proof of existence of QPE for 2-player EFGPRs.
One of the consequences of our results is a 
simple and direct proof, via application of Brouwer's fixed point 
theorem (and Bolzano-Weierstrass), of the existence of a QPE in any $n$-player EFGPR. In a similar
way, our results furnish a direct existence proof for all 
the notions of equilibrium for EFGPRs that we study.

More closely related to our complexity results for $n$-player EFGPRs,
with $n \geq 3$, 
von Stengel  
in \cite{vonStengel-efficient-1996}
used the sequence form of EFGPRs to describe 
an interesting
nonlinear program, associated with a given
$n$-player EFGPR,  such that the optimal solutions to  
the nonlinear program are the NEs of the EFGPR,
where the encoding size of the nonlinear program is polynomial
in the size of the EFGPR.
One can use von Stengel's nonlinear programming formulation,
together with results on decision procedures for the 
theory of reals 
\cite{Ren92,BasuPollackRoy2006},
to show that approximating an NE for a given $n$-player EFGPR, 
to within given $\ell_\infty$-distance $\delta > 0$, is
in PSPACE. 

Even more closely related to our results is a result 
by Daskalakis, Fabrikant, and Papadimitriou in \cite{DFP06}.
Specifically, Theorem 4 of \cite{DFP06} states
that the problem of computing a [$\epsilon$-]Nash equilibrium 
and a [$\epsilon$-almost] subgame-perfect
equilibrium, for an extensive form game [of perfect recall]
is polynomial
time reducible to computing a [$\epsilon$-]Nash equilibrium for a 2-player
normal form game.   
The statement of Theorem 4 in
\cite{DFP06} does not make a distinction between computing
an actual Nash equilibrium (within desired precision $\epsilon > 0$), 
versus computing an $\epsilon$-NE.
Indeed, \cite{DFP06} appeared prior to the publication of
the paper \cite{EY07} where the distinction between
the complexity of these two problems was highlighted,
and where the complexity class $\FIXP$ and $\FIXPA$ were defined.
The proof of Theorem 4 in \cite{DFP06} can be used (\cite{Das14})  
to establish a reduction from the problem of computing an exact
Nash or subgame perfect
equilibrium (within given desired precision $\epsilon > 0$) 
for a given $n$-player EFGPR, to the problem of
computing a Nash equilibrium (within given desired precision $\epsilon > 0$) 
for a 3-player normal form game.
In \cite{DFP06} a brief proof sketch
for Theorem 4 is provided, which  builds on the earlier
PPAD-completeness results in
\cite{DasGP09,Chen-Deng06} and goes via reductions to graphical
games.  However, the sketched
proof provided in \cite{DFP06} contains an error
(\cite{Das14}): it assumes that 
any behavior strategy profile (even when not fully mixed) necessarily
defines a distribution on the nodes of every information set,
but this need not be the case, in particular because some information
sets may be reached with probability $0$.  Thus, the
distributions on information sets described in 
the proof sketch  in \cite{DFP06} are in general ill-defined.
The authors of \cite{DFP06} have communicated
(\cite{Das14})
a fix for this error 
to the author of this paper.
The fix involves defining 
the probability distribution on a given information set
using the most recent common single-node ancestor
of all vertices in that information set.
The authors of \cite{DFP06} will make their fixed proof 
available in some future expanded version of \cite{DFP06}.
We will not elaborate further on their fix, since
our results make
no use of any of the results in \cite{DFP06}.
In particular, we make no use of graphical games.   Instead we
directly provide algebraically-defined functions whose
fixed points give $\epsilon$-perfect equilibria of the given EFGPR.
Our results imply
essentially the same
complexity results for computing Nash and subgame-perfect
equilibrium as those implied by Theorem 4 of \cite{DFP06},
as well as for computing various other important refinements of 
equilibrium.\footnote{Although
it is worth pointing out that, by contrast, our results {\em do not} imply
that computing an exact PE, SE, or QPE, is in $\FIXP$,
only that computing a ($\delta$-almost) 
approximation of these is in $\FIXPA$ (and
$\PPAD$ respectively).}

\vspace*{0.07in}

\noindent {\bf Our results.} 
We consider the complexity of various equilibrium computation problems 
for which an 
input instance consists of 
$\langle \calG, \delta \rangle$, 
where $\calG$ is an $n$-player EFGPR 
(for any $n$: $n$ can be part of the input), and  where the rational 
``error'' parameter $\delta > 0$ is given
in binary representation.   Our main results are the following:

\begin{enumerate}

\item 
Computing a behavior (strategy) profile, $b$, 
such that 
there exists a PE  (or NE, or SGPE)\\  $b^*$ of  $\calG$, with 
$\|b - b^*\|_\infty < \delta$, is  $\FIXPA$-complete.  
(Theorem \ref{thm:main-fixpa-result}, Part 1.)

\item 
Computing a behavior profile, $b$, 
such that 
there exists a QPE (or NF-PE),  $b^*$ of  $\calG$, with 
$\|b - b^*\|_\infty < \delta$, is  $\FIXPA$-complete.  
(Theorem \ref{thm:main-fixpa-result}, Part 2.)

\item Computing an
{\em assessment},
$(b,\mu)$, 
such that there exist an  SE, $(b^*,\mu^*)$ of $\calG$,  with\\
$\| (b,\mu) - (b^*, \mu^*) \|_\infty
< \delta$,  is $\FIXPA$-complete.
(Theorem \ref{thm:main-fixpa-result}, Part 3.)\\
An assessment $(b,\mu)$ consists of both a behavior profile $b$, as well as an
associated\\ {\em system of beliefs},  $\mu$. (We shall define all this
formally later.)

\item Given, additionally,  $\epsilon > 0$  (in binary representation) 
as input, computing a\\ {\em $\delta$-almost $\epsilon$-perfect equilibrium}
($\delta$-almost-$\epsilon$-PE) of $\calG$  is $\PPAD$-complete.
(Theorem \ref{thm:almost-ppad-complete}, Part 1.)\\
A $\delta$-almost-$\epsilon$-PE is a ``$\delta$-almost'' relaxation of 
Myerson's notion  of  $\epsilon$-PE (\cite{Myerson78}) applied
to EFGPRs, 
 $\delta$-almost-$\epsilon$-PE suitably ``refines'' 
{\em $\delta$-almost subgame-perfect equilibrium} ($\delta$-almost-SGPE).
A $\delta$-almost-SGPE of $\calG$ is a behavior profile, $b$, where 
no player can improve its own payoff
{\em in any subgame of} $\calG$ by more than
$\delta$, by unilaterally
changing its strategy in that subgame.
Thus, as a consequence we get (cf. \cite{DFP06}) that
computing a $\delta$-almost-NE  and  $\delta$-almost-SGPE is
PPAD-complete (Theorem \ref{thm:almost-ppad-complete}, Part 3.)

\item Likewise, we define a notion of $\delta$-almost-$\epsilon$-QPE, which is
a relaxation of the notion of $\epsilon$-QPE, defined by 
van Damme in \cite{vanDamme84}, and
we show that computing a $\delta$-almost-$\epsilon$-QPE
of $\calG$, given $\calG$, and given $\delta > 0$ and $\epsilon > 0$,  is $\PPAD$-complete.
(Theorem \ref{thm:almost-ppad-complete}, Part 2.)
\end{enumerate}

\noindent In all the above results, the ``hardness'' result 
follows immediately (already for 3-player games) 
from the prior known hardness results for NFGs 
(\cite{EY07,Chen-Deng06,DasGP09}).
The new results are the upper bounds.

\vspace*{0.07in}

\noindent {\bf Outline of proof ideas.}
By contrast to the prior work on algorithms for 2-player EFGPRs, our results make no
explicit use
of the {\em sequence form} for EFGPRs.
Also, by contrast to \cite{DFP06} we make no use of reductions to graphical
games. 
Instead, we combine older
insights, including Kuhn and Selten's original {\em agent normal form} for EFGPRs,
and Myerson's alternative definition of PE
using $\epsilon$-PEs (both for normal and extensive form), 
with recently developed fixed point functions 
for equilibria of $n$-player normal form games, $n \geq 3$, 
developed in \cite{EY07} and \cite{EHMS14}.

More specifically, a key to our results is this:
in Section \ref{sec:epsilon-PE-char},
we adapt a construction 
in \cite{EHMS14} of a fixed point function 
for ``$\epsilon$-PEs'' of a given NFG (which itself is an
adaptation of a fixed point function for NEs of NFGs given in \cite{EY07})
to show that to any $n$-player EFGPR, $\calG$,  
we can associate a 
continuous function $F^\epsilon_\calG(x)$,
defined 
by a ``small'' algebraic circuit over $\{+,*, \max\}$ (whose encoding
size is polynomial in that of $\calG$), 
where $\epsilon$ in an input parameter to the circuit, 
and such that, for any fixed $\epsilon > 0$, 
the function $F^\epsilon_\calG(x)$ maps the space of behavior
strategy profiles of $\calG$ to itself, such
that the Brouwer fixed points of  $F^\epsilon_\calG(x)$
constitute $\epsilon$-PEs of $\calG$.    This proves that computing an $\epsilon$-PE,
given $\langle \calG, \epsilon \rangle$, is in $\FIXP$, {\em even when $\epsilon > 0$ 
is given  succinctly by an algebraic circuit}.

Also, we similarly define another continuous function,  $H^\epsilon_{\calG}(x)$ using a ``small'' algebraic circuit,
such that, for any fixed $\epsilon > 0$  the function $H^{\epsilon}_\calG(x)$ maps the space of behavior
profiles to itself, and such that every fixed point of $H^{\epsilon}_{\calG}(x)$ is a $\epsilon$-QPE.

The reason why we can construct the functions
$F^{\epsilon}_\calG(x)$ and $H^{\epsilon}_{\calG}(x)$ with a  ``small'' (poly-sized) algebraic circuit 
is related to properties of the {\em agent normal form}
of EFGPRs, and to the fact that the ``realization
probabilities'' and the expected payoff functions for
EFGPRs can be expressed as ``small'' (multilinear) polynomials.
In particular, a simple but important fact is that an EFGPR has exactly the same  ($\epsilon$-)PEs as
its agent normal form.  (It {\em does not} necessarily have the same NEs.)
Even though we can not construct the agent normal form explicitly (because it is
exponentially large),  it turns out that we do not need to: by combining these 
various facts, we can nevertheless 
construct a ``small'' algebraic circuit for $F^{\epsilon}_\calG(x)$, 
by adapting the analogous construction from \cite{EHMS14}.

With the functions $F^{\epsilon}_\calG(x)$ (and $H^\epsilon_{\calG}(x)$) in hand, 
in Section \ref{approx-of-PE} we then use
(similar to \cite{EHMS14}) 
algebraic circuits 
to construct
a ``very very small''  $\epsilon^* > 0$  for which we can prove,
using results from real algebraic geometry (\cite{Ren92,BasuPollackRoy2006}),
that every fixed point of
$F^{\epsilon^*}_\calG(x)$ is $\delta$-close  (in $\ell_\infty$) to an actual
PE.  Likewise, we show that every fixed point of $H^{\epsilon^*}_\calG(x)$ is
$\delta$-close to a QPE.
This allows us to show containment in $\FIXPA$ for approximating a PE,
and for approximationg a QPE.
We furthermore show how to extend the function $F^{\epsilon}_\calG(x)$ to define 
another ``small'' algebraic function
$G^{\epsilon}_\calG(x,z)$ that serves the same purpose for {\em sequential 
equilibrium} (SE), and in particular that additionally includes a corresponding 
{\em system of beliefs} inside its 
fixed points.  This shows containment in $\FIXPA$ for approximating an SE.

Finally, in Section \ref{sec:delta-almost-epsilon-PE},
we observe some properties of the
functions $F^{\epsilon}_\calG(x)$  (they are ``polynomially continuous''
and ``polynomially computable''), which 
when combined with results in \cite{EY07}
imply that
computing a ``$\delta$-almost fixed point''
of
$F^{\epsilon}_\calG(x)$, given $\calG$
and given $\delta > 0$ and $\epsilon > 0$,
is in $\PPAD$.
We then show that a ``$\delta$-almost fixed point''
of $F^{\epsilon}_\calG(x)$ is a $(3 \delta)$-almost-$(\delta+\epsilon)$-PE 
of $\calG$.  
We also show that a ``$\delta$-almost fixed point'' of
$H^\epsilon_{\calG}(x)$ is a  $(3 \delta)$-almost-$(\delta+\epsilon)$-QPE.
Lastly, we show that a $\delta'$-almost-$\epsilon'$-PE,
for ``polynomially small''  $\delta'$ and $\epsilon'$,
is  a $\delta$-almost-SGPE of $\calG$.
These results allow us to show containment in $\PPAD$ 
for the ``$\delta$-almost'' equilibrium notions that we study.

This last part, for establishing $\PPAD$-completeness for ``$\delta$-almost'' equilibria,
is technically one of the more involved
parts of our proofs. 
Also, our proof of $\FIXPA$-completeness 
for computing a QPE involves a novel fixed point characterization.
By comparison to these, our proof of $\FIXPA$-completeness 
for PE is technically
easier, {\em given} the prior results in \cite{EHMS14,EY07},
and given long existing results in the literature on EFGPRs which we exploit.

\noindent {\bf Potential computational applications.}
We believe our results could potentially 
provide a ``reasonably practical'' method for computing 
 $\delta$-almost relaxations of equilibrium refinments 
for $n$-player EFGPRs, 
including $\epsilon$-perfect
and $\epsilon$-quasi-perfect equilibrium as 
well as less refined notions
of  $\delta$-almost equilibrium like SPGE and Nash 
(for which see also the result of \cite{DFP06}),
by applying classic
discrete path following algorithms
for ``almost'' fixed point computation,  such as variants of 
Scarf's algorithm \cite{Scarf67,Scarf73}, 
on the ``small'' algebraic
fixed point functions we associate with EFGPRs.
We believe this is a promising 
approach for 
``almost equilibrium'' computation for EFGPRs
that should be implemented and
explored experimentally.
We note that the well-known software package GAMBIT (\cite{GAMBIT-ref14}),
which provides a variety of state-of-the-art algorithms for solving various classes of games,
does not currently provide any algorithm for computing or approximating
an equilibrium (of any kind) for 
a general $n$-player EFGPR,  for any  $n \geq 3$.  
Indeed, a survey on
equilibrium computation from 1996 
(\cite{MckelveyMcLennan-survey1996}),
by McKelvey and McLennan
who helped to develop GAMBIT,
discusses the algorithms by Koller et. al. 
(\cite{koller1992complexity,vonStengel-efficient-1996,KolMegvonSteng:1996})
for 2-player EFGPRs,
but does not discuss any general algorithms for $n$-player EFGPRs, beyond 
first converting
to (reduced) normal form, 
and using heuristics like iterated elimination of dominated strategies.  
We believe our results 
can potentially be used to remedy this gap 
in the availability of ``practical'' software for
(refined) equilibrium computation for $n$-player EFGPRs.

\section{Definitions and Background}

\label{sec:background}

{\em Dear Reader:}  EFGPRs, and refinements of equilibrium for them, 
are treated in nearly every modern textbook on game theory (see, e.g.,
\cite{MasSolZam13,OsRub94,Myerson-book-1997,vanDamme91}).
Nevertheless,  for us to discuss our problems rigorously,
we can not just point you to a book or paper with relevant definitions.
We must fix
(a considerable amount of) notation and terminology,
and we must describe various essential background results.
This is especially because we will be addressing  
various subtle refinements of equilibrium,
and corresponding notions (in some cases, new) of ``approximate'' 
and ``almost'' equilibrium, where slight differences in 
definitions can  
have major consequences, particularly
for computational complexity.   
We also have to define the relevant
complexity classes like $\FIXP$, $\FIXPA$, and $\PPAD$.
So, we proceed to carefully fix notation
and definitions, and to describe the needed background results.
Readers familiar with EFGPRs, or with other parts of the background,
can skip ahead to subsequent sections 
that contain the new results, and 
return to this section as needed, using it as a ``reference''.
(Although some things are likely to become harder to follow that way.)\\

For a finite set $X$,
we let $\Delta(X)$ denote the set of probability distributions on $X$, i.e.,
the set of functions $f : X \rightarrow
[0,1]$ such that $\sum_{x \in X} f(x) = 1$.
For $f \in \Delta(X)$, we let $\support(f) = \{ x \in X \mid f(x) > 0 \}$ denote its {\em support set}.
For a positive integer $k$, we let $[k] = \{1,\ldots,k\}$.

\noindent {\bf Extensive Form Games.} 
Intuitively, a finite game tree is just a rooted, labeled, finite tree.
We will find it convenient to view 
such a tree 
as a finite, prefix-closed, set of strings over a finite 
alphabet of ``actions''.
Formally,
let $\Sigma$ be a finite set called the {\em action alphabet}.
We shall use the symbols $a, a', a_1, a_2, \ldots$, to denote 
letters in the alphabet $\Sigma$. 
For a string $u \in \Sigma^*$, we use $|u|$ to denote the length of $u$.
A {\em tree}, $T = (V,E)$ over action alphabet $\Sigma$, consists
of a finite set $V \subseteq \Sigma^*$ of  
{\em nodes} (or {\em vertices}),
where 
furthermore $V$ is prefix-closed, 
meaning that if $w \in V$ and $w = u a$, where $a \in \Sigma$, then $u \in V$.
Note that by definition the empty string $\epsilon$ is in $V$.  We refer to $\epsilon$ as 
the {\em root} of the tree.
The directed edge relation $E \subseteq V \times V$, of the tree $T$
(which points ``away from'' the root) is
defined by:  $E = \{ (u,w) \in V \times V \mid  \exists a \in \Sigma:  w = ua \}$.
For two nodes $u, w \in V$,   if $(u,w) \in E$, we say that $w$ is a {\em child} of
$u$, and that $u$ is the (unique) {\em parent} of $w$.  
For $u \in V$, we let
$\children(u) = 
\{  w \in V \mid  (u,w) \in E \}$
denote the set of children of $u$.
Let $\sqsubseteq$ denote the reflexive transitive
closure of $E$.  Thus, $u \sqsubseteq w$ is just the prefix relation 
on the set $V$.  
We use $u \sqsubset w$ to denote the strict prefix relation: $(u \sqsubseteq w  \wedge  u \neq w)$. 
When $u \sqsubset w$, we say that $u$ is a {\em ancestor} of $w$,
and that $w$ is a {\em descendant} of $u$.
For each node $u \in V$, we define $\Act(u) = \{ a \in \Sigma \mid ua \in V \}$
to be the set of actions {\em available} at node $u$.
A {\em leaf} is a node $u \in V$ with no children, i.e., 
where $\children(u) = \emptyset$.
Let $\Leaves = \{ u \in V \mid  \children(u) = \emptyset \}$ denote
the set of leaves of the tree $T$.
A non-leaf node is called an {\em internal node}; 
let $\Internal = V \setminus \Leaves$ denote the set of internal
nodes.
A {\em path} $\psi$ in the tree $T$ is a non-empty sequence 
$\psi = u_0, u_1, u_2, \ldots, u_m$ of nodes, 
where
for all $0 \leq i < m$, 
$(u_i, u_{i+1}) \in E$.
The path $\psi$ is called a  {\em play}
if $u_0 = \epsilon$, and it is called a {\em complete play} 
if additionally  $u_m$ is a leaf.
In other words, a (complete) play is just a path that starts at the root 
(and ends at
a leaf).  Note that a node $u \in V$ is a string in $\Sigma^*$
that encodes all the
information needed to reconstruct the unique path in $T$ from the root to $u$.

A {\em Finite Game in Extensive Form} (EFG), 
$\mathcal{G} = (N, \Sigma, T, P, I, p, r)$,   
is a tuple consisting of:

\begin{enumerate}
\item \underline{\em Players}: A set $N = [n] = \{ 1, \ldots, n \}$ of {\em players}.

\item \underline{\em Action alphabet}: 
a finite set $\Sigma$,
called the {\em action alphabet}.
Let $k_\calG = |\Sigma|$ denote the 
size of $\Sigma$.

\item \underline{\em Game Tree}: A finite tree $T = (V,E)$
over the action alphabet $\Sigma$,
called the {\em game tree}.

\item \underline{\em Player partition}: 
A partition $P = (P_0, P_1, \ldots, P_n)$ of the set $\Internal$
of internal nodes, i.e., 
$P_i \subseteq \Internal$, $\bigcup^n_{k=0} P_k = \Internal$, and 
$P_i \cap P_j = \emptyset$, for  all $i \neq j$,  $i,j \in \{0,\ldots,n\}$.

For $i=1,\ldots,n$, 
the nodes in $P_i$ are the internal nodes ``belonging'' to player $i$:
these are the nodes where player $i$ has to choose the next move.
The set $P_0$ consists of the internal nodes belonging 
to {\em chance} (or {\em nature}).
The next move at a node $u \in P_0$ is chosen randomly,
according to a provided distribution, $p_u$, given in item (6.) below.\\
We define the {\em player map}, $\Pl: V \rightarrow \nat$, by:
for all $i \in \{0,\ldots, n\}$ and $u \in P_i$, $\Pl(u) := i$. 

\item  \underline{\em Information set partition}:
A tuple $I = (I_1,\ldots, I_n)$, such that for each $i \in [n] = \{1,\ldots,n\}$,
$I_i = ( I_{i,1}, \ldots, I_{i,d_i} )$ is a partition of the set 
$P_i$ of vertices belonging to player $i$,  
where each {\em information set}  $I_{i,j} \subseteq P_i$ is non-empty \& 
$\bigcup^{d_i}_{j=1} I_{i,j} = P_i$,  $I_{i,j} \cap I_{i,k} = \emptyset$
for all $j \neq k$, $j,k \in [d_i]$.

It is furthermore assumed that, for every information set $I_{i,j}$,
and for any two nodes $u,v \in I_{i,j}$, $\Act(u) = \Act(v)$.  In
other words, 
the same set of actions is available to player $i$ 
at every node in $I_{i,j}$.
Let $\mathcal{A}_{i,j} := \Act(u)$, where $u \in I_{i,j}$.
By assumption, $\mathcal{A}_{i,j}$ is well-defined.

We define the map $\Info(\cdot)$, 
which maps a node $u$
to the index of the information set to which $u$ belongs.
Thus, if 
$u \in I_{i,j}$, 
then $\Info(u) := j$.  For convenience,
we extend the map $\Info(\cdot)$ to chance nodes $u \in P_0$ as follows: 
for all $u \in P_0$, we define $\Info(u) := u$. 

The extensive form game, $\calG$, is said to have {\em perfect information} if
all information sets $I_{i,j}$ are singleton sets, for 
all $i \in [n]$, $j \in [d_i]$.  
Otherwise, it is called
a game of {\em imperfect information}.  

\item \underline{\em Probability distributions for chance nodes}:
A tuple of probability distributions $p = (p_u)_{u \in P_0}$, 
one for each chance node $u \in P_0$,
 where $p_u :  \Act(u)  \rightarrow (0,1] \cap \rat$ is a positive,
rational\footnote{We restrict the distributions $p_u$
to have rational probabilities for computational
purposes.}, probability distribution
on actions available at $u$.  So,
$p_u(a) >  0$ and $p_u(a) \in \rat$ for all $a \in \Act(u)$, 
and $\sum_{a \in \Act(u)} p_u(a) = 1$.
Let $p^\calG_{0,\min} := \min_{u \in P_0, a \in \Act(u)} 
p_u(a)$.

\item \underline{\em Payoff functions}: An $n$-tuple $r = (r_1,\ldots, r_n)$ of payoff functions.
For each player $i$, the payoff function $r_i: \Leaves \mapsto \nat_{> 0}$,
maps each leaf $u \in \Leaves$ of the tree $T$ to a positive integer payoff 
for player $i$.\footnote{We restrict to positive integer payoffs,
rather than real payoffs, for
computational purposes.  One can of course
also consider rational payoff functions $r_i: \Leaves \mapsto \rat$.
However, as is well-known, restricting to positive integer payoffs is
w.l.o.g. for computational purposes: we can always
 ``clear denominators'' 
by multiplying by their LCM, and then add a large enough
positive value to the resulting integers to get positive payoffs.  
This does not increase by much the encoding size of $\calG$, 
and the resulting game can be shown to be ``suitably isomorphic'' to the original for all our purposes,
including equilibrium approximation within desired precision, and $\delta$-almost equilibrium computation. }
Let $M_{\calG} := \max_{i \in [n], u \in \Leaves} r_i(u)$ denote the largest possible 
(positive integer) payoff.
\end{enumerate}

We denote 
the bit encoding size of an EFG, $\calG$, by $|\calG|$, where
we assume binary encoding for
the integer payoff values at the leaves of $\calG$, as well as the rational probabilities of actions
at chance nodes (with numerator and denominator given in binary).\footnote{We assume natural representations for the various pieces of $\calG$,
 including the tree $T$, player partition, information partition, payoff functions, and the probability distributions at chance nodes
(with rational probabilities encoded in binary) .
The details of the natural encoding
are irrelevant for our purposes, so we do not spell them out.}  
For a rational number $q \in \rat$, we use $\size(q)$ 
to denote its bit encoding size.
Similarly, for a rational vector ${\tt v} \in \rat^m$,
we use $\size({\tt v}) := \sum^m_{i=1} \size({\tt v}_i)$ to denote its encoding size.

For a game $\calG$ with tree $T=(V,E)$, 
let $\height^\calG := \max \{ |u| \mid v \in V \}$ denote the {\em height} of
$T$.
For $u \in V$,  we 
define the {\em subtree} rooted at $u$,   $T_u = (V_u,E_u,u)$,
by:  $V_u = \{ w \in V \mid  u \sqsubseteq w \}$,  and 
$E_u =  \{ (u,w) \in E \mid  u,w \in V_u \}$.
We let $\height^\calG_u := \max \{ |w| - |u| \mid  w \in V_u \}$ 
denote the {\em height} of $T_u$.  (Note that
$\height^\calG = \height^\calG_\epsilon$.)
Consider an EFG, $\mathcal{G} = (N, \Sigma, T, P, I, p, r)$. 
For a node $u$ of the game tree $T$,  if the subtree $T_u$
satisfies the property that for every node $w \in V_u$,
the information set $I_{\Pl(w),\Info(w)}$ is a subset of $V_u$,
then the subtree $T_u$ naturally defines a 
{\em subgame},  $\mathcal{G}_u = (N',\Sigma,T',P',I',p',r')$,  which 
is rooted at the node $u$ instead of at $\epsilon$,
and where the player partition, information set partition,
payoff functions, and probability function for
chance nodes, are all inherited directly from $\mathcal{G}$ by 
restricting them to the subtree $T_u$ in the obvious way.

Note that a node $u \in V$ is a string in $\Sigma^*$ which also encodes 
the unique {\em history}
of actions, starting at the root, which lead to that node in $T$.
For any node $u \in V$, with $|u| = k$,  $u = a_{1} a_{2} \ldots a_{k}$, and 
for any $m \in \{0,1,\ldots,k\}$, let 
$u[m] = a_{1} \ldots a_{m}$ denote the length $m$ prefix of $u$.  
For a node $u$, with $|u| = k$, we define the 
{\em information-action history at $u$}, denoted $Y(u)$, to be the 
following sequence of $k$ triples:
$$Y(u) =  \langle \  ( \Pl(u[m]) , \Info(u[m]) , a_{m+1} ) \  \mid \  
m = 0,\ldots k-1   \ \rangle $$ 

\noindent For each player $i \in [n]$, we 
define the {\em visible history for player $i$ at $u$},
denoted $Y_i(u)$, to be the subsequence of $Y(u)$
obtained by retaining only those triples $(i',j',a')$ in the sequence 
$Y(u)$ for which $i' = i$, and deleting all other triples.
In other words, $Y_i(u)$ records the sequence of information
sets belonging to player $i$ encountered 
along the path from the root $\epsilon$ to $u$ (not including $u$), and 
the actions player $i$ chose at each of those 
information sets, prior to reaching $u$.

An EFG, $\mathcal{G}$, is said to have
{\bf\em perfect recall} if the following condition holds:
for any two nodes $u, v \in V$,  if $\Pl(u) = \Pl(v) = i \in [n]$
and $\Info(u) = \Info(v)$,  then $Y_i(u) = Y_i(v)$.
In other words, during play, players remember their own prior sequence of actions
as well as the information sets they were in when they took those
prior actions.
So, it can not be the case that two nodes $u$ and $v$ are in
the same information set for some player $i$, and yet 
the visible history for player $i$ at $u$ is different from
the visible history for player $i$ at $v$. 
Note that perfect recall implies
there do not exist nodes $u \neq v$ belonging to the same information
set such that $u$ is an ancestor of $v$.  Otherwise, since $Y_i(u)$ is a strict
prefix of $Y_i(v)$, we would have $Y_i(u) \neq Y_i(v)$, 
violating perfect recall.
For a game $\mathcal{G}$ of perfect recall, let us define the visible history
associated with an information set $I_{i,j}$ as follow:  
Let $Y_{i,j} := Y_i(u)$, where $u \in I_{i,j}$.
Note that by perfect recall $Y_{i,j}$ is well-defined.

\vspace*{0.05in}

\noindent {\bf\em Assumption:} {\em Throughout this paper, extensive form games are assumed to have perfect recall.}

\vspace*{0.05in}

As mentioned,
this assumption is standard practice in much of the literature on extensive
form games.  As mentioned, we use EFGPR to refer to an EFG with perfect recall.

\vspace*{0.05in}

\noindent {\bf Strategies.}
For an extensive form game, $\mathcal{G}$, where the
information sets for player $i$ are indexed by the set $[d_i] = \{1,\ldots,d_i\}$,
a {\em pure strategy}, $s_i$, for player $i \in [n]$, 
is a function 
$s_i: [d_i] \rightarrow \Sigma$ that assigns an available action to each
information set belonging to player $i$, so for all $j \in [d_i]$,
$s_i(j) \in {\mathcal A}_{i,j}$.
In other words, when using pure strategy $s_i$,
player $i$ chooses the available action $s_i(j)$ 
at every node in the information set
$I_{i,j}$.
Let $S_i$ denote the set of pure strategies for player $i$.
Let $S = S_1 \times S_2 \times \ldots \times S_n$ denote the
set of {\em profiles} of pure strategies.

A {\em mixed strategy} for player $i$, $\sigma_i \in \Delta(S_i)$, 
is a probability distribution on pure strategies $S_i$ (note: for a finite game
$\mathcal{G}$, $S_i$ is a finite set). 
For a pure strategy $c \in S_i$, we shall use $\pi^c_{i}$ to denote 
this pure strategy  as
an element of $\Delta(S_i)$; so
$\pi^c_{i}(c) =1$, and $\pi^c_{i}$ assigns probability $0$ to other
pure strategies.
We let $M_i = \Delta(S_i)$ denote the set of mixed strategies for player $i$.
Let $M = M_1 \times M_2 \times \ldots \times M_n$ denote the set of {\em profiles} of mixed strategies.
Let $M^{> 0}$ denote the set of {\em fully mixed} profiles of mixed strategies, that is,
$M^{> 0} := \{ \sigma = (\sigma_1,\ldots, \sigma_n) \in M \mid   \sigma_i(c) > 0, 
\ \text{for all}  \ i \in [n]  \text{and} \  c \in S_i \}$.

A {\em behavior strategy}, $b_i$, for player $i$,  is 
a $d_i$-tuple $b_i = (b_{i,1}, b_{i,2}, \ldots, b_{i,d_i})$ 
of probability distributions, such that
for each $j \in [d_i]$,
$b_{i,j} \in \Delta({\mathcal A}_{i,j})$ is a probability distribution 
on the set of actions ${\mathcal A}_{i,j}$ available in information set 
$I_{i,j}$.  
In other words, for all $a \in {\mathcal A}_{i,j}$,  
$0 \leq b_{i,j}(a) \leq 1$,
and $(\sum_{a \in {\mathcal A}_{i,j}} b_{i,j}(a)) = 1$.
We shall find it convenient to sometimes write $b_{i,j,a}$ instead
of $b_{i,j}(a)$, and to view $b_{i,j}$ as a vector of probabilities,
$b_{i,j} =
(b_{i,j,a})_{a \in \calA_{i,j}}$.
Let $B_{i,j} := \Delta({\mathcal A}_{i,j})$.
We call $b_{i,j} \in B_{i,j}$  a {\em local strategy}
at information set $I_{i,j}$. 
For an action $a \in {\mathcal A}_{i,j}$, we shall use $\pi^a_{i,j}$ to denote
the {\em pure} local strategy in $B_{i,j}$, that assigns probability 
$1$ to the action $a$.
Let $B_i = B_{i,1} \times \ldots \times B_{i,d_i}$ 
denote the set of behavior strategies for player $i$.
Let $B = B_1 \times B_2 \times \ldots \times B_n$ denote
the set of profiles of behavior strategies.
Let $B^{> 0}$ denote the set of {\em fully mixed} behavior profiles, that is
$B^{> 0} := \{ b = (b_1, \ldots, b_n) \in B \mid   b_{i,j}(a) > 0 , 
\ \text{for all}  \ i \in [n] ,  \  j \in [d_i], \ \text{and}
\ a \in \calA_{i,j} \}$.

For a behavior strategy $b_i = (b_{i,1}, \ldots, b_{i,d_i}) \in  B_i$, 
for  $j \in [d_i]$ and a local strategy $b'_{i,j} \in B_{i,j}$,
we use $(b_{i} \mid b'_{i,j})$ to denote the
revised behavior strategy $(b_{i,1}, \ldots,b_{i,j-1},b'_{i,j}, b_{i,j+1}, \ldots,b_{i,j})$.
In other words, $(b_{i} \mid b'_{i,j}) \in B_i$ consists
of the same local strategies as $b_i$, except at information set 
$I_{i,j}$
the local strategy is switched from $b_{i,j}$ to $b'_{i,j}$.
Likewise,  for a behavior profile $b \in B$,   and a behavior strategy
$b'_{i} \in B_{i}$,  we let $(b \mid b'_i) = 
(b_{1}, \ldots,b_{i-1},b'_{i},b_{i+1}, \ldots,b_{n})$.
In other words, $(b \mid b'_i) \in B$ consists of the same behavior
strategies as $b$, except for player $i$ the behavior strategy
is switched form $b_i$ to $b'_i$.
Lastly,  for a behavior profile $b = (b_1, \ldots, b_n) \in B$  and a 
local strategy $b'_{i,j} \in B_{i,j}$,  we define the shorthand notation
$(b \mid b'_{i,j}) :=  (b \mid (b_i \mid b'_{i,j}))$.

We  also define a more
general set of strategies, generalizing both $B_i$ and $M_i$, called
{\em mixed-behavior strategies}, $MB_i$. 
A mixed-behavior strategy $\sigma_i \in MB_i$ 
is a probability distribution over a finite subset
of behavior 
strategies in $B_i$.
Clearly, $S_i \subseteq B_i \subseteq MB_i$ and 
$S_i \subseteq M_i \subseteq MB_i$.
We let $MB = MB_1 \times \ldots \times MB_n$ denote the set of profiles of 
mixed-behavior strategies.

Once we fix a strategy
profile, $\sigma = (\sigma_1, \ldots, \sigma_n) \in MB$
for the players,  this determines a {\em realization probability} function,
$\Prob_\sigma(u)$, that assigns to every node $u \in V$ 
the probability
of reaching $u$ starting from the root, when players 
use their respective strategies in the profile $\sigma$.
Then the {\em expected payoff}, $U_i(\sigma)$, to player $i$ under
the strategy profile $\sigma$ is:

\begin{equation}
\label{eq:exp-payoff-formula}
U_i(\sigma) =  \sum_{z \in \Leaves} \Prob_\sigma(z) \cdot r_i(z)
\end{equation}

For any profile $\sigma$, and a strategy $\sigma'_i$ for player $i$,
we use $(\sigma \mid \sigma'_i)$ to denote the revised profile
$(\sigma_1, \ldots,\sigma_{i-1},\sigma'_{i}, \sigma_{i+1}, \ldots, \sigma_n)$,
where everyone's strategy remains the same, except player $i$'s 
strategy switches
to $\sigma'_i$.
We call two strategies $\sigma'_i$ and $\sigma''_i$ for player $i$ 
{\em realization equivalent},  denoted by $\sigma'_i \approx \sigma''_i$,
 if for all $u \in V$  and for all strategy profiles $\sigma \in MB$,
$\Prob_{(\sigma \mid \sigma'_i)}(u) = \Prob_{(\sigma \mid \sigma''_i)}(u)$.
Note that if $\sigma'_i \approx \sigma''_i$, then 
$U_i(\sigma \mid \sigma'_{i}) =  U_i(\sigma \mid \sigma''_i)$ for all 
$\sigma \in MB$.
For games
of perfect recall, we have:

\begin{proposition}[\cite{Kuhn53},
 \cite{Selten75}]
\label{prop:kuhn-mix-behave-lemma}
For every EFGPR, $\mathcal{G}$,
every mixed-behavior
strategy $\sigma_i \in MB_i$ is realization equivalent to a
behavior strategy $b_i \in B_i$, i.e.,  such that 
$\sigma_i \approx b_i$.
\end{proposition}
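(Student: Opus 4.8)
The plan is to prove this by the classical ``realization-weight'' argument that underlies Kuhn's theorem. The key first step I would carry out is a \emph{factorization lemma}: for every node $u \in V$, every strategy $\tau_i \in MB_i$ of player $i$, and every profile $\sigma \in MB$, the realization probability splits as $\Prob_{(\sigma\mid\tau_i)}(u) = w_{\tau_i}(Y_i(u)) \cdot C_\sigma(u)$, where $C_\sigma(u)$ depends only on the moves of chance and of the players other than $i$ (hence is independent of $\tau_i$), and $w_{\tau_i}(\eta)$ is the \emph{realization weight} assigned by $\tau_i$ to a player-$i$ information-action sequence $\eta = \langle (i,j_1,a_1),\dots,(i,j_\ell,a_\ell)\rangle$, defined for a behavior strategy $b_i \in B_i$ by $w_{b_i}(\eta) = \prod_{t=1}^{\ell} b_{i,j_t}(a_t)$ and extended, writing $\tau_i$ as a distribution $\sum_k \lambda_k b^{(k)}_i$ over behavior strategies, by $w_{\tau_i}(\eta) = \sum_k \lambda_k \prod_{t=1}^{\ell} b^{(k)}_{i,j_t}(a_t)$. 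For a behavior profile the lemma is immediate: grouping the edge probabilities along the root-to-$u$ path into those chosen at nodes belonging to player $i$ (their product is $w_{b_i}(Y_i(u))$) and the rest (their product is $C_\sigma(u)$, not involving $b_i$). The extension to $\tau_i$ follows by linearity, since a mixed-behavior strategy is a distribution over behavior strategies and realization probabilities are the corresponding convex combination; the extension to an arbitrary $\sigma \in MB$ (other players using mixed-behavior strategies) follows by taking expectations over the other players' draws, which leaves $C_\sigma(u)$ still independent of $\tau_i$.

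Given the lemma, it suffices to construct a single $b_i \in B_i$ with $w_{b_i}(Y_i(u)) = w_{\tau_i}(Y_i(u))$ for every $u \in V$; then $\Prob_{(\sigma\mid b_i)}(u) = \Prob_{(\sigma\mid\tau_i)}(u)$ for all $u$ and all $\sigma \in MB$, i.e. $b_i \approx \tau_i$. This is where perfect recall is essential, as it guarantees that the visible history $Y_{i,j}$ of each information set $I_{i,j}$ is well-defined, so I can set, for each $j \in [d_i]$ and $a \in \mathcal{A}_{i,j}$,
\[
b_{i,j}(a) \;=\; \frac{w_{\tau_i}\big(Y_{i,j}\cdot(i,j,a)\big)}{w_{\tau_i}(Y_{i,j})}
\qquad\text{when } w_{\tau_i}(Y_{i,j}) > 0,
\]
and let $b_{i,j}$ be arbitrary (say uniform) otherwise. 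That $b_{i,j} \in \Delta(\mathcal{A}_{i,j})$ follows since $\sum_{a} w_{\tau_i}(Y_{i,j}\cdot(i,j,a)) = w_{\tau_i}(Y_{i,j})$, which is just $\sum_a b^{(k)}_{i,j}(a)=1$ applied term by term.

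To verify the matching of weights, fix $u$, write $\eta = Y_i(u) = \langle(i,j_1,a_1),\dots,(i,j_\ell,a_\ell)\rangle$, and let $\eta|_t$ be its length-$t$ prefix. Perfect recall yields the crucial identity $Y_{i,j_t} = \eta|_{t-1}$ (the visible history of the $t$-th player-$i$ information set on the path to $u$ is exactly the earlier portion of $\eta$), hence $Y_{i,j_t}\cdot(i,j_t,a_t) = \eta|_t$. If all prefix weights are positive, then $b_{i,j_t}(a_t) = w_{\tau_i}(\eta|_t)/w_{\tau_i}(\eta|_{t-1})$ and the product telescopes to $w_{\tau_i}(\eta|_\ell)/w_{\tau_i}(\eta|_0) = w_{\tau_i}(\eta)$, using $w_{\tau_i}(\langle\rangle) = \sum_k \lambda_k = 1$. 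Otherwise let $t_0$ be least with $w_{\tau_i}(\eta|_{t_0}) = 0$; since prefix weights are non-increasing, $w_{\tau_i}(\eta) = 0$, while on the other side the denominator at step $t_0$ equals $w_{\tau_i}(\eta|_{t_0-1}) > 0$, so the defining formula applies at $I_{i,j_{t_0}}$ and forces $b_{i,j_{t_0}}(a_{t_0}) = 0$, giving $w_{b_i}(\eta) = 0$ as well. In all cases the two weights agree, and we are done.

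The main obstacle I expect is not any single step but the bookkeeping around information sets reached with probability zero: both the factorization lemma and the telescoping identity must be stated so as to remain valid when some prefix weight $w_{\tau_i}(\eta|_t)$ vanishes, so that $b_{i,j}$ is genuinely permitted to be arbitrary there without breaking realization equivalence, and one must check that these arbitrary choices are never actually exercised on a prefix contributing positively. The one genuinely conceptual point --- and the content of perfect recall --- is the identity $Y_{i,j_t} = \eta|_{t-1}$, which is precisely what allows a single behavior strategy to reproduce, simultaneously for all nodes in $I_{i,j}$, the move distribution that $\tau_i$ induces there.
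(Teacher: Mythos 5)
Your argument is correct, and it is essentially the classical realization-weight proof of Kuhn's theorem: the paper gives no proof of Proposition \ref{prop:kuhn-mix-behave-lemma}, citing \cite{Kuhn53} and \cite{Selten75}, and your factorization of $\Prob_{(\sigma\mid\tau_i)}(u)$ into $w_{\tau_i}(Y_i(u))\cdot C_\sigma(u)$, the conditional definition of $b_{i,j}$, and the telescoping via the perfect-recall identity $Y_{i,j_t}=\eta|_{t-1}$ is exactly the standard argument in those sources. Your treatment of the zero-weight prefixes (the first vanishing prefix forces $b_{i,j_{t_0}}(a_{t_0})=0$, so the arbitrary choices at unreached sets are never exercised on a positive-weight history) correctly closes the only delicate point.
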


Thus, w.l.o.g., we can confine our attention to behavior strategies in $B_i$
for all EFGPRs.

Note that also for every behavior strategy $b_i \in B_i$ there exists a 
realization equivalent mixed strategy, $\sigma^{b_i}_i \in M_i$.
Here's how.
Define $\chi(x,y)$ by:
$\chi(x,y) := 1$ if $x=y$, and otherwise $\chi(x,y) := 0$.
We define the mixed strategy $\sigma^{b_i}_i$ as follows. For every $c \in S_i$:
$$\sigma^{b_i}_i(c) :=  \prod_{\{ \; (j , a) \; \mid \; j \in [d_i] \; \& \;  
a  \in \calA_{i,j} \; \} }  \chi(c(j),a) \cdot  b_{i,j}(a).$$
The mixed strategy $\sigma^{b_i}_i$ is realization 
equivalent to behavior strategy $b_i$.\footnote{Of course,  in general, the support size of $\sigma^{b_i}_i$
can be exponential in the dimension of the vector $b_i$, so it
is not in general efficient to work explicitly with $\sigma^{b_i}_i$
instead of $b_i$.}
For a behavior profile $b \in B$,  we will use the notation
$\sigma[b] := (\sigma^{b_1}_1, \ldots, \sigma^{b_n}_n) \in M$
to denote the (realization equivalent) mixed profile {\em induced} by $b$.

For a EFGPR, $\mathcal{G}$,
for any node $u \in V$,  and 
any behavior profile $b \in B$,
we can define the realization probability 
$\Prob_b(u)$ 
as a multi-variate polynomial $F_u(x)$
(in fact, a multilinear monomial) whose ``variables'' $x$ 
correspond to the coordinates of 
a behavior strategy profile in $B$, and such that for all $b \in B$,
$F_u(b) = \Prob_b(u)$.  
Specifically, for all $u \in V$, where $|u| = k$ and 
$u = a_1 a_2 \ldots a_k$,  we associate the variable
$x_{i,j,a}$ with the probability $b_{i,j,a} = b_{i,j}(a)$ in a behavior profile $b$,
and $F_u(x)$ is given by: 
\begin{equation*}
F_u(x) \equiv
\left( \prod_{\{ m \in \{0,\ldots,k-1\} \; \mid \; u[m] \in P_0 \}} p_{u[m]}(a_{m+1}) \right)  \ 
\cdot \ 
\prod_{ \{ m \in \{0,\ldots,k-1\}  \; \mid \; u[m] \in \Internal \setminus P_0\} } 
x_{{\small \Pl(u[m])} , {\small \Info(u[m])} , a_{m+1}}
\end{equation*}

\noindent Note that, for any $u \in V$, 
the total degree of $F_u(x)$
is at most $\height^\calG$.
More generally, for a subset $V' \subseteq V$ of nodes, 
let $\Top(V') := \{ u \in V' \mid \neg \exists v \in V': \: v \sqsubset u \}$.
(Note: for any information set $I_{i,j}$, $\Top(I_{i,j}) = I_{i,j}$.)
We define the 
{\em realization probability}, $\Prob_b(V')$, of ({\em some} node in) $V' \subseteq V$,  
under (behavior) profile $b$, as follows:
$\Prob_b(V') \doteq \sum_{u \in \Top(V')} \Prob_b(u)$.  
Thus we can also define the multilinear polynomial: 
$F_{V'}(x) \equiv \sum_{u \in {\Top(V')}} F_u(x)$, such that for all $b \in B$,
$F_{V'}(b) =  \Prob_b(V')$.

Also, 
using equation (\ref{eq:exp-payoff-formula}), we have 
that the expected payoff function is given by the polynomial:
\begin{equation} 
\label{eq:poly-for-expectation}
U_i(x) \equiv \sum_{z \in \Leaves} F_z(x) \cdot r_i(z)
\end{equation} 
Thus, restating all this, we have:
\begin{proposition}
\label{prop:expected-poly}
Given a EFGPR, $\calG$,  and given any subset $V' \subseteq V$
of nodes of the game tree,
there is a multi-variate multilinear polynomial
$F_{V'}(x)$ in the vector of 
variables $x$, with total degree bounded by $\height^\calG$,
 such that 
for all $b \in B$,  $F_{V'}(b) = \Prob_b(V')$
defines
the realization probability of $V'$  
under behavior profile $b$ in $\calG$.
Moreover, 
there is a multilinear polynomial $U_i(x)$, with 
total degree bounded by $\height^\calG$,
such that
for all $b \in B$,
 $U_i(b)$
is the expected payoff of player $i$ under behavior profile $b$ in $\calG$,
and moreover, the polynomials  $F_{V'}(x)$ 
and $U_i(x)$ can be expressed
(as a weighted sum of multilinear monomials) with 
an encoding size that is polynomial in $|\calG|$.
\end{proposition}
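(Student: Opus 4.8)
The plan is to assemble the explicit polynomials already exhibited just before the statement, and then to verify, one at a time, correctness, the degree bound, and the polynomial encoding-size bound; essentially nothing new is needed beyond organizing the formulas in hand. First I would fix a single node $u \in V$ with $|u| = k$ and $u = a_1 a_2 \cdots a_k$, and take $F_u(x)$ to be the monomial displayed above. By a straightforward induction on $k$---equivalently, directly from the definition of $\Prob_b(u)$ as the product of the move-probabilities along the unique root-to-$u$ path---one obtains $F_u(b) = \Prob_b(u)$ for every $b \in B$: a prefix $u[m] \in P_0$ contributes the rational constant $p_{u[m]}(a_{m+1})$, while a prefix $u[m] \in \Internal \setminus P_0$ contributes the variable $x_{\Pl(u[m]),\Info(u[m]),a_{m+1}}$, which evaluates to the coordinate $b_{\Pl(u[m]),\Info(u[m])}(a_{m+1})$ of $b$. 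The leading constant of $F_u(x)$ is a product of at most $\height^\calG$ of the rationals $p_v(a)$, hence a rational of bit size at most $\height^\calG \cdot \max_{v \in P_0, a} \size(p_v(a))$, which is polynomial in $|\calG|$; and the total degree of $F_u(x)$ is the number of non-chance internal prefixes of $u$, which is at most $k \le \height^\calG$.

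Next I would check that $F_u(x)$ is \emph{multilinear}, which is the one place the perfect-recall hypothesis is genuinely used: as noted before the statement, perfect recall forbids two distinct nodes of a common information set from being related as ancestor and descendant, so among the prefixes $u[0],\dots,u[k-1]$---which form a chain under $\sqsubset$---at most one lies in any fixed information set $I_{i,j}$. Hence for each pair $(i,j)$ at most one of the variables $x_{i,j,a}$ appears in the product defining $F_u(x)$, and it appears to the first power, so $F_u(x)$ has degree at most $1$ in every variable. Summing over $\Top(V')$, which has at most $|V|$ elements and thus polynomially many in $|\calG|$, gives $F_{V'}(x) \equiv \sum_{u \in \Top(V')} F_u(x)$; a sum of multilinear monomials is again multilinear (collecting like terms never raises a monomial's degree), its total degree is still at most $\height^\calG$, and it is written as a weighted sum of polynomially many multilinear monomials each of polynomial bit size, so $F_{V'}$ has encoding size polynomial in $|\calG|$. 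For correctness, $F_{V'}(b) = \sum_{u \in \Top(V')} F_u(b) = \sum_{u \in \Top(V')} \Prob_b(u) = \Prob_b(V')$, the last step being exactly the definition of $\Prob_b(V')$; note that the elements of $\Top(V')$ are pairwise $\sqsubseteq$-incomparable, so these reaching events are in fact disjoint.

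Finally, for $U_i(x)$ I would invoke equation (\ref{eq:poly-for-expectation}), $U_i(x) \equiv \sum_{z \in \Leaves} F_z(x)\, r_i(z)$: this is a weighted sum of at most $|\Leaves|$ multilinear monomials, each coefficient being a product of a rational $p$-constant with the positive integer $r_i(z) \le M_{\calG}$, all of polynomial bit size, and with total degree at most $\height^\calG$; correctness, $U_i(b) = \sum_{z} \Prob_b(z)\, r_i(z)$, is then immediate from equation (\ref{eq:exp-payoff-formula}) since $b \in B \subseteq MB$. I expect no real obstacle here, as the content of the proposition is precisely the explicit formulas already derived; the only subtlety worth flagging is that \emph{multilinearity}, as opposed to merely bounded total degree, truly depends on the perfect-recall assumption.
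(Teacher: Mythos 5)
Your proposal is correct and follows essentially the same route as the paper, which states Proposition \ref{prop:expected-poly} as an immediate restatement of the explicitly displayed monomials $F_u(x)$, the sum $F_{V'}(x) \equiv \sum_{u \in \Top(V')} F_u(x)$, and equation (\ref{eq:poly-for-expectation}); your write-up simply makes the routine verifications (correctness, degree, encoding size) explicit. Your remark that multilinearity rests on perfect recall (no two nodes of one information set lie on a common root-to-leaf path) is exactly the observation the paper records earlier in Section \ref{sec:background} when it calls $F_u(x)$ a multilinear monomial.
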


For a fixed $b_i \in B_i$, we shall use the notation  $U_k(x \mid b_i)$
to denote the polynomial 
obtained by fixing the values of the variables $x_i$,
by assigning to them their corresponding values in $b_i$,  in the polynomial $U_k(x)$.
Likewise,  for a fixed local strategy $b_{i,j} \in B_{i,j}$, we shall use
$U_k(x \mid b_{i,j})$ to denote the polynomial obtained
by fixing the variables $x_{i,j}$ by assigning to them their corresponding values in 
$b_{i,j}$ in the polynomial $U_k(x)$.

\noindent {\bf Information Set Forest.}
We shall need the concept of the {\em information set forest}
associated with each player in a EFGPR.
Specifically, for a EFGPR, $\calG$,
for each player $i \in [n]$, we define a directed, edge-labeled, graph, $\mathcal{F}_i = (V^{\calF_i},E^{\calF_i})$,
whose nodes are $V^{\calF_i} = [d_i]$, i.e., the (indices of) information sets belonging to player $i$,
and whose $\Sigma$-labeled directed edges, 
$E^{\calF_i} \subseteq V^{\calF_i} \times \Sigma \times V^{\calF_i}$,
are defined as follows:  $(j,a,j') \in E^{\calF_i}$ if and only if
the last triple in 
the (non-empty) sequence $Y_{i,j'}$ is $(i,j,a)$.
It follows immediately from this definition that $\calF_i$ 
is a  directed (edge-labeled) forest, for all $i$.
The source nodes (roots) of the forest $\calF_i$ are those information sets 
which are the first belonging to player $i$ 
to be encountered along some complete play of the game $\calG$.
The sink nodes (leaves) of this forest are the last information set for
player $i$ encountered along some complete play.
The action $a$ labeling the edge $(j,a,j') \in E^{\calF_i}$ is the
action that player $i$ must take at information set $I_{i,j}$ in
order to enable the possibility of reaching information set $I_{i,j'}$
(but whether or not this happens with positive probability can depend on the strategies of
other players).
We henceforth refer to $\calF_i$ as the {\em information set forest} associated 
with player $i$.
We shall say that a node $j' \in V^{\calF_i}$ is a 
{\em descendant} of a node $j$ in $\calF_i$ if there is a path in $\calF_i$ from
$j$ to $j'$ (in other words, if $j'$ is in the subtree rooted at $j$).

We let $\height^{\calF_i}$ denote the {\em height} of the forest $\calF_i$, 
i.e., the length of
the longest path in $\calF_i$.   For $j \in [d_i]$, 
we let $\height^{\calF_i}_j$ denote the height of information set $j$ in the
forest $\calF_i$, i.e., the length of the longest path from vertex $j$ to a leaf of 
the forest $\calF_i$. 
For a node $u \in P_i$ of the game tree $T$, we will sometimes abuse
notation and use $\height^{\calF_i}_u$ instead of $\height^{\calF_i}_{\Info(u)}$.
Note that $\height^{\calF_i} \leq \height^\calG$, for all $i \in [n]$.

For a behavior strategy $b_i \in B_i$ for player $i$, 
for any information set $j \in [d_i]$, and for any
(other) profile $b'_i \in B_i$, 
we use the notation $(b_i \mid_{(i,j)} b'_i)$
to denote a new behavior strategy $b''_i := (b_i \mid_{(i,j)} b'_i) \in B_i$
which is defined as follows. For every
information set $j' \in [d_i]$,
the local strategy $b''_{i,j'}$ is defined as follows:  
if 
$j'$ is a descendant of $j$
in the information forest $\calF_i$, or if $j'$ is equal to $j$,  then $b''_{i,j'} :=  b'_{i,j'}$.
Otherwise,  $b''_{i,j'} := b_{i,j'}$.
We also use the notation $(b \mid_{(i,j)} b'_i)  :=   (b \mid (b_i \mid_{(i,j)} b'_i))$
to denote a behavior profile which is identical to $b$ except that
player $i$'s  behavior strategy
$b_i$ is replaced by $(b_i \mid_{(i,j)} b'_i)$. 
In other words, $(b \mid_{(i,j)} b'_i)$ is the profile which is identical
to $b$ for all players other than player $i$,  and where for player $i$,
the local strategy at information set $j'$ agrees with $b'_i$ if the
information set $I_{i,j'}$ is reachable from $I_{i,j}$,  and otherwise
it agrees with $b_i$.

We shall also use $\calF_i$ in another way to alter behavior strategies of player $i$.
For the information set forest $\calF_i$ of player $i$, and for 
integer $m$ such that $0 \leq m \leq \height^{\calF_i}$,
let $\calF^m_i$ denote the sub-forest  of $\calF_i$
induced by all vertices $j$ in $\calF_i$ that have height $\height^{\calF_i}_j  \leq m$.
Let   $\calV^m_i$ denote the vertices of $\calF^m_i$.  

For a behavior strategy $b_i \in B_i$ for player $i$, 
for $0 \leq m \leq h^{\calF_i}$,
and for any other behavior strategy, $b'_i \in B_i$,
we use $(b_i \mid_{m}  b'_i)$ to denote the behavior strategy that is 
given by local strategy $b'_{i,j}$
for every $j \in \calV^m_i$, and by the original local strategy $b_{i,j}$,
for all other $j \in [d_i] \setminus \calV^m_i$.
We also use the notation $(b \mid_{m} b'_i)  :=   (b \mid (b_i \mid_m b'_i))$
to describe a profile that is identical to $b$, except that 
behavior strategy $b_i$ for player $i$ is replaced by $(b_i \mid_m b'_i)$.

Recall $U_k(x)$ is the polynomial representing
the expected payoff function
to player $k$ under a behavior profile $x$.
For fixed $b_i \in B_i$, we will use the notation $U_k( x \mid_{(i,j)} b_i)$
to denote the polynomial obtained from $U_k(x)$ as follows: 
for any $j' \in [d_i]$, if information set 
$I_{i,j'}$ is reachable from information set $I_{i,j}$, then
the associated variables $x_{i,j'}$ are fixed to their values in the local strategy $b_{i,j'}$.
Likewise,  for $0 \leq m \leq h^{\calF_i}$,
$U_k(x \mid_m b_i)$ denotes the polynomial
obtained from $U_k(x)$ as follows: for every $j' \in \calV^m_i$,
the variables $x_{i,j'}$ are fixed to their values in $b_{i,j'}$.

\noindent {\bf Normal Form.}
A finite {\em normal form game} (NFG), $\Gamma = (N,(S_i)^n_{i=1},(u_i)^n_{i=1})$,
consists of a finite set $N = \{1,\ldots, n\}$ of players, a finite set 
$S_i$ of 
pure strategies
for each player $i$, and a payoff function 
$u_i: S \rightarrow \nat_+$ for each player\footnote{Again, we restrict
w.l.o.g. to positive integer payoffs, for computational purposes.} 
$i$,
where $S = S_1 \times \ldots \times S_n$.
For every finite $n$-player EFG(PR), $\mathcal{G}$,  
there is an associated {\em standard normal form} game, 
$\Normal(\calG) = (N,(S_i)^n_{i=1},(u_i)^n_{i=1})$, where the set of pure
strategies $S_i$ for player $i$ in $\Normal(\calG)$ is the 
set of pure strategies for
player $i$ in $\mathcal{G}$,
and where the payoff function, $u_i(\cdot)$, for each player $i$ 
is defined by $u_i(s) := U_i(s)$ for all $s \in S$,
where $U_i(s)$ is the expected payoff in $\mathcal{G}$ to player $i$ under pure profile $s$.
For NFGs we use the same notations ($\sigma_i$, $\sigma$, $U_i(\sigma)$, etc.)
for mixed strategies, mixed profiles, and their 
expected payoffs, etc.,  as we do for EFGPRs.
Note that the encoding size $|\Normal(\calG)|$ of the 
NFG $\Normal(\calG)$ is in 
general exponential in $|\calG|$, because already when there
are  two actions available at each information set, the number of strategies $|S_i|$ 
of player $i$ is $2^{d_i}$, where $d_i$ is the number of information
sets belonging to player $i$.

In the other direction, we can easily convert any NFG 
$\Gamma = (N,(S_i)^n_{i=1},(u_i)^n_{i=1})$
to an ``equivalent'' EFGPR, $\calE(\Gamma)$, which is
not much bigger in terms of encoding size than $\Gamma$.   
Specifically, let the action alphabet $\Sigma$ of $\calE(\Gamma)$ be 
the disjoint union of pure strategies of $\Gamma$,
$\Sigma = \dot\bigcup^n_{i=1} S_i$, and let the nodes $V$
of the game tree of $\calE(\Gamma)$ be
$V :=  \{ s_1 s_2 \ldots s_k \mid k \leq n \ \text{and, for all } j \in [k]: \  
s_j \in S_j \}$.   The player partition is given as follows: 
$P_0 = \emptyset$
and for all $i \in [n]$:  $P_i := \{ u \in V \mid  |u| = i-1 \}$.
There is only one information set for each player $i \in [n]$:
namely $I_{i,1} := P_i$.    Finally, the leaves are the nodes 
$\Leaves := \{ u \in V \mid |u| = n\}$, and the payoff
functions $r_i$ are defined as follows, for all $i \in [n]$:  for any leaf 
$s_1 s_2 \ldots s_n \in \Leaves$,   
$r_i(s_1 s_2 \ldots s_n) :=  u_i(s_1, s_2, \ldots, s_n)$.
Note that $\calE(\Gamma)$ clearly has perfect recall since
``there is nothing to remember'':
for any player $i \in [n]$ and any nodes $u , v \in P_i$, 
the visible histories  $Y_i(u)$ and $Y_i(v)$ are both the 
empty sequences, and thus equal, because there is no
ancestor of $u$ or $v$ belonging to $P_i$.
The encoding size of $\calE(\Gamma)$ is certainly polynomial
 in the encoding size of $\Gamma$  (and with judicious
encoding of the various parts of $\calE(\Gamma)$ it could be made essentially linear).  
It is not hard to see that the games $\Gamma$ and $\calE(\Gamma)$
are essentially ``equivalent'' in every respect that matters 
to us (including for computational purposes).  
Note,
in particular, that
there is a one-to-one correspondence,  which respects payoffs,
between the mixed strategies of $\Gamma$ and the behavior strategies
of $\calE(\Gamma)$.

\noindent {\bf Equilibrium.}
For a NFG, $\Gamma = (N,(S_i)^n_{i=1},(u_i)^n_{i=1})$, a mixed strategy $\sigma'_i$ for
player $i$ is called a {\em best response} to a mixed
profile $\sigma = (\sigma_1, \ldots, \sigma_n)$ 
if $U_i(\sigma \mid \sigma'_i) \geq U_i(\sigma \mid \sigma''_i)$ for all
mixed strategies $\sigma''_i$.
Note that $\sigma'_i$ is a best response to $\sigma$ if and only if,
for every pure strategy $c \in \support(\sigma'_i)$, and
for every strategy $c' \in S_i$,  $U_i(\sigma \mid \pi^c_i) \geq 
U_i(\sigma \mid \pi^{c'}_i)$.
A mixed profile $\sigma$ is called a {\em Nash equilibrium} (NE) for $\Gamma$
if $\sigma_i$ is a best response to $\sigma$ for all $i$.
Nash \cite{Nash51} showed every (finite) NFG has an 
NE.
It follows that the standard normal form game $\Normal(\calG)$ associated 
with an EFGPR, $\mathcal{G}$, has a
mixed NE, $\sigma^* \in M$, which by definition is also a 
mixed Nash equilibrium 
of $\mathcal{G}$.  
We can say more. 
In light of Proposition \ref{prop:kuhn-mix-behave-lemma},
a behavior strategy $b'_i \in B_i$ for player $i$  is called
a {\em best response} to 
a behavior profile $b \in B$
if for all $b''_i \in B_i$,  
$U_i(b \mid b'_i) \geq U_i(b  \mid b''_i)$.
A profile $b = (b_1, \ldots, b_n) \in B$ 
is call a {\em Nash equilibrium (NE)} in behavior strategies
if  for all players $i$, $b_i$ is a best response to $b$. 
Combining Proposition \ref{prop:kuhn-mix-behave-lemma} and
Nash's theorem applied
to the standard normal form $\Normal(\calG)$, it follows 
that a NE 
in behavior strategies
exists for any EFGPR, $\mathcal{G}$.

A profile $b \in B$ is called a {\em subgame-perfect equilibrium} (SGPE) 
if $b$ induces a Nash equilibrium on every subgame $\mathcal{G}_u$ of
$\mathcal{G}$.  In other words, for every subgame $\mathcal{G}_u$,
if we confine the behavior profile $b$ to the subtree $T_u$ rooted at $u$,
it induces a Nash equilibrium $b^u$ for the subgame $\mathcal{G}_u$.
Again, a SGPE in behavior strategies exists
for any EFGPR \cite{Selten65},
and of course subgame-perfection is a {\em refinement} of NE:
the SGPEs form a subset of the NEs. 

We now discuss several notions of ``approximate'' and ``almost'' 
equilibrium for normal form and extensive form games.
The well known notion of a ``$\epsilon$-NE'' for
a NFG is a profile where, informally, no player can improve
its own payoff by more than $\epsilon$ by switching its strategy unilaterally.
This of course can be defined analogously for EFGs and EFGPRs.
However, to avoid confusion in terminology between this notion and the very 
different notion (introduced by Myerson \cite{Myerson78})
of $\epsilon$-perfect equilibrium ($\epsilon$-PE), which we define shortly, 
we will use the different terminology ``$\delta$-almost-NE''
to refer to what would usually be called a ``$\delta$-NE'' in the
literature.

Formally, for $\delta > 0$, we call a behavior strategy $b'_i \in B_i$
for player $i$ a {\em $\delta$-almost best response}
to a profile $b \in B$ 
if  for all $b''_i \in B_i$,  
$U_i(b \mid b'_i) \geq U_i(b  \mid b''_i) - \delta$.
We call a profile $b = (b_1,\ldots,b_n) \in B$
a {\em $\delta$-almost Nash equilibrium}
($\delta$-almost-NE),   
if for all players $i$, $b_i$ is a $\delta$-almost best response
to $b$.
For $\delta > 0$, we define a 
{\em $\delta$-almost subgame-perfect equilibrium} ($\delta$-almost-SGPE), to be a profile $b \in B$ which induces
a $\delta$-almost-NE, $b^u$, on every subgame 
$\mathcal{G}_u$ of $\mathcal{G}$.
Note that ``$\delta$-almost-SGPE'' is a refinement
of ``$\delta$-almost-NE''.

As mentioned, Selten \cite{Selten75} pointed out that SGPE
has inadequacies as a refinement
of NE.
For this reason, Selten defined a  
more refined notion of perfect equilibrium,
based on {\em ``trembling hand''} perfection.
Two distinct notions emerge from this:
{\em normal form perfect equilibrium} (NF-PE) 
and  {\em extensive form perfect equilibrium}  (PE). 
We shall find it very useful to provide
Myerson's \cite{Myerson78} alternative definitions for
these notions,
going via the notion
of ``$\epsilon$-perfect equilibrium''. 
Myerson originally defined $\epsilon$-PE for NFGs, but his
definition adapts readily to EFGPRs (see, e.g., \cite{vanDamme91,vanDamme84}).
Although Myerson's definition of PE via $\epsilon$-PEs (adapted to EFGPRs) differs from
the original definition of (extensive form) PE given by Selten \cite{Selten75},
it is equivalent; see, e.g. \cite{Myerson78,vanDamme91,vanDamme84}.
(The key reason
for the equivalence was already pointed out
by Selten himself in (\cite{Selten75}, Lemma 7 \& 8), as we shall highlight later.)

For an NFG\footnote{For example, but not necessarily,
for the standard normal form $\Normal(\calG)$ of 
an extensive form game $\mathcal{G}$.},
 $\Gamma = (N,(S_i)^n_{i=1},(u_i)^n_{i=1})$,
and for $\epsilon > 0$, 
a mixed profile $\sigma \in M$
is called a {\em  $\epsilon$-perfect equilibrium}
($\epsilon$-PE) of $\Gamma$ if it is both
(a):  {\em fully mixed} meaning $\sigma \in M^{> 0}$,
and (b):  for every player $i$ and pure strategy $c \in S_i$, if $\sigma_{i}(c) > \epsilon$,
then the pure strategy $\pi^c_{i}$ is a best response for player $i$ 
to $\sigma$,  in other words, $U_i(\sigma \mid \pi^c_{i}) \geq
U_i(\sigma \mid \pi^{c'}_i)$ for all $c' \in S_i$. 
Likewise, we call $\sigma$ 
a {\em $\delta$-almost $\epsilon$-perfect equilibrium}
($\delta$-almost-$\epsilon$-PE) of $\Gamma$ if $(a)$ holds and, instead of condition $(b)$, 
$\sigma$ satisfies the following condition
$(b')$: for every player $i$ and pure strategy $c \in S_i$, if $\sigma_{i}(c) > \epsilon$,
then the pure strategy $\pi^c_{i}$ is a $\delta$-almost best response for player $i$
to $\sigma$,  in other words, $U_i(\sigma \mid \pi^c_{i}) \geq
                                         U_i(\sigma \mid \pi^{c'}_i) - \delta$,
for all $c' \in S_i$.

We call a mixed profile $\sigma^*$,  a  (trembling hand) 
{\em perfect equilibrium} (PE) of $\Gamma$
if it is a {\em limit point} of a sequence of $\epsilon$-PEs of $\Gamma$  
(with $\epsilon \rightarrow 0$).
In other words,  
$\sigma^*$ is a PE iff there is a sequence $\epsilon_k > 0$, $k \in \nat$,
such that $\lim_{k \rightarrow \infty} \epsilon_k = 0$, 
and such that for all $k \in \nat$ there is an $\epsilon_k$-PE,
$\sigma^{\epsilon_k}$ of $\Gamma$, with
$\lim_{k \rightarrow \infty} \sigma^{\epsilon_k} = \sigma^*$.
Every NFG, $\Gamma$, has a PE, and every PE is both a NE and a SGPE (\cite{Selten75}).

For a EFGPR, $\mathcal{G}$,  
a local strategy $b'_{i,j} \in B_{i,j}$ 
is called a {\em local best response} to a profile $b \in B$  if
for all local strategies $b''_{i,j} \in B_{i,j}$, 
$U_i(b \mid  b'_{i,j}) \geq  
U_i(b \mid b''_{i,j})$.  
It is not hard to show that $b'_{i,j}$ is a local best response iff 
$U_i(b \mid b'_{i,j}) \geq U_i(b \mid \pi^a_{i,j})$ for all $a \in \calA_{i,j}$.
For $\delta > 0$, 
a local strategy $b'_{i,j} \in B_{i,j}$ is called a {\em $\delta$-almost local best response}
to a profile $b \in B$  if for all $b''_{i,j} \in B_{i,j}$,   
$U_i(b \mid  b'_{i,j}) \geq  
U_i(b \mid b''_{i,j}) - \delta$.
Again, $b'_{i,j}$ is a $\delta$-almost local best response to $b$ if and only if for all actions 
$a \in \calA_{i,j}$,  $U_i(b \mid b'_{i,j}) \geq U_i(b \mid \pi^a_{i,j}) - \delta$.

For an EFGPR, $\mathcal{G}$,  
and for $\epsilon > 0$,
a behavior profile $b \in B$ is called a {\em $\epsilon$-perfect equilibrium} 
($\epsilon$-PE), if it is (a): {\em fully mixed}, meaning $b \in B^{> 0}$,
and (b):  for all $i$, $j$, and all
$a \in {\mathcal A}_{i,j}$, 
if $b_{i,j}(a) > \epsilon$, then $\pi^a_{i,j}$ is
a local best response to $b$.
It other words, if a local strategy $b_{i,j}$ places probability
greater than $\epsilon$ on action $a$,
then unilaterally switching the local strategy $b_{i,j}$ to  
pure action $a$ is a local best response to $b$.

For $\delta > 0$, and $\epsilon > 0$, a behavior profile $b \in B$ is called a {\em $\delta$-almost $\epsilon$-perfect
equilibrium}  ($\delta$-almost-$\epsilon$-PE) of $\calG$,  if it is (a.):
fully mixed, $b \in B^{> 0}$, and (b.):  for all $i$, $j$, and all $a \in \calA_{i,j}$ 
if $b_{i,j}(a) > \epsilon$,   then 
$\pi^a_{i,j}$ is a $\delta$-almost local best response to $b$. 

We call a behavior profile $b^* \in B$ a 
{\em extensive form 
perfect equilibrium} (PE) of $\mathcal{G}$ if it is a limit point of $\epsilon$-PEs
of $\mathcal{G}$  (where $\epsilon \rightarrow 0$). 
Selten \cite{Selten75} showed that every EFGPR, $\calG$, has a PE, and 
that every PE is also a SGPE of $\mathcal{G}$  (so, PE refines both SGPE 
and NE).\footnote{Please note that we have overloaded the
``($\epsilon$-)PE'' terminology to apply to both ($\epsilon$-)PE for
NFGs and {\em extensive form} ($\epsilon$-)PE for EFGPRs.
The reason for this overloading will become clear when we discuss
{\em agent normal form}.\\
We remark that it is easier to see why (extensive form) PE 
refines SGPE via
Selten's original definition of PE (via {\em perturbed} games). 
But Myerson's definition, via $\epsilon$-PEs,   
has particular advantages for our purposes, as we'll see.}

A different refinement of equilibrium for a EFGPR, $\calG$, 
is a  {\em normal form perfect equilibrium}
(NF-PE).  This is, by definition, a behavior profile
$b \in B$  such that the (realization equivalent)
mixed profile $\sigma[b]$ induced by $b$ 
is a PE of the standard normal form game, $\Normal(G)$.
We note that even a pure PE of an EFGPR, $\calG$, is not
necessarily a NF-PE (i.e., does not necessarily induce 
a PE of $\Normal(G)$)), and nor is a 
pure NF-PE (i.e., a pure PE of $\Normal(G)$)
necessarily a PE of $\calG$ (see \cite{vanDamme91}, Chapter 6).
So, for EFGPRs, the two notions of PE and NF-PE are incompatible.
In fact,  a NF-PE of $\calG$ is not necessarily even a SGPE
(there are examples where it is not), 
and note that Selten's purpose for defining PE was to refine
subgame-perfect equilibrium.
So, it is not unreasonable to argue that PE is the more relevant
notion for EFGPRs.   Our results apply to approximating both 
a PE and a NF-PE for EFGPRs. 
(By contrast, the results of \cite{vonStengel-van-den-elzen-talman:2002} 
apply only to computing NF-PE for 2-player EFGPRs.)

We next define {\em quasi-perfect equilibrium} (QBE), and the associated notions: $\epsilon$-QPE.
For an EFGPR, $\mathcal{G}$,  
and for $\epsilon > 0$,
a behavior profile $b \in B$ is called a {\em $\epsilon$-quasi-perfect equilibrium} 
($\epsilon$-QPE), if it is (a.): {\em fully mixed}, $b \in B^{> 0}$, and (b.):  for all players $i$, 
all $j \in [d_i]$, and all
actions $a, a' \in {\mathcal A}_{i,j}$, 
if  $(\max_{b'_{i} \in B_{i}} U_i(b \mid_{(i,j)} (b'_{i} \mid \pi^{a}_{i,j}))) < 
(\max_{b'_{i} \in B_{i}} U_i(b \mid_{(i,j)} (b'_{i} \mid \pi^{a'}_{i,j})))$   then  $b_{i,j}(a) \leq \epsilon$.

(We shall delay the analogous definition of ``$\delta$-almost $\epsilon$-quasi-perfect
equilibrium'' until Section \ref{sec:delta-almost-epsilon-PE},
because it will require further definitions. )

We call a behavior profile $b^* \in B$ a 
{\em quasi-perfect equilibrium} (QPE) of $\mathcal{G}$ if it is a limit point of $\epsilon$-QPEs
of $\mathcal{G}$  (where $\epsilon \rightarrow 0$). 
It was shown by van Damme \cite{vanDamme84} that every EFGPR has at least one QPE.  Furthermore, as noted by van Damme in \cite{vanDamme84},
QPE  refines NF-PE.  (We will highlight this again
in Proposition \ref{prop:kreps-wilson} below.)

Finally, we define the notion of {\em sequential equilibrium} due to Kreps and Wilson  \cite{Kreps-Wilson:1982}.
We need the notion of a {\em system of beliefs}.   For a EFGPR, $\calG$, 
with game tree $T =(V,E)$,  a {\em system of beliefs} (or {\em belief system}) 
is a map $\mu: (\Internal \setminus P_0) \rightarrow [0,1]$ such that
that for all players $i \in [n]$ and all $j \in [d_i]$, 
we have $\sum_{u \in I_{i,j}} \mu(u) = 1$.
Let $\SysB$ denote the set of all belief systems 
(associated with the game $\calG$).
An {\em assessment} is a pair $(b,\mu) \in B \times \SysB$, 
where $b$ is a behavior strategy profile,
and $\mu$ is a belief system.   Intuitively, in assessment $(b, \mu)$, for a node $u \in I_{i,j}$,
the belief $\mu(u)$ 
represents the probability that player $i$ assigns to the play hitting node 
$u$ assuming profile $b$ is played, if player $i$ finds out that the 
play has hit information set $I_{i,j}$.  
For any node $u \in I_{i,j}$, let $\Prob_b(u \mid I_{i,j}) = \Prob_b(u)/\Prob_b(I_{i,j})$ denote
the conditional realization probability of reaching node $u$, under 
profile $b$, conditioned on reaching (i.e., realizing)
information set $I_{i,j}$.  This is well-defined whenever $\Prob_b(I_{i,j}) > 0$.

We will call a belief system $\mu$  {\em suitable for behavior profile $b$}
if for all information sets $I_{i,j}$ such that $\Prob_b(I_{i,j}) > 0$,
for all nodes $u \in I_{i,j}$, $\mu(u) = \Prob_b(u \mid I_{i,j})$.
Note that if $b$ is a fully mixed profile then there
is a {\em unique} belief system suitable for $b$, 
because $\Prob_b(I_{i,j}) > 0$ for
all information sets $I_{i,j}$.
Accordingly, when $b$ is a fully mixed behavior profile, 
we denote the unique belief system suitable for $b$ by $\mu^b$, and
we say that $\mu^b$ is {\em the belief system generated by} $b$.
Note that given an EFGPR, $\calG$, and given a fully mixed (rational) profile $b \in B^{> 0}$,
we can easily compute the belief system $\mu^b$ generated by $b$
in time polynomial in $|\calG| + \size(b)$, because
the conditional probability $\mu^b(u) = \Prob_b(u \mid I_{i,j}) = \Prob_b(u)/\Prob_b(I_{i,j})$ is
easy to compute given $\calG$, $b$, and $u$. (By
Proposition \ref{prop:expected-poly} the numerator and denominator
are defined by multilinear polynomials, whose value can be
easily evaluated at $b$, given $\calG$ and $b$, in time polynomial in 
$|\calG| + \size(b)$.)

For any node $u \in V$, and for any leaf $z \in \Leaves$, let $\Prob^u_b(z)$
denote the probability that leaf $z$ is reached if the game is 
started at node $u$
and the profile $b$ is played.   
For any information set $I_{i,j}$, define the probability distribution 
$\Prob^{i,j}_{b,\mu}(z)$ on leaves
by: $\Prob^{i,j}_{b,\mu}(z) := \sum_{u \in I_{i,j}} \mu(u) \cdot \Prob^u_b(z)$, for all $z \in \Leaves$.
Then the {\em expected payoff with respect to assessment $(b,\mu)$, starting in information set $I_{i,j}$},
is defined by $U^{\mu,j}_{i}(b) = \sum_{z \in \Leaves} \Prob^{i,j}_{b,\mu}(z) \cdot r_i(z)$.
A behavior strategy $b'_i$ for player $i$ is called a {\em best reply at information set $I_{i,j}$ 
against assessment $(b,\mu)$}  if $U^{\mu,j}_{i}(b \mid b'_i) = \max_{b''_i \in B_i} U^{\mu,j}_{i}(b \mid b''_i)$.
We say that profile $b$ is a {\em sequential best reply against assessment $(b,\mu)$} if
for all players $i$, and all information sets $I_{i,j}$,   $b_i$ is a best reply at information set
$I_{i,j}$ against assessment $(b,\mu)$.
An assessment $(b,\mu)$ is called a {\em sequential equilibrium} (SE)  of $\calG$ 
if: 
there exists a 
sequence $\langle (b^k,\mu^{b^k}) \mid k \in \nat \rangle$ of assessments, such
that for all $k \in \nat$,
$b^k$ is fully mixed and $\mu^{b^k}$ is the belief system generated by $b^k$,
and $\lim_{k \rightarrow \infty} (b^k,\mu^{b^k}) = (b,\mu)$ (this conditioned is usually
called {\em consistency} of $(b,\mu)$), 
and furthermore $b$ 
is a sequential best reply against $(b,\mu)$.
Kreps and Wilson (\cite{Kreps-Wilson:1982}) 
showed the following facts about sequential equilibrium
(the facts relating QPE to SE and NF-PE were shown later by van Damme \cite{vanDamme84}):

\begin{proposition}[\cite{Kreps-Wilson:1982}; \cite{vanDamme84}]
\label{prop:kreps-wilson}
For any EFGPR, $\calG$: 
\begin{enumerate}
\item (\cite{Kreps-Wilson:1982}) An SE,  $(b',\mu')$, exists for $\calG$.
\item  (\cite{Kreps-Wilson:1982}) 
For every SE, $(b',\mu')$, of $\calG$,  the behavior profile 
$b'$ is a SGPE of $\calG$.

\item (\cite{Kreps-Wilson:1982}) For every PE, $b^*$, of $\calG$, 
there is a system of beliefs $\mu^*$ such that $(b^*,\mu^*)$ is a SE. 
In this sense, we say ``every PE is a sequential equilibrium''.\footnote{The 
converse is false: there are EFGPRs with an SE, $(b',\mu')$,
such that $b'$ is
far from any PE.  See, e.g., \cite{Kreps-Wilson:1982,vanDamme91}.}\\
In fact, for every PE,  $b^*$, of $\calG$,  if $\langle (b^k,\mu^{b^k}) \rangle_{k \in \nat}$ denotes any 
sequence where, for all $k \in \nat$, $b^k$
is a fully mixed behavior profile 
which is a $(1/k)$-PE
for $\calG$, and
$\mu^{b^k}$ is the belief system generated by $b^k$, 
and where $\lim_{k \rightarrow
\infty} b^k = b^*$ and $\lim_{k \rightarrow \infty} \mu^k = \mu^*$,  then 
$(b^*,\mu^*)$ is a SE of $\calG$.

\item (\cite{vanDamme84}) For every QPE, $b^*$, of $\calG$, 
there is a system of beliefs $\mu^*$ such that $(b^*,\mu^*)$ is a SE. 
In this sense, we again say ``every QPE is a sequential equilibrium''.\footnote{The 
converse is again false: there are EFGPRs with an SE, $(b',\mu')$,
such that $b'$ is
far from any QPE.  See \cite{vanDamme84}.}\\
In fact, for every QPE,  $b^*$, of $\calG$,  if $\langle (b^k,\mu^{b^k}) \rangle_{k \in \nat}$ denotes any 
sequence where, for all $k \in \nat$, $b^k$
is a fully mixed behavior profile which is a $(1/k)$-QPE for $\calG$,
$\mu^{b^k}$ is the belief system generated by $b^k$, 
and where $\lim_{k \rightarrow
\infty} b^k = b^*$ and $\lim_{k \rightarrow \infty} \mu^k = \mu^*$,  then 
$(b^*,\mu^*)$ is a SE of $\calG$.

\item (\cite{vanDamme84}) Every  QPE, $b^*$, of $\calG$  is a NF-PE.\\
(Recall: for $b^*$ is a NF-PE of $\calG$ means that the mixed profile 
 $\sigma[b^*] = (\sigma^{b^*_1}_1, \ldots, \sigma^{b^*}_n)$
induced by $b^*$ is
a PE of the standard NFG, $\Normal(\calG)$.)  
\end{enumerate}
\end{proposition}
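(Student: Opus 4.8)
The plan is to prove parts 2--5 directly and to derive part 1 as a corollary of part 3 together with Selten's theorem (stated above) that every EFGPR has a PE. The common engine for parts 3--5 is a ``trembling-sequence'' argument: take a sequence of $\epsilon_k$-equilibria (of the relevant kind) defining the limit point, pass by Bolzano--Weierstrass to a subsequence along which the generated belief systems also converge (the belief space $\SysB$ is compact), and then check that the resulting assessment is \emph{consistent} (immediate, by construction) and \emph{sequentially rational} (the substantive step).

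\textbf{Part 2.} Fix an SE $(b^*,\mu^*)$ and a node $u$ for which $T_u$ is a subgame. Since $\calG$ has perfect recall, no node shares an information set with one of its ancestors, so the information set of $u$ meets $V_u$ only in $u$ itself; hence $u$ is effectively a singleton information set of $\calG_u$ and $\calG_u$ is a well-posed game. For any information set $I_{i,j}\subseteq V_u$, every path from the root to a node of $I_{i,j}$ factors through $u$ and thereafter stays inside $T_u$ (all intervening information sets lie in $V_u$ because $T_u$ is a subgame), so for $w\in I_{i,j}$ the quantity $\Prob_b(w\mid I_{i,j})$ depends only on $b|_{T_u}$; therefore restricting the fully mixed profiles approximating $(b^*,\mu^*)$ to $T_u$ witnesses that $(b^*|_{T_u},\mu^*|_{T_u})$ is consistent for $\calG_u$. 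It is also sequentially rational for $\calG_u$, since the continuation from any node of $V_u$ and its payoff are identical in $\calG$ and $\calG_u$, and sequential rationality of $(b^*,\mu^*)$ already covers the information sets inside $V_u$. Thus $(b^*|_{T_u},\mu^*|_{T_u})$ is an SE of $\calG_u$, and since every sequential equilibrium is in particular a Nash equilibrium (a strategy that best-replies at each of a player's own information sets against consistently generated beliefs best-replies overall, by a one-shot-deviation argument valid under perfect recall), $b^*$ induces an NE on $\calG_u$.

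\textbf{Parts 3, 4, and 1.} Let $b^*$ be a PE and $\langle b^k\rangle$ any sequence of fully mixed $(1/k)$-PEs with $b^k\to b^*$; pass to a subsequence with $\mu^{b^k}\to\mu^*$, so $(b^*,\mu^*)$ is consistent by construction. Suppose sequential rationality fails: at some $I_{i,j}$ there is a continuation deviation improving player $i$'s expected payoff against $(b^*,\mu^*)$, starting at $I_{i,j}$, by $3\gamma>0$. Since $U^{\mu,j}_i(\cdot)$ is continuous in $(b,\mu)$ and, for fully mixed $b^k$, coincides with the true conditional expected payoff (the generated beliefs being literally the conditional ones), for large $k$ the profile $b^k$ admits a continuation deviation at $I_{i,j}$ improving its true conditional payoff there by more than $\gamma$. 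By a one-shot-deviation argument along the information-set forest $\calF_i$ (processing descendants of $I_{i,j}$ from the leaves up, and bounding the loss by the number of information sets), this forces a single information set $I_{i,j'}$ with $j'$ equal to or a descendant of $j$, and an action $a\in\calA_{i,j'}$ that is not a local best response to $b^k$ yet still carries $b^k_{i,j'}(a)>1/k$ --- using that for fully mixed $b^k$ ``local best response to $b^k$'' agrees with ``local best reply at $I_{i,j'}$ against $(b^k,\mu^{b^k})$'' up to the positive factor $\Prob_{b^k}(I_{i,j'})$. This contradicts $b^k$ being a $(1/k)$-PE, so $(b^*,\mu^*)$ is an SE; combined with Selten's existence of a PE this also yields part 1. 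Part 4 is the same argument with $(1/k)$-QPEs, and is in fact slightly cleaner: the $\epsilon$-QPE condition is already phrased via optimal continuations $\max_{b_i'\in B_i}U_i(b\mid_{(i,j)}(b_i'\mid\pi^a_{i,j}))$, so one extracts directly that $b^*_{i,j}$ puts weight only on actions maximizing the optimal continuation value at $I_{i,j}$ against $(b^*,\mu^*)$, and aggregating these local conditions over $\calF_i$ from the leaves up gives sequential rationality without the one-shot-deviation detour.

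\textbf{Part 5 and the main obstacle.} Let $b^*$ be a QPE with defining sequence of $\epsilon_k$-QPEs $b^k\to b^*$. It suffices to show $\sigma[b^k]$ is an $\eta_k$-PE of $\Normal(\calG)$ for some $\eta_k\to0$, since then $\sigma[b^*]=\lim\sigma[b^k]$ is a PE of $\Normal(\calG)$, i.e.\ $b^*$ is a NF-PE. Each $\sigma[b^k]$ is fully mixed because $b^k$ is. Choosing $\eta_k$ so that $\sigma^{b^k_i}_i(c)=\prod_{j\in[d_i]}b^k_{i,j}(c(j))>\eta_k$ forces $b^k_{i,j}(c(j))>\epsilon_k$ for every $j$, the $\epsilon_k$-QPE condition gives that each $c(j)$ attains $\max_{b_i'\in B_i}U_i(b^k\mid_{(i,j)}(b_i'\mid\pi^{c(j)}_{i,j}))$; processing $\calF_i$ from the leaves up, this means $c$ selects at every information set an action that is optimal assuming optimal future continuation, and since $b^k$ is fully mixed (every information set reached with positive probability) this aggregates to $\pi^c_i$ being a best response to $\sigma[b^k]$ in $\Normal(\calG)$, so $\sigma[b^k]$ is an $\eta_k$-PE. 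The crux throughout parts 3--5 is the one-shot-deviation bookkeeping: converting ``some whole-continuation deviation at $I_{i,j}$ improves the conditional payoff by $\gamma$'' into ``some single local deviation at an information set reachable from $I_{i,j}$, still carrying weight exceeding $\epsilon_k$, improves it by a fixed fraction of $\gamma$'', and carrying the quantitative estimates cleanly to the limit --- this is exactly where perfect recall and the forest $\calF_i$ are indispensable, and the analogous aggregation step in part 5 is van Damme's argument and requires the same care with the $\epsilon$'s.
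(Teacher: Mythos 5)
The paper offers no proof of Proposition \ref{prop:kreps-wilson}: it is quoted as background, with parts 1--3 attributed to Kreps--Wilson \cite{Kreps-Wilson:1982} and parts 4--5 to van Damme \cite{vanDamme84}. So there is no in-paper argument to match; what you have written is a self-contained reconstruction, and its architecture is the right one. It follows the classical line: compactness of $B\times\SysB$ to extract convergent generated beliefs, consistency of the limit assessment for free, and a one-shot-deviation/backward-induction argument on the information-set forest $\calF_i$ to convert the local optimality conditions of $(1/k)$-PEs (resp.\ the continuation-optimality conditions of $(1/k)$-QPEs) into sequential rationality of the limit; and your part 5 aggregation --- a pure strategy $c$ with $\sigma^{b^k}_i(c)>\epsilon_k$ has $b^k_{i,j}(c(j))>\epsilon_k$ at \emph{every} $j\in[d_i]$, hence is quasi-optimal at every information set, hence by bottom-up induction on $\calF_i$ a global best reply, so $\sigma[b^k]$ is an $\epsilon_k$-PE of $\Normal(\calG)$ --- is a clean direct route to van Damme's QPE $\Rightarrow$ NF-PE. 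The quantitative bookkeeping you defer in parts 3--4 (actions of weight at most $1/k$ cost at most about $|\calA_{i,j}|\cdot M_\calG/k$ per information set, and these errors add without amplification going up the forest) is exactly the kind of estimate the paper itself carries out later, in Lemma \ref{lem:almost-fixed-is-almost-equilib} via Claims \ref{claim:delta-1} and \ref{claim:-delta2}, so it is genuinely available. Two points should be made explicit in a full write-up: (i) in parts 3--4 the conditional probabilities $\Prob_b(u\mid I_{i,j})$ and the values $\MU^{j,a}_i(b)$ can be discontinuous at the limit when $\Prob_{b^*}(I_{i,j})=0$, so all limits must be taken through the belief-weighted payoffs $U^{\mu^{b^k},j}_i$ (rewrite $\MU^{j,a}_i(b^k)$ in terms of $\mu^{b^k}$ before letting $k\to\infty$), which your phrase ``against $(b^*,\mu^*)$'' presupposes but does not spell out; and (ii) in part 4 the bottom-up aggregation also uses consistency of $(b^*,\mu^*)$ to make the limit beliefs at descendant information sets cohere with those at $I_{i,j}$. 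Neither is a gap in the plan, only in the level of detail.
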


\begin{figure*}

\begin{center}
\begin{tikzpicture}[scale=0.85]
[->,>=stealth',shorten >=1pt,auto,node distance=2.8cm, semithick,
 every state/.style={draw=blue,fill=blue!20,text=black}]
  \node[state]  (NE)         at (3,0)   {{\Large NE}};
  \node[state] (PE) at (1,5.5) {{\Large PE}};
  \node[state]          (NFPE) at (4,2.3) {{\small NF-PE}};
  \node[state]          (SE) at (3,4) {{\Large SE}};
  \node[state]          (SGPE) at (2,2) {{\small SGPE}};
  \node[state]          (QPE) at (5,5.5) {{\large QPE}};
  \path[->] (NE) edge      node {} (SGPE)
             edge               node {} (NFPE)
        (SGPE) 
            edge   node {} (SE)
        (SE) edge  node {} (PE)
            edge               node {}  (QPE)
        (NFPE) edge  node {} (QPE);
\end{tikzpicture}

\end{center}

\caption{\label{fig:refinement-hasse} 
Hasse diagram of the mentioned equilibrium refinements for EFGPRs.}
\end{figure*}
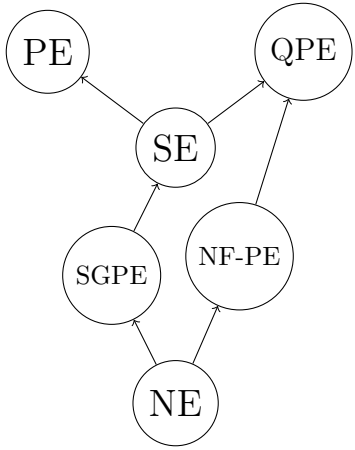

Figure \ref{fig:refinement-hasse}
summarizes the mentioned refinement relationships 
between the various equilibrium notions that we have defined
for EFGPRs: it depicts the Hasse diagram 
of the refinement partial order.
In the diagram, a directed edge 
$X \rightarrow Y$ means that equilibrium notion $Y$ refines notion $X$,
i.e., that every $Y$-equilibrium is also a $X$-equilibrium.
Moreover, whenever there is no directed path in this Hasse diagram from a 
node $X$
to a node $Y$, that means there exist known examples of EFGPRs where
 a $Y$-equilibrium is {\em not} an $X$-equilibrium.   (So, this is a partial order not
because we lack knowledge of an underlying richer (total) order: 
no other refinement relationships exist for general EFGPRs,
other than those implied by this Hasse diagram.)

It is noteworthy that there {\em can not exist}
some more refined equilibrium notion that 
refines {\em both} PE and QPE, {\em and} exists in every EFGPR. 
In particular,  Mertens \cite{Mertens95} 
has given a simple example of a 2-player EFGPR 
whose set of PEs  is {\em disjoint} from its set of NF-PEs (and 
whose NF-PEs consist of just one dominant strategy equilibrium). 
Thus, since QPE refines NF-PE, the set of PEs of Mertens' EFGPR is 
also disjoint from its set of QPEs.
Mertens argues, partly based on this example, that  QPE is preferable to PE
 as a refinement for EFGPRs:  a dominant strategy equilibrium, when it exists, 
is generally prized, and it is always a QPE, but it is not necessarily a PE as shown by Mertens's example.
Mertens's example  shows we can not hope for some (as yet unknown) 
``most refined''
notion of equilibrium for EFGPRs, which always exists, and which 
refines all the refinements we have mentioned.
It is worth mentioning however
 that the results of \cite{BlumeZame94} and \cite{PimientaShen2014} combined
show that
if a EFGPR is suitably {\em ``generic''}\footnote{Here ``generic'' means 
the EFGPR has some  
``structure'' $\Psi$ (which excludes the payoff information) and has a vector of payoff functions
 $r=(r_1,\ldots,r_n) \in \real^m$ such that  $r \not\in R[\Psi]$;   where $R[\Psi] \subseteq \real^m$ is 
a certain (semi-algebraic) ``forbidden'' set of 
{\em dimension} strictly less than $m$.},
then its set of PEs, QPEs, and SEs are all the same.
However, many natural games that we might encounter may not be 
``generic'' in this sense, as illustrated by the various 
simple and natural examples of games provided in, e.g., \cite{vanDamme91,Mertens95,vanDamme84,MasSolZam13}, where PE, SE, and 
QPE do not coencide.

\noindent {\bf Agent Normal Form.}   Kuhn \cite{Kuhn53} and Selten \cite{Selten75}  considered an alternative way to associate 
a normal form game with
a given EFGPR, $\mathcal{G}$,
which they called the {\em agent normal form}.
The agent normal form game, $\Agent\Normal(\calG)$, is defined as follows. 
$\Agent\Normal(\calG)$ has a player, called an {\em agent}, associated with each
information set $I_{i,j}$ of the EFGPR, $\calG$.  
Thus if $\calG$ has $n$ players and player $i$ has $d_i$ information sets, then
the total number of agents in $\Agent\Normal(\calG)$ is $d = \sum^n_{i=1} d_i$, which is the total
number of information sets in $\calG$.
We refer to each agent in $\Agent\Normal(\calG)$ by its index: $(i,j)$.
The set of pure strategies 
for agent $(i,j)$ in $\Agent\Normal(\calG)$ is given
by the set ${\mathcal A}_{i,j}$ of actions available to player $i$ of $\mathcal{G}$
in the information set $I_{i,j}$.
Thus, note that the set of mixed strategies for agent $(i,j)$ in $\Agent\Normal(\calG)$ 
is in one-to-one correspondence with the set of local strategies $B_{i,j}$ 
for player $i$ at information set $I_{i,j}$
in the EFGPR, $\calG$.
Thus also, the set of profiles of mixed strategies in $\Agent\Normal(\calG)$ 
is in one-to-one correspondence with the set $B$ of behavior strategy profiles in $\calG$.
Moreover, the set of pure strategy profiles of the agents in $\Agent\Normal(\calG)$
is in one-to-one correspondence with the set of pure strategy profiles $S$ in $\calG$.
Thus, hereafter, we use $S$ interchangeably, to denote both the sets of pure profiles
for $\calG$ and for $\Agent\Normal(\calG)$, and we also use $B$
interchangeably, to denote both the set of {\em behavior} profiles of $\calG$ and
the set of {\em mixed} profiles of $\Agent\Normal(\calG)$.

We define
the payoff functions, $u_{(i,j)}(s)$, of $\Agent\Normal(\calG)$ as follows: 
given a pure profile $s \in S$ for the $d$ agents, the payoff to agent
$(i,j)$ is given by $u_{(i,j)}(s) := U_i(s)$.   In other words, 
the payoff for every agent $(i,j)$ in  $\Agent\Normal(\calG)$
under profile $s$ is the expected payoff of player $i$ in $\calG$
under the same profile $s$.
Thus, the goal
of all the agents $(i,j)$ who are ``acting on behalf of'' player $i$,
is aligned exactly with the goal of player $i$.
It follows that also 
the {\em expected} payoff, $U_{(i,j)}(b)$, to agent $(i,j)$ under any mixed profile
$b \in B$ in $\Agent\Normal(\calG)$ is equal to the expected payoff $U_i(b)$
of player $i$ under the same (behavior) profile $b \in B$ of $\calG$.

A simple but important fact,
that follows immediately from the definitions we have given for ($\epsilon$-)PEs,
is that
the set of ($\epsilon$-)PEs of $\calG$ is equal to the set
of ($\epsilon$-)PEs of $\Agent\Normal(\calG)$.\footnote{This is why we overload the 
``($\epsilon$-)PE'' terminology 
for the corresponding notions of both NFGs and EFGPRs.}

\begin{proposition}
[cf. \cite{Selten75} Lemma 7, \& \cite{Myerson78}; see also \cite{vanDamme91}]
\label{prop:local-opt-global-opt}
For a EFGPR, $\calG$,
and $\epsilon > 0$,   
a behavior profile $b \in B$ is a $\epsilon$-PE of $\calG$ if and only if
$b$ is a mixed $\epsilon$-PE of $\Agent\Normal(\calG)$
(this is true by definition).
\ \ Thus, a profile $b \in B$ is a PE of $\calG$
\ \ iff \ \ $b$ is a PE of $\Agent\Normal(\calG)$.
\end{proposition}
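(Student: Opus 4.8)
The plan is to show that the first ``iff'' is a term-by-term unwinding of the two definitions under the correspondences already recorded in the \emph{Agent Normal Form} paragraph, and that the PE statement then follows by passing to limits. There is no deep obstacle here; the only mild subtlety is keeping the dictionary between $\calG$ and $\Agent\Normal(\calG)$ straight and recalling the elementary reformulations of ``(local) best response'' via pure deviations.

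First I would set up that dictionary. Under the one-to-one correspondence between behavior profiles $b \in B$ of $\calG$ and mixed profiles of $\Agent\Normal(\calG)$: the mixed strategy of agent $(i,j)$ is the local strategy $b_{i,j} \in B_{i,j}$; the pure strategies of agent $(i,j)$ are the actions $a \in \calA_{i,j}$, with $\pi^a_{i,j}$ on each side corresponding; the probability that agent $(i,j)$ assigns to $a$ equals $b_{i,j}(a)$; and by the definition of the payoffs of $\Agent\Normal(\calG)$ we have $U_{(i,j)}(b \mid \pi^a_{i,j}) = U_i(b \mid \pi^a_{i,j})$ for every $a \in \calA_{i,j}$ and every $b \in B$.

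Next I would check the two defining clauses of ``$\epsilon$-PE'' against each other. Clause (a): ``$b$ is fully mixed as a profile of $\Agent\Normal(\calG)$'' says every agent $(i,j)$ puts positive probability on every $a \in \calA_{i,j}$, which is precisely $b \in B^{> 0}$. Clause (b): for a fixed agent $(i,j)$ and action $a$, ``$\pi^a_{i,j}$ is a best response for agent $(i,j)$ to $b$'' means $U_{(i,j)}(b \mid \pi^a_{i,j}) \geq U_{(i,j)}(b \mid \pi^{a'}_{i,j})$ for all $a' \in \calA_{i,j}$; by the dictionary this equals $U_i(b \mid \pi^a_{i,j}) \geq U_i(b \mid \pi^{a'}_{i,j})$ for all $a'$, which by the remark following the definition of ``local best response'' is exactly the statement that $\pi^a_{i,j}$ is a local best response to $b$ in $\calG$. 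The triggering hypothesis, ``agent $(i,j)$ puts probability $> \epsilon$ on $a$'', is ``$b_{i,j}(a) > \epsilon$''. Quantifying over all agents $(i,j)$ and all $a \in \calA_{i,j}$, clause (b) for $\Agent\Normal(\calG)$ becomes clause (b) for $\calG$ verbatim; together with the clause-(a) match, this proves the first sentence of the proposition.

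Finally, for the PE statement: $b$ is a PE of $\calG$ iff it is a limit point of a sequence of $\epsilon_k$-PEs of $\calG$ with $\epsilon_k \to 0$, and by the equivalence just shown such a sequence is literally a sequence of $\epsilon_k$-PEs of $\Agent\Normal(\calG)$; hence the two games have the same sets of limit points of vanishing-$\epsilon$ sequences of $\epsilon$-PEs, i.e.\ the same set of PEs. The only ingredients invoked beyond the raw definitions are the payoff identity $U_{(i,j)}(\cdot) = U_i(\cdot)$ on $B$ (recorded when $\Agent\Normal(\calG)$ was introduced) and the reformulation of a (local) best response in terms of pure deviations; neither presents a difficulty, so ``main obstacle'' here is really just careful bookkeeping.
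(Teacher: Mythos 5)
Your proposal is correct and matches the paper's treatment: the paper asserts the first equivalence ``is true by definition,'' and your careful unwinding --- the dictionary $U_{(i,j)}(b)=U_i(b)$, the match of full mixedness with $b\in B^{>0}$, and the reduction of (local) best responses to pure-action comparisons --- is exactly the bookkeeping that claim relies on. The limit-point argument for the PE statement is likewise the intended (and only needed) step, so there is nothing to add.
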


Note, firstly, that it is {\em not} true in general that the set of Nash equilibria
of $\calG$ and $\Agent\Normal(\calG)$ are the same.  There are simple (even 1-player) examples showing this.
This is because even though a profile $b \in B$ might consist entirely
of ``local best responses'' in $\calG$,  some information sets may be 
reached with
probability $0$ under profile $b$, and therefore ``local best responses''
together do not necessarily constitute a ``global'' best response in $\calG$.

Note also that, as mentioned already, no such relationship holds in general between the PEs of $\calG$
and the PEs of its standard normal form $\Normal(\calG)$, in either direction.

Proposition \ref{prop:local-opt-global-opt} holds 
by definition 
because we have used
Myerson's \cite{Myerson78} alternative definition of PEs, via $\epsilon$-PEs.
We remark that the reason why Myerson's definition is equivalent to Selten's
original definition (which we will not give formally)
was shown already by Selten himself. Namely,
Selten defined a PE as a limit point of NEs of a sequence of {\em perturbed} games
(with positive ``perturbations'' going to zero).
In a {\em perturbed} EFGPR, there is a minimum positive probability 
specified for each action available in each information set, and that action 
 must be played with at least that probability in any behavior strategy.
Selten (\cite{Selten75}, Lemma 7) showed that 
for perturbed EFGPRs, a behavior strategy that consists
entirely of ``local best responses'' is also necessarily a ``global'' 
best response. 
As explained already, this {\em does  not} hold in general when the
game is not perturbed. 

We shall need the following ``almost'' 
variant of  Proposition \ref{prop:local-opt-global-opt}, which also follows
immediately from our definitions.

\begin{proposition}
\label{prop:almost-local-opt-global-opt}
For all $\delta > 0$ and $\epsilon > 0$,
for any EFGPR, $\calG$,  
a (behavior) 
strategy profile $b \in B$
is a $\delta$-almost-$\epsilon$-PE
of $\calG$ iff $b$ is a (mixed) $\delta$-almost-$\epsilon$-PE
of $\Agent\Normal(\calG)$.
\end{proposition}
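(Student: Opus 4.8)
The plan is to unwind both definitions and observe that, under the canonical identification set up in the paragraph on the \emph{agent normal form} between the set $B$ of behavior profiles of $\calG$ and the set of mixed-strategy profiles of $\Agent\Normal(\calG)$, the two conditions coincide term by term. So, exactly as with Proposition~\ref{prop:local-opt-global-opt}, there is no content beyond bookkeeping: the statement holds essentially ``by definition''. I would state it that way and then make the correspondence fully explicit, so that the (entirely routine) matching of trigger conditions and of best-response inequalities is visible.

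Concretely, fix $\delta > 0$, $\epsilon > 0$ and a behavior profile $b \in B$; under the identification, $b$ is also the mixed profile of $\Agent\Normal(\calG)$ in which agent $(i,j)$ plays the mixed strategy $b_{i,j} \in \Delta(\calA_{i,j})$. First I would match condition $(a)$: since the pure-strategy set of agent $(i,j)$ in $\Agent\Normal(\calG)$ is precisely $\calA_{i,j}$, being fully mixed in $\Agent\Normal(\calG)$ means exactly that $b_{i,j}(a) > 0$ for all $(i,j)$ and $a \in \calA_{i,j}$, i.e. $b \in B^{> 0}$. Next I would match condition $(b)$ for $\calG$ with condition $(b')$ for $\Agent\Normal(\calG)$. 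Fix $i$, $j$, $a \in \calA_{i,j}$. The unilateral deviation of agent $(i,j)$ away from $b$ to its pure strategy $a$ — which under the identification is the local strategy $\pi^a_{i,j}$ — is exactly the behavior profile $(b \mid \pi^a_{i,j})$ of $\calG$; and by the definition of the payoff functions of $\Agent\Normal(\calG)$ we have $U_{(i,j)}(b \mid \pi^{a'}_{i,j}) = U_i(b \mid \pi^{a'}_{i,j})$ for every $a' \in \calA_{i,j}$. Hence ``$\pi^a_{(i,j)}$ is a $\delta$-almost best response for agent $(i,j)$ against $b$'' and ``$\pi^a_{i,j}$ is a $\delta$-almost local best response against $b$ in $\calG$'' are the same inequality $U_i(b \mid \pi^a_{i,j}) \geq U_i(b \mid \pi^{a'}_{i,j}) - \delta$ for all $a' \in \calA_{i,j}$. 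Since the trigger $b_{i,j}(a) > \epsilon$ is also literally the same on both sides, $(b)$ and $(b')$ are the same family of implications, and the equivalence follows.

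There is essentially no obstacle here; the only point requiring a word of care is that ``$\delta$-almost (local) best response'' is sometimes phrased against arbitrary (local) strategies rather than against pure actions. But this changes nothing: $U_i(b \mid \cdot)$ is affine in the deviating local strategy at $I_{i,j}$, so $\max_{b''_{i,j} \in B_{i,j}} U_i(b \mid b''_{i,j}) = \max_{a' \in \calA_{i,j}} U_i(b \mid \pi^{a'}_{i,j})$ (this is the reformulation already recorded just before the proposition), and likewise for the agents of $\Agent\Normal(\calG)$, so the pure-action and the general formulations agree. Thus the proof is a short, explicit transcription of the two definitions through the $B \leftrightarrow \Agent\Normal(\calG)$ dictionary.
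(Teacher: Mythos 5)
Your proposal is correct and matches the paper's treatment: the paper states this proposition as following ``immediately from our definitions'' (just as Proposition~\ref{prop:local-opt-global-opt} holds by definition via Myerson's $\epsilon$-PE formulation), and your argument is exactly the explicit transcription of that identification, including the correct use of the already-recorded fact that a $\delta$-almost local best response need only be checked against the pure actions $\pi^{a'}_{i,j}$.
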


Note that if the agent normal form 
$\Agent\Normal(\calG)$ is represented in the usual way, by providing its
table of payoffs for all possible pure strategy profiles of all the agents,
then just as was the case for standard normal form,
the encoding size $|\Agent\Normal(\calG)|$ 
is also exponential in $|\calG|$,
because the number $|S|$ of pure profiles of $\Agent\Normal(\calG)$ is 
exponential in $|\calG|$.
Nevertheless,  we shall find $\Agent\Normal(\calG)$ very useful
for our computational purposes.

\subsection*{The complexity classes \FIXP, \FIXPA, and \linFIXP ( = \PPAD)}

We shall now define the search problem complexity 
classes \FIXP, \FIXPA, and \PPAD,
which we shall use to characterize the complexity of computing
an equilibrium (of various kinds) for a EFGPR.

A {\em $\{+, -, *, /, \max, \min\}$-circuit} has inputs consisting of variable
$x_1, x_2, \ldots, x_n$, as well as rational constants,
and has a finite number of (binary) computation gates taken from 
$\{+, - , *, /, \max, \min\}$, 
with a subset of the computation gates labeled $\{o_1, o_2, \ldots, o_m\}$ 
and called output 
gates.\footnote{The set of gates $\{+,-,*,/,\max,\min\}$
is of course redundant, e.g., using
rational constants the gates $\{-,\min\}$ can
be simulated by the other gates. }
The class of {\em $\{+,\max\}$-circuits} are the restricted class of 
$\{+,-,*,/,\max,\min\}$-circuits, where
the only allowed gates are $\{+,\max\}$ in addition to gates for
{\em multiplication  by a rational constant}.

When a circuit in this paper is a  
general $\{+,-,*,/,\max,\min\}$-circuit, we shall often just refer
to it simply as 
``circuit", when it is clear from the context. 
We shall also refer to $\{+,\max\}$-circuits as {\em piecewise-linear} circuits.   
A circuit (of either kind) computes a continuous function 
from $\real^n \rightarrow \real^m$ (and $\rat^n \rightarrow \rat^m$) in the natural way. Abusing 
notation slightly, we shall often identify the circuit with the function it computes.

By a (total) {\em multi-valued function}, $f$, with domain $A$ and co-domain $B$,
we mean a function that maps each $a \in A$ to a non-empty subset 
$f(a) \subseteq B$.  
We use $f: A \twoheadrightarrow B$ to denote such a function. Intuitively, when considering a multi-valued function as a computational problem, we are interested in producing just one of the elements of $f(a)$ on input $a$, so we refer to $f(a)$ as the set of {\em allowed outputs}.

A multi-valued function  $f: \{0,1\}^* \twoheadrightarrow \real^*$ is said to be in $\FIXP$ 
if there is a polynomial time 
computable map, $r$, that maps each instance $I \in \{0,1\}^*$ of $f$ to $r(I)= \langle1^{k^I}, 1^{d^I}, P^I, C^I, \phi^I, a^I, b^I \rangle$, where
\begin{itemize}
\item{}$k^I$ and $d^I$ are positive integers.

\item{} $P^I$ is a convex polytope in $\real^{k^I}$, given as a set of 
linear inequalities with rational coefficients.
\item{}$C^I$ is a circuit, with $k^I$ inputs and $k^I$ outputs, which maps $P^I$ to itself.
\item{} $\phi^I :[d^I] \rightarrow [k^I]$ is a finite function, given by its table.
\item{}$a^I, b^I \in \rat^{d^I}$.
\item{}$f(I) = \{(a^I_i y_{\phi^I(i)} + b^I_i)_{i=1}^{d^I} \mid y \in P^I \: \wedge \: C^I(y) = y \}$. 
Note that $f(I) \not = \emptyset$, by Brouwer's fixed point theorem.
\end{itemize}
The above is one of many equivalent characterizations of $\FIXP$ \cite{EY07}.
In particular, it was shown in \cite{EY07} that the gates 
$\{+,*,\max\}$ together with rational constants suffice 
for functions computed by the corresponding circuits
to characterize $\FIXP$,
and furthermore adding other gates such as $k$'th-root gates for any fixed $k$
does not increase the power of $\FIXP$.

A multi-valued function  $f: \{0,1\}^* \twoheadrightarrow \real^*$
is said to be in $\linFIXP$ if it satisfies the same
definition
as for $\FIXP$, except that the circuit $C^I$ must be a $\{+,\max\}$-circuit (recall: with multiplication
by rational constants allowed).  

Informally, $\FIXP$ are those real vector multi-valued functions, with discrete inputs, 
that can be cast as Brouwer fixed point computations for algebraically defined functions,
and $\linFIXP$ is the restriction of those to functions 
that are piecewise-linear.
A multi-valued function $f:  \{0,1\}^* \twoheadrightarrow \real^*$ is said to be $\FIXP${\em -complete} 
(respectively, $\linFIXP${\em -complete}) if:
\begin{enumerate}
\item{}$f \in \FIXP$  (respectively, $f \in \linFIXP$), and
\item{} [$f$ is {\em $\FIXP$-hard} (respectively, $f$ is {\em  $\linFIXP$-hard})]:
for all $g \in \FIXP$  (respectively, $g \in \linFIXP$), there is a polynomial time computable map, mapping instances $I$ of $g$ to 
$\langle y^I, 1^{k^I},
\phi^I, a^I, b^I \rangle$, 
where $y^I$ is an instance of $f$,  where $f(y^I) \subseteq \real^{k^I}$,
$\phi^I:[d^I] \rightarrow [k^I]$ is a function (given by its table), 
$d^I \geq 1$,
and $a^I$ and $b^I$ are 
$d^I$-tuples with rational entries, so that 
$g(I) \supseteq  \{ (a^I_i z_{\phi^I(i)} + b^I_i)^{d^I}_{i=1} \mid 
z \in f(y^I) \}$.  In other words, for any allowed output $z$ of $f$ on input $y^I$, 
the vector $(a^I_i z_{\phi^I(i)} + b^I_i)^{d^I}_{i=1}$ is an allowed output of $g$ on input $I$. \end{enumerate}

In \cite{EY07} it was shown that the multi-valued function
which maps normal forms games, with $n \geq 3$ players,
to their Nash equilibria is $\FIXP$-complete.\footnote{To view the Nash equilibrium problem as a total multi-valued function, $f_{\mbox{\rm \tiny Nash}}: \{0,1\}^* \twoheadrightarrow \real^*$, we can view all strings in $\{0,1\}^*$ as encoding some game, by viewing ``ill-formed" input strings as encoding a fixed trivial game.}

Since the output of a $\FIXP$ function consists of real-valued vectors, and since there exist
circuits whose fixed points are all irrational, a $\FIXP$ function is not directly computable by a Turing machine, and 
the class is therefore not directly comparable with standard complexity classes of {\em discrete} total search problems 
(such as \PPAD, \PLS, or \TFNP). 

Even though we phrased $\linFIXP$ as a class of real-valued
search problems, it can also be viewed as class of {\em discrete}
search problems, because the nature of the functions defined
by $\{+, \max\}$-circuits (with multiplication by rational constants),
over a convex polytope domain $P^I$,
implies that they always have at least one  {\em rational}-valued
fixed point, with encoding size polynomial in that of the circuit.\footnote{Technically, 
to view $\linFIXP$ 
as a {\em discrete} search problem class, comparable to $\PPAD$, etc., we  
likewise close (discrete) $\linFIXP$ under polynomial time (search problem) reductions.}   
In fact, it was shown in \cite{EY07} that $\linFIXP = \PPAD$.
(So, $\linFIXP$ can serve as our definition of $\PPAD$ in this paper.
We will not need the original definition.)

It was shown by Chen and Deng \cite{Chen-Deng06} that the
multi-valued function that maps 2-player NFGs to their NEs is $\PPAD$-complete,
and by 
Daskalakis {\em et al.} \cite{DasGP09} that
the multi-valued function that maps NFGs (with any number of players), and a given rational $\epsilon > 0$,
to their $\epsilon$-NEs is $\PPAD$-complete.

We now define the discrete class $\FIXPA$, also from \cite{EY07}.
A multi-valued function $f: \{0,1\}^* \twoheadrightarrow \{0,1\}^*$ (a.k.a. a totally defined discrete search problem) 
is said to be in $\FIXPA$ if there is a function $f' \in \FIXP$,
and polynomial time computable maps $\delta: \{0,1\}^* \rightarrow \rat_+$ and $g: \{0,1\}^* \rightarrow \{0,1\}^*$, such that for all instances $I$, 
\[ f(I) \supseteq \{ \: g(\langle I, y \rangle) \mid  y \in \rat^* \: \wedge \:  
\exists y' \in f'(I): \:  \|y-y'\|_ \infty \leq \delta(I)  \: \}. \] 
Informally, $\FIXPA$ are those totally defined discrete search problems that reduce to approximating exact Brouwer fixed points.
A multi-valued function $f: \{0,1\}^* \twoheadrightarrow \{0,1\}^*$ is said to be $\FIXPA$-{\em complete} if:
\begin{enumerate}
\item{}$f \in \FIXPA$, and
\item{}[$f$ is {\em $\FIXPA$-hard}]: For all $g \in \FIXPA$, there are polynomial time computable maps 
$r_1, r_2:\{0,1\}^* \rightarrow \{0,1\}^*$,
such that 
$g(I) \supseteq \{ \: r_2(\langle I,z \rangle)  \mid   z \in f(r_1(I)) \: \}$.
\end{enumerate}
In \cite{EY07} it was shown that the multi-valued function 
that maps pairs $\langle \Gamma, \delta \rangle$,
where $\Gamma$ is a NFG and $\delta > 0$, to the set of rational $\delta$-approximations   
(in $\ell_\infty$-distance) of Nash equilibria of $\Gamma$, is $\FIXPA$-complete.

\section{Computing a (extensive form) $\epsilon$-PE, and a $\epsilon$-QPE, is in $\FIXP$}

\label{sec:epsilon-PE-char}

Given a EFGPR, $\calG$, 
we now construct
an algebraically defined  
function, $F^\epsilon_\calG(x)$,
whose Brouwer fixed points 
(for each fixed $\epsilon > 0$), constitute 
$\epsilon$-PEs of $\calG$.  
We likewise construct a function, $H^{\epsilon}_\calG(x)$
whose Brouwer fixed points (for each fixed $\epsilon > 0$),
constitute $\epsilon$-QPEs of $\calG$.
The functions $F^\epsilon_\calG(x)$ and $H^{\epsilon}_\calG(x)$ are
both defined using an algebraic
$\{+,*,\max\}$-circuit whose encoding size is polynomial in $|\calG|$,
and where $\epsilon > 0$ is an input of the algebraic circuit.
Our construction of $F^\epsilon_\calG(x)$ essentially amounts to the same 
construction as given for $\epsilon$-PEs of {\em normal form games} 
in \cite{EHMS14}, 
except when it is applied to the {\em agent normal form}, $\Agent\Normal(\calG)$.
Of course the problem is that we can not afford to actually construct 
$\Agent\Normal(\calG)$, because it is exponentially large.
However, it turns out we do not need to construct $\Agent\Normal(\calG)$
in order to construct $F^\epsilon_{\Agent\Normal(\calG)}(x)$.
We instead  
exploit the fact (Proposition \ref{prop:expected-poly}) that the expected
payoff functions $U_{(i,j)}(x) := U_i(x)$ for agents $(i,j)$
in $\Agent\Normal(\calG)$ are expressible as polynomials
whose encoding size is polynomial in $|\calG|$.
This allows
us to construct  $F^\epsilon_\calG(x) = F^\epsilon_{\Agent\Normal(\calG)}(x)$ 
with encoding
size polynomial in $|\calG|$, 
avoiding the explicit construction of  $\Agent\Normal(\calG)$.

Our construction of the function $H^{\epsilon}_\calG(x)$ for $\epsilon$-QPEs
is based on some similar ideas, but is more involved, and does not
make direct use of the relationship with $\Agent\Normal(\calG)$.

Given a $n$-player EFGPR, $\calG$,  
the space $B$ of behavior 
strategy
profiles for $\calG$ is clearly a compact convex polytope in euclidean
space, $\real^m$,  where $m$ is the dimension of the vectors $b \in B$ that denote behavior
profiles.
Moreover, $B$ can clearly be expressed efficiently using a system of less 
than $3m$ linear inequalities (which define $B$ to be the set of vectors 
$b \in \real^m$ 
in which each local strategy 
$b_{i,j}$ forms a probability distribution on $\calA_{i,j}$).
For $\epsilon > 0$, let $B^\epsilon \subseteq B$ denote the 
polytope of behavior profiles defined by:
$$B^{\epsilon} = \{ b \in B \mid b_{i,j}(a) \geq \epsilon, \  
\mbox{for all $i \in [n]$,
$j \in [d_i]$ and $a \in \calA_{i,j}$} \}.$$

\begin{theorem}
\label{fixp-no-division}
For any EFGPR, $\calG$:
\begin{enumerate}
\item There is a function, $F^{\epsilon}_{\calG}(x): B \rightarrow B^{\epsilon}$,  given by a 
$\{+,*,\max\}$-circuit
computable in polynomial time from $\calG$, with the circuit having both $x$ 
and $\epsilon > 0$ as its inputs, such that for all fixed $0 < \epsilon < 1/m$
(where $m$ is the dimension of vectors $b \in B$), 
every Brouwer fixed point of the function $F^{\epsilon}_{\calG}(x)$ is a $\epsilon$-PE
of $\calG$. In particular, the problem of computing an 
extensive form $\epsilon$-perfect equilibrium 
for a given EFGPR is in \FIXP.

\item 
There is a function, $H^{\epsilon}_{\calG}(x): B \rightarrow B^{\epsilon}$,  given by a 
$\{+,*,\max\}$-circuit
computable in polynomial time from $\calG$, with the circuit having both $x$ 
and $\epsilon > 0$ as its inputs, such that for all fixed $0 < \epsilon < 1/m$
(where $m$ is the dimension of vectors $b \in B$), 
every Brouwer fixed point of the function $H^{\epsilon}_{\calG}(x)$ is a $\epsilon$-QPE
of $\calG$. In particular, the problem of computing a $\epsilon$-QPE
for a given EFGPR is in \FIXP.
\end{enumerate}
\end{theorem}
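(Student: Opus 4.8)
For Part 1 the plan is to \emph{adapt} the fixed-point map constructed in \cite{EHMS14} for $\epsilon$-perfect equilibria of normal form games, applying it to the agent normal form $\Agent\Normal(\calG)$. Recall (Proposition~\ref{prop:local-opt-global-opt}) that a behavior profile $b\in B$ is an $\epsilon$-PE of $\calG$ precisely when it is a mixed $\epsilon$-PE of $\Agent\Normal(\calG)$, and that the mixed profiles of $\Agent\Normal(\calG)$ are exactly the behavior profiles in $B$. The obstacle to applying the \cite{EHMS14} construction directly is that $\Agent\Normal(\calG)$ is exponentially large; but the only thing that construction needs from the game is the ability to evaluate, at a given profile $x$, the expected payoffs obtained by each agent when it unilaterally switches to a pure action. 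For agent $(i,j)$ these are exactly the values $U_i(x\mid \pi^a_{i,j})$ for $a\in\calA_{i,j}$, and by Proposition~\ref{prop:expected-poly} each of these is a multilinear polynomial in $x$ of polynomial encoding size, hence computable by a small $\{+,*\}$-circuit. Feeding these polynomial ``payoff oracles'' into the \cite{EHMS14} map yields $F^\epsilon_\calG(x)$: schematically, from $x$ one computes the local best-response regrets $g_{i,j,a}(x):=\max\bigl(0,\ \max_{a'\in\calA_{i,j}}U_i(x\mid\pi^{a'}_{i,j})-U_i(x\mid\pi^a_{i,j})\bigr)$ with a $\{+,*,\max\}$-circuit, and then applies the \cite{EHMS14} ``$\epsilon$-retraction'' that reassembles, information set by information set, a new local distribution lying in the $\epsilon$-interior of the simplex; the parameter $\epsilon>0$ appears as a circuit input, and the standing hypothesis $0<\epsilon<1/m$ guarantees $B^\epsilon\neq\emptyset$, so the retraction is well defined and continuous. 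The resulting circuit has size polynomial in $|\calG|$ and maps $B$ into $B^\epsilon\subseteq B$.

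The correctness of $F^\epsilon_\calG$ then follows from that of the \cite{EHMS14} map: by Brouwer's theorem it has a fixed point, any fixed point lies in $B^\epsilon$ and is therefore fully mixed, and the retraction is designed so that at a fixed point any action receiving probability strictly greater than $\epsilon$ must have regret $0$, i.e.\ be a local best response; this is exactly the definition of an $\epsilon$-PE of $\Agent\Normal(\calG)$, hence of $\calG$ by Proposition~\ref{prop:local-opt-global-opt}. Since $F^\epsilon_\calG$ is a $\{+,*,\max\}$-circuit self-map of the explicitly presented polytope $B$, and these gates suffice to characterize $\FIXP$ \cite{EY07}, this places the computation of an $\epsilon$-PE of $\calG$ (with $\epsilon$ part of the input, indeed even given succinctly by a circuit) in $\FIXP$.

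For Part 2 the plan is similar in spirit but requires a new ingredient, since the definition of $\epsilon$-QPE compares, for actions $a,a'\in\calA_{i,j}$, the \emph{continuation-optimal} values $M_{i,j,a}(x):=\max_{b'_i\in B_i}U_i\bigl(x\mid_{(i,j)}(b'_i\mid\pi^a_{i,j})\bigr)$ rather than immediate payoffs. The key lemma to establish is that each $M_{i,j,a}(x)$ is computable by a $\{+,*,\max\}$-circuit of size polynomial in $|\calG|$. I would prove this by dynamic programming over the information set forest $\calF_i$: because $U_i(x)$ is multilinear (Proposition~\ref{prop:expected-poly}) and only the local strategies $b'_{i,j'}$ at $j$ and at its descendants in $\calF_i$ are free in $M_{i,j,a}(x)$, the maximum over $b'_i$ is attained at pure continuations and decomposes along $\calF_i$; hence processing the sub-forest of $\calF_i$ rooted at $j$ from its leaves upward --- using the truncations $\calF^m_i$ and the partial-substitution polynomials $U_i(x\mid_m b_i)$ already set up in Section~\ref{sec:background} --- computes $M_{i,j,a}(x)$ with a polynomial number of $+$, $*$ and $\max$ gates. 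Given these circuits, one defines $H^\epsilon_\calG(x)$ from the ``quasi-perfect regrets'' $\widehat g_{i,j,a}(x):=\max\bigl(0,\ \max_{a'\in\calA_{i,j}}M_{i,j,a'}(x)-M_{i,j,a}(x)\bigr)$ by the same $\epsilon$-retraction used for $F^\epsilon_\calG$; a fixed point again lies in $B^\epsilon$, and the retraction forces any action with probability exceeding $\epsilon$ to satisfy $\widehat g_{i,j,a}(x)=0$, which is precisely condition (b.) in the definition of $\epsilon$-QPE. Membership of $\epsilon$-QPE computation in $\FIXP$ then follows exactly as for Part 1.

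I expect the main obstacle to be the dynamic-programming lemma for $M_{i,j,a}(x)$ in Part 2, together with verifying the accompanying (novel) fixed-point characterization: one must carefully reconcile the fact that a behavior strategy is constant across the nodes of an information set, that the continuation value depends on the other players' (fixed) strategies only through realization probabilities, and that some nodes and information sets may be reached with probability $0$ under $x$ --- all while keeping the $\{+,*,\max\}$-circuit polynomially sized. By contrast, Part 1 should be largely a matter of checking that the \cite{EHMS14} construction survives replacing the payoff table of $\Agent\Normal(\calG)$ by the polynomial oracles of Proposition~\ref{prop:expected-poly}.
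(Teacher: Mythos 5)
Your Part~1 is essentially the paper's own argument: Proposition~\ref{prop:local-opt-global-opt} reduces $\epsilon$-PE of $\calG$ to $\epsilon$-PE of $\Agent\Normal(\calG)$, Proposition~\ref{prop:expected-poly} gives small circuits for the agent payoffs $U_i(x\mid\pi^a_{i,j})$, and these are plugged into the map of \cite{EHMS14} (whose normalizing threshold $t_{i,j}(x)$ the paper realizes with sorting networks); your ``regret plus $\epsilon$-retraction'' paraphrase is not literally that map, but since you invoke the cited construction as a black box this is immaterial.

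For Part~2 you keep the same architecture but replace the key lemma. The paper works with the \emph{conditional} continuation values $\MU^{j,a}_i(x)$ (Lemma~\ref{lem:max-expected-poly}): its dynamic program over $\calF_i$ uses the conditional node weights $\Prob_x(u\mid I_{i,j})$, hence contains division gates, is only defined on $B^{>0}$, and forces the auxiliary map $\Normalize^{\epsilon}$ in the definition of $H^{\epsilon}_\calG$; equivalence with the stated definition of $\epsilon$-QPE is then recovered via (\ref{eq:equiv-def-of-qpe}). You instead propose to compute the \emph{unconditional} maxima $M_{i,j,a}(x)=\max_{b'_i}U_i(x\mid_{(i,j)}(b'_i\mid\pi^a_{i,j}))$, which are exactly the quantities compared in the paper's primary definition of $\epsilon$-QPE, so no equivalence step is needed; moreover, if the bottom-up dynamic program is phrased with unconditional weights (realization probabilities in which player $i$'s substituted action probabilities are set to $1$), it avoids division and the normalization device altogether, matching the $\{+,*,\max\}$ claim of the theorem more literally than the paper's own proof, which actually uses a $\{+,*,/,\max\}$-circuit for $H^{\epsilon}_\calG$. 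The cost is that the entire burden falls on the decomposition step you flag: one must show that the single shared local strategy at a descendant information set $I_{i,j'}$ (whose nodes may lie below different nodes of $I_{i,j}$) is compatible with backward induction on $\calF_i$; this holds because, by perfect recall, the relative weights of the nodes of $I_{i,j'}$ are independent of player $i$'s own (optimized) choices and sibling subtrees of $\calF_i$ occupy node-disjoint parts of the game tree, so the locally optimal action is well defined — a naive exchange of $\max$ with the sum over $u\in I_{i,j}$ would be wrong, and without perfect recall the maximum itself is NP-hard, as the one-player example cited in the introduction shows. Once that lemma is in place, your fixed-point verification and the $\FIXP$ membership conclusion go through exactly as in Lemma~\ref{new-nash} and Lemma~\ref{new-nash-circuit}.
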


As mentioned, the proof of Part (1.) of Theorem \ref{fixp-no-division}
is very similar to the proof of
the analogous result for
$\epsilon$-PEs of NFGs given in \cite{EHMS14},
which itself builds on a fixed point characterization of Nash equilibria
from \cite{EY07}.
By Proposition \ref{prop:local-opt-global-opt},
to prove Theorem \ref{fixp-no-division} it suffices to 
find $\epsilon$-PEs of the agent normal form $\Agent\Normal(\calG)$,
because these are the same as $\epsilon$-PEs of $\calG$.
We can not ``construct''  $\Agent\Normal(\calG)$, because it
has size exponential in $\calG$, but we  do not need to.   
We now give the detailed proof for both parts.
Although the proof of Part (1.) is very similar to 
the analogous proof in \cite{EHMS14}, the proof of Part (2.) 
also involves additional constructions and does not appeal to the relationship with $\Agent\Normal(\calG)$.
To facilitate our proof of Part (2.), 
we need some definitions,
and an alternative characterization of $\epsilon$-QPE.

Note that for any fully mixed profile $b \in B^{> 0}$, for any player $i$, 
$j \in [d_i]$, 
and any node $u \in I_{i,j}$, the conditional probability 
$\Prob_b(u | I_{i,j})$  is well-defined, because $\Prob_b(I_{i,i}) > 0$.
Furthermore, importantly, given that $\Prob_b(I_{i,j}) > 0$, 
$\Prob_b(u | I_{i,j})$ is otherwise
``independent'' of $b_i$. It
only depends on the behavior strategies $b_{-i}$ of players other than $i$, 
because, by perfect recall, for
all nodes $u \in I_{i,j}$ the visible history for player $i$ is the same: $Y_{i,j}$.
For $b \in B^{> 0}$, for $i \in [n]$, and for $j \in [d_i]$, we use $U^j_i(b)$ to denote the {\em conditional expected payoff}
to player $i$, conditioned on reaching information set $I_{i,j}$, under profile $b$. 
Again, 
this conditional expectation is well-defined, since $b \in B^{> 0}$. 
Furthermore, again, except for the fact that $\Prob_b(I_{i,j}) > 0$,
the conditional expectation $U^j_i(b)$ is independent of those 
local strategy $b_{i,j'}$ in $b_i$ for information 
sets $I_{i,j'}$ such that the node
$j' \in V^{\calF_i}$ of the information set forest $\calF_i$ is not in 
the subtree of 
$\calF_i$ rooted at node $j \in V^{\calF_i}$. It only
depends on those local strategies $b_{i,j''}$ where $j'' \in V^{\calF_i}$ 
is a node in the subtree of 
$\calF_i$ rooted at $j$.
For $i \in [n]$, $j \in [d_i]$ and $a \in \calA_{i,j}$, and for 
$b \in B^{> 0}$, we define
$$\MU^{j,a}_i(b) := \max_{b'_i \in B_i} U^j_i(b \mid_{(i,j)} (b'_i \mid \pi^a_{i,j})).$$
Thus $\MU^{j,a}_i(b)$ denotes the maximum conditional expected payoff to player $i$, 
conditioned
on reaching information set $I_{i,j}$  using $b$, where 
player $i$ switches to action $a \in \calA_{i,j}$ 
at $I_{i,j}$, and chooses the rest of its strategy $b'_i$ (below information set $I_{i,j}$ in $\calF_i$) so as to maximize
$U^j_i(b \mid_{(i,j)} (b'_i \mid \pi^a_{i,j}))$.
Note that, since $b \in B^{> 0}$,   $\MU^{j,a}_i(b)$ is both well defined and 
``independent'' of $b_i$: it only matters that $\Prob_b(I_{i,j}) > 0$.
Now, observe that, for any $b \in B^{> 0}$, 
for any $i \in [n]$, $j \in [d_i]$, and 
for any $a , a' \in \calA_{i,j}$,    we have:
\begin{equation}
\label{eq:equiv-def-of-qpe}
(\  \MU^{j,a}_i(b)  < \MU^{j,a'}_i(b) \ )  \quad  \Longleftrightarrow   \quad
(\  (\max_{b'_{i} \in B_{i}} U_i(b \mid_{(i,j)} (b'_{i} \mid \pi^{a}_{i,j}))) < 
 (\max_{b'_{i} \in B_{i}} U_i(b \mid_{(i,j)} (b'_{i} \mid \pi^{a'}_{i,j})))
\end{equation}
This equivalence 
holds because the profiles 
$(b \mid_{(i,j)} (b'_{i} \mid \pi^{a}_{i,j})))$  and  
$(b \mid_{(i,j)} (b'_{i} \mid \pi^{a'}_{i,j})))$ differ only within 
player $i$'s local strategies within $b_i$ at information sets $j'$
in the subtree of $\calF_i$ rooted at $j \in V^{\calF_i}$.
Thus, since $\Prob_b(I_{i,j}) > 0$, the strict inequality on the left
of (\ref{eq:equiv-def-of-qpe}) holds if and only if the strict
inequality on the right of (\ref{eq:equiv-def-of-qpe}) holds.
Thus, an alternative definition for a profile $b$ to be a 
{\em $\epsilon$-quasi-perfect equilibrium} 
($\epsilon$-QPE),  is this:  (a.) $b \in B^{> 0}$, and (b.)  
for all $i \in [n]$, $j \in [d_i]$, and $a, a' \in {\mathcal A}_{i,j}$, 
if  $\MU^{j,a}_i(b)  < \MU^{j,a'}_i(b)$,   then  $b_{i,j}(a) \leq \epsilon$.
We will exploit this alternative 
definition.\footnote{Indeed, this is one of the  
equivalent characterizations of $\epsilon$-QPE that was originally given by van Damme 
in \cite{vanDamme84}.   We used a different definition for clarity, 
and for compatibility
with the way we defined $\epsilon$-PE.
In fact, similarly van Damme \cite{vanDamme84} used
a similar equivalent characterization of  $\epsilon$-PE
for an EFGPR, defined as follows:  
(a.) $b \in B^{> 0}$, and $(b.)$, for all $i \in [n]$, $j \in [d_i]$, and $a, a' \in \calA_{i,j}$,
if $U^j_i(b \mid \pi^a_{i,j}) < U^j_i(b \mid \pi^{a'}_{i,j})$ then $b_{i,j}(a) \leq \epsilon$.
Again, it is clear that this is equivalent to the definition we have given 
for $\epsilon$-PE.}

Consider a EFGPR, $\calG$, and
let $b \in B$ have dimension $m$ as vectors in Euclidean space.
Suppose we are given $0 < \epsilon < 1/m$.
For a vector $x$ of variables corresponding to the coordinates of a 
behavior strategy $b \in B$, 
we let $v(x)$ 
be a $m$-vector such that for all $i \in [n]$, $j \in [d_i]$, and $a \in \calA_{i,j}$
$v(x)_{i,j,a} = U_i(x \mid \pi^{a}_{i,j}) = U_{(i,j)}(x \mid \pi^a_{i,j})$.
In other words, for all behavior profiles $b \in B$,  $v(b)_{i,j,a}$ 
is the expected payoff to agent $(i,j)$ in the agent normal form game $\Agent\Normal(\calG)$,
if all agents play according to $b$, except that agent $(i,j)$ switches 
to pure strategy $\pi^a_{i,j}$.
Note that by Proposition \ref{prop:expected-poly},  $v(x)_{i,j,a}$ can be 
expressed 
as a polynomial in the variables $x$ whose encoding size is polynomial  in 
$|\calG|$.

Likewise, let us define $v'(x)_{i,j,a} :=  \MU^{j,a}_i(x)$.
We shall show, in  Lemma \ref{lem:max-expected-poly} below,
that the function $\MU^{j,a}_i(x)$,  defined over $B^{> 0}$, 
can indeed be expressed 
as a $\{+, - , *, /, \max , \min \}$-formula in the variables $x$, where 
the encoding size of the formula is polynomial  in 
$|\calG|$.

\begin{lemma}
\label{lem:max-expected-poly}
Given a EFGPR, $\calG$, 
for all players $i \in [n]$, all information sets $j \in [d_i]$,
and all actions $a \in \calA_{i,j}$,  
there is a 
$\{+, - , *, / , \max\}$-formula 
 $v'(x)_{i,j,a}$ 
(i.e.,
a $\{+,-, *, / , \max , \min \}$-circuit with 
no re-use of subcircuits),  such that
the encoding size of $v'(x)_{i,j,a}$ is polynomial in $| \calG |$, 
and each $v'(x)_{i,j,a}$ can be constructed from $\calG$ in P-time,
and
such that for all fully mixed $b \in B^{> 0}$,   
$v'(b)_{i,j,a} = \MU^{j,a}_i(b)$.
\end{lemma}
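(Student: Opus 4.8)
The plan is to reduce $\MU^{j,a}_i(b)$, on the domain $B^{>0}$, to a single division of one polynomial by another, where the numerator is itself the optimal value of a single-maximizer (player $i$) perfect-recall optimization problem, and then to exhibit a polynomial-size backward-induction formula for that optimal value over the information set forest $\calF_i$.

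First I would remove the conditioning. Fixing $i,j,a$, recall that for $b \in B^{>0}$ we have $\Prob_b(I_{i,j}) = F_{I_{i,j}}(b) > 0$, and, by perfect recall, $\Prob_b(u\mid I_{i,j}) = F_u(b)/F_{I_{i,j}}(b)$ for $u \in I_{i,j}$ is a ratio independent of $b_i$ (the $b_i$-factors in $F_u$ and in $F_{I_{i,j}}$ agree and cancel, since all $u \in I_{i,j}$ have visible history $Y_{i,j}$). Writing $P_{ua}(y;x)$ for the expected payoff to player $i$ when play is at the child $ua$ of $u$, the opponents and chance move as prescribed by $x$, and player $i$ uses behavior strategy $y$ at the information sets below $I_{i,j}$, one checks that for all $b \in B^{>0}$ and all $b'_i \in B_i$, $U^j_i(b\mid_{(i,j)}(b'_i\mid\pi^a_{i,j})) = \bigl(\sum_{u\in I_{i,j}} F_u(x)\,P_{ua}(b'_i;x)\bigr)/F_{I_{i,j}}(x)$, whence on $B^{>0}$
\[ \MU^{j,a}_i(x) \;=\; \frac{\Psi_{i,j,a}(x)}{F_{I_{i,j}}(x)}, \qquad \Psi_{i,j,a}(x) \;:=\; \max_{b'_i\in B_i}\ \sum_{u\in I_{i,j}} F_u(x)\,P_{ua}(b'_i;x). \]
By Proposition~\ref{prop:expected-poly} (applied also to the relevant subtrees), $F_u$, $F_{I_{i,j}}$ and $P_{ua}(y;x)$ are multilinear polynomials of encoding size polynomial in $|\calG|$, all P-time computable from $\calG$. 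So $g(y,x) := \sum_{u\in I_{i,j}} F_u(x)\,P_{ua}(y;x)$ is a polynomial of polynomial size, multilinear in the simplex blocks $y = (y_{i,j'})$ of player $i$'s local strategies, with $y_{i,j'}$ occurring only for $j'$ in $D(j,a)$, the set of information sets of player $i$ that are descendants of $j$ in $\calF_i$ along an edge labelled $a$; and the only division needed is the final one by $F_{I_{i,j}}(x)$.

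Next I would build a $\{+,*,\max\}$-formula for $\Psi_{i,j,a}(x) = \max_{y}g(y,x)$, the max being over $y \in \prod_{j'\in D(j,a)} B_{i,j'}$. Since $g$ is multilinear in the blocks $y_{i,j'}$, this equals the iterated maximum over the simplex vertices, i.e.\ over pure actions at each information set. The structural fact I would use — from perfect recall and the forest structure of $\calF_i$ — is that for $j'\in D(j,a)$ and $a'\in\calA_{i,j'}$, the monomials of $g$ containing $y_{i,j'}(a')$ use, among the $y$-blocks, only $y_{i,j''}$ for $j''$ a descendant of $j'$ in $\calF_i$ reached via the edge labelled $a'$; these sets of blocks are pairwise disjoint over $a'$, and disjoint from those occurring in monomials of $g$ that do not involve $j'$ at all. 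Hence the iterated maximum can be performed one information set at a time, processing $D(j,a)$ in order of increasing $\calF_i$-height: when $j'$ is reached, all strictly deeper player-$i$ information sets have already been folded (any player-$i$ node strictly below $I_{i,j'}$ in the game tree lies at a strict $\calF_i$-descendant of $j'$, by perfect recall), so the contribution routed through each $a'\in\calA_{i,j'}$ is by then a $\{+,*,\max\}$-formula in $x$ alone, and linearity in $y_{i,j'}$ lets us replace $\sum_{a'} y_{i,j'}(a')\cdot(\text{contribution through }a')$ by $\max_{a'}(\text{contribution through }a')$. This is a backward induction over $\calF_i$ restricted to $D(j,a)$; since that forest, the game tree, and the leaf set are all of size polynomial in $|\calG|$, the resulting formula $\widehat{v}(x)_{i,j,a}$ for $\Psi_{i,j,a}(x)$ has polynomial encoding size, has no re-use of subcircuits (it is shaped like $\calF_i$ restricted to $D(j,a)$, with the coefficients $F_u(x)$ and the per-leaf sub-sums hanging off it as disjoint pieces), and is P-time constructible. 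Putting $v'(x)_{i,j,a} := \widehat{v}(x)_{i,j,a}/F_{I_{i,j}}(x)$ then gives a $\{+,-,*,/,\max\}$-formula (the only $-$ and $/$ being this single division) of polynomial size, with no re-use of subcircuits, P-time constructible, and equal to $\MU^{j,a}_i(b)$ for every $b\in B^{>0}$, where $F_{I_{i,j}}(b)>0$.

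The main obstacle I anticipate is precisely the maximization in the previous paragraph: it must be carried out at the level of information sets, not of individual game-tree nodes. A naive node-by-node backward induction (a $\max$ over actions at each player-$i$ node of the region) would instead compute the value of the perfect-information relaxation, which is strictly larger in general, because the nodes of a single information set of player $i$ are scattered over the region yet player $i$ must commit to one local strategy there; simple perfect-recall examples make this gap real. What makes the information-set-level maximization both correct and expressible by a polynomial-size formula is the variable-disjointness of the action-branches noted above, which lets the folding proceed incrementally along $\calF_i$; getting that bookkeeping right is the technical heart of the argument.
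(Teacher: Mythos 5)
Your proposal is correct and is in essence the same proof as the paper's: the paper also expresses $\MU^{j,a}_i(x)$ by a bottom-up dynamic program over the information set forest $\calF_i$, resting on exactly the structural facts you isolate (perfect recall, the forest structure of $\calF_i$, and the disjointness of the action-branches), which is what licenses folding the maximization information set by information set rather than node by node. The only difference is bookkeeping: the paper carries conditional, belief-weighted values at every level of the recursion, so divisions by $\Prob_x(I_{i,j'})$ appear throughout the formula, whereas you carry unconditional realization-weighted values and perform a single division by $F_{I_{i,j}}(x)$ at the very end; both variants have encoding size polynomial in $|\calG|$ and both retain the property used later (in Lemma \ref{lem:poly-con-poly-comp-qpe} and Section \ref{sec:delta-almost-epsilon-PE}) that every division gate has an information-set realization probability as its denominator.
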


\begin{proof}
The basic idea of the proof is that, given $b \in B^{> 0}$, one can
compute $\MU^{j,a}_i(b)$ using dynamic programming,
by working ``bottom up'' on the
information set forest $\calF_i$ for player $i$.
Then the key observation is that
this dynamic program can actually be described by a $\{+,-, *,/, \max\}$-formula
which has encoding size only polynomial in $\calG$.

We next describe the dynamic program, and the resulting formula, in detail.
(We will later need to use facts about the detailed structure of the formula.)
Consider the information set forest $\calF_i$ for player $i$.
Let $\Leaves_{\calF_i}$ denote the set of leaves of $\calF_i$.
Let $\Internal_{\calF_i}$ denote the set of internal nodes of $\calF_i$.
For a node $j \in [d_i] = V^{\calF_i}$,  
and for $a \in \calA_{i,j}$, let us denote
the set of $a$-children of $j$ in $\calF_i$ by:
$\child^a_{\calF_i}(j) = \{ j' \in V^{\calF_i} \mid  (j, a , j') \in E^{\calF_i} \}$.
For an internal node $u \in \Internal$, and for $a \in \Act(u)$,
let $\LLeaves^{u,a} = \{ z \in \Leaves \mid   u a  \sqsubseteq  z \ \& \
\forall m \ \text{such that}  \ ua \sqsubseteq z[m],   \ z[m] \not\in P_{\Pl(u)} \}$.  
In other words, $\LLeaves^{u,a}$ denotes the set of
leaves $z$ of the game tree $T$ that are in the subtree rooted at $ua$,
and such that there is no node on the path from $ua$ to $z$ which
belongs to the same player $\Pl(u)$ that $u$ belongs to.

For $u, v \in V$,
let $\Prob_{b}(v \mid u)$ 
denote
the probability that, using profile $b$,
conditioned on reaching node $u$, 
the play eventually thereafter hits node $v$.
For $i \in [n]$ and $j, j' \in [d_i]$, 
let $\Prob_{b}( I_{i,j'} \mid I_{i,j})$ 
denote conditioned probability of 
reaching information set $I_{i,j'}$,
conditioned on reaching $I_{i,j}$,  when
using profile $b$.

We can define $v'(x)_{i,j,a} := \MU^{j,a}_i(x)$ inductively in a ``bottom up''
fashion based on the forest $\calF_i$, based on the
{\em height}, $\height^{\calF_i}_j$, of the subtree rooted at node $j \in V^{\calF_i} = [d_i]$
of $\calF_i$.
Recall that $\Prob_x(u \mid I_{i,j}) =  \frac{\Prob_x(u)}{\Prob_x(I_{i,j})}$,
is defined for all $x \in B^{> 0}$,
and by Proposition \ref{prop:expected-poly}
 both the numerator and denominator are given by polynomials in $x$
with ``small'' encoding size (polynomial in $|\calG|$).
Note that likewise, for $a \in \calA_{i,j}$,
$\Prob_{(x \mid \pi^a_{i,j})}(v \mid u)$ is easily defined by 
a weighted monomial
over the variables $x$ whose encoding size is polynomial in $|\calG|$.
Furthermore if the node $j' \in V^{\calF_i}$ is 
a child of the node 
$j \in V^{\calF_i}$
in the forest $\calF_i$, then 
$$\Prob_{(x \mid \pi^a_{i,j})} ( I_{i,j'} \mid I_{i,j}) =   
\sum_{ u \in I_{i,j}} \Prob_x(u \mid I_{i,j}) \cdot
\sum_{v \in I_{i,j'}} \Prob_{(x \mid \pi^a_{i,j})}(v \mid u).$$
Thus $\Prob_{(x \mid \pi^a_{i,j})}(I_{i,j'} \mid I_{i,j})$ is also 
described by a formula over the variables $x$ with
encoding size polynomial in $|\calG|$.
We can now describe a dynamic program for computing 
$\MU^{j,a}_i(x)$, for all $i \in [n]$, $j \in [d_i]$,
and $a \in \calA_{i,j}$:

\begin{equation}
\label{eq:dyn-program}
 \MU^{j,a}_i(x) := \left\{  \begin{array}{ll} 
\sum_{u \in I_{i,j}} \Prob_{x}(u \mid I_{i,j}) \cdot 
\sum_{z \in \LLeaves^{u,a}}   \Prob_{(x \mid \pi^a_{i,j})}(z \mid u) \cdot
r_i(z) \ ,  &    \mbox{if  $j \in \Leaves_{\calF_i}$}\\
(\ \sum_{j' \in \child^a_{\calF_i}(j)}  \Prob_{(x \mid \pi^a_{i,j})}( I_{i,j'} \mid I_{i,j})
\cdot (\max_{a' \in \calA_{i,j'}}  \MU^{j',a'}_i(x)) \ )  +\\    
\quad \quad \sum_{u \in I_{i,j}} \Prob_{x}(u \mid I_{i,j}) \cdot 
\sum_{z \in \LLeaves^{u,a}} 
\Prob_{(x \mid \pi^a_{i,j})}(z \mid u) \cdot
r_i(z)  \ , 
& \mbox{if  $j \in \Internal_{\calF_i}$}
\end{array} \right. 
\end{equation}
It is clear that (\ref{eq:dyn-program}) both defines a
dynamic program for computing
$\MU^{j,a}_i(b)$,  given $b \in B^{> 0}$,
and at the same time  $\MU^{j,a}_i(x)$ defines
a $\{+,-,*,/,\max \}$-formula with variables $x$,
which when evaluated at $b \in B^{> 0}$ yields $\MU^{j,a}_i(b)$.
Furthermore, the encoding size of the formulas $\MU^{j,a}_i(x)$ is
polynomial in $|\calG|$.   This can be seen by noting, firstly,
that all the constituant parts of the inductively defined formula for
$\MU^{j,a}_i(x)$ are given by formulas with 
encoding size polynomial in $|\calG|$,
and furthermore since the inductive definition 
works ``bottom up'' on the forest $\calF_i$, there is no
re-use of subformulas in this inductive definition, i.e., it indeed
defines a formula, not a circuit, and the size of the formula is
polynomial in $|\calG| \times |V^{\calF_i}| \leq |\calG|^2$.
(Later, in Section \ref{sec:delta-almost-epsilon-PE}, for ``almost'' approximation of a QPE,  
 we will also use the fact that the only use of division
gate in this formula is in cases where the denominator evaluates
to $\Prob_b(I_{i,j})$ for some information set $I_{i,j}$.) 
\qed
\end{proof}

Let $h(x) = x+v(x)$,   and let $h'(x) = x + v'(x)$.
For each agent $(i,j)$, and for fixed $x \in B$, consider the function 
$f_{i,j,x}(t) = \sum_{a \in \calA_{i,j}} \max (h_{i,j,a}(x) -t,\epsilon)$.
Likewise, for $x \in B^{> 0}$, 
consider the function $f'_{i,j,x}(t)  = \sum_{a \in \calA_{i,j}} \max (h'_{i,j,a}(x) -t,\epsilon)$.
Clearly, both $f_{i,j,x}(t)$ and $f'_{i,j,x}(t)$ are continuous, piecewise linear function of $t$.
The functions are strictly decreasing as $t$ ranges from $-\infty$,
where $f_{i,j,x}(t) = + \infty$ (respectively, $f'_{i,j,x}(t) = +\infty$), up to 
$\max_{a \in \calA_{i,j}} h_{i,j,a}(x) - \epsilon$  (respectively,
 $\max_{a \in \calA_{i,j}} h'_{i,j,a}(x) - \epsilon$),
where $f_{i,j,x}(t) = |\calA_{i,j}| \cdot \epsilon$  (respectively, $f'_{i,j,x}(t) =
 |\calA_{i,j}| \cdot \epsilon$).
Since we have $|\calA_{i,j}| \cdot \epsilon \leq m \cdot \epsilon < 1$,
there is a unique value of $t$, which depends on $x$, call it $t_{i,j}(x)$
(call it, $t'_{i,j}(x)$, respectively) , where $f_{i,j,x}(t_{i,j}(x)) =1$
(where $f'_{i,j,x}(t'_{i,j}(x)) = 1$).

The functions $F^{\epsilon}_{\calG} : B  \rightarrow B^{\epsilon}$  and 
$H^\epsilon_{\calG}: B \rightarrow B^{\epsilon}$ are defined as 
follows.  First we define $F^{\epsilon}_{\calG}$:
\begin{equation}
\label{eq:fixed-point-functions}
F^{\epsilon}_\calG(x)_{i,j,a} = \max (h_{i,j,a}(x) -t_{i,j}(x),\epsilon) 
\end{equation}
for every $i=1,\ldots,n$, and $j \in [d_i]$, and $a \in \calA_{i,j}$.

To define $H^\epsilon_{\calG}: B \rightarrow B^{\epsilon}$,
care is needed since $v'(x)_{i,j,a}$ is only defined for 
$x \in B^{> 0}$.  To address this, we use an auxiliary normalizing function.
For $\epsilon > 0$,
$\Normalize^{\epsilon}:  B \rightarrow B^{> 0}$, defined as follows:
$$  \Normalize^\epsilon(x)_{i,j,a} =   \frac{\max ( x_{i,j,a}, \epsilon)}{\sum_{a' \in \calA_{i,j}} 
\max ( x_{i,j,a'} , \epsilon)}$$
$\Normalize^{\epsilon}$ clearly does map $B$ to $B^{> 0}$.
Furthermore, importantly, note that for all $b' \in B^{\epsilon}$,   $\Normalize^{\epsilon}(b') = b'$.
We only use $\Normalize^{\epsilon}$ as a tool to ensure the function
$H^{\epsilon}_G$ is defined for all $b \in B$.    
The range, and thus
the fixed points, of $H^{\epsilon}_G$
lies within $B^{\epsilon}$, and on $B^{\epsilon}$  the 
function $\Normalize^{\epsilon}(x)$ 
is the trivial identity function.
We define $H^{\epsilon}_G : B \rightarrow B^{\epsilon}$ as follows:

\begin{equation}
\label{eq:qpe-fixed-point-functions}
H^{\epsilon}_\calG(x)_{i,j,a} = \max (h'_{i,j,a}(\Normalize^{\epsilon}(x)) -t'_{i,j}(\Normalize^{\epsilon}(x)),\epsilon) 
\end{equation}
for every $i=1,\ldots,n$, and $j \in [d_i]$, and $a \in \calA_{i,j}$.

From our choice of  $t_{i,j}(x)$ and $t'_{i,j}(\Normalize^{\epsilon}(x))$, it follows that
$\sum_{a \in \calA_{i,j}} F^{\epsilon}_{\calG}(x)_{i,j,a} =1$
and also that $\sum_{a \in \calA_{i,j}} H^{\epsilon}_{\calG}(x)_{i,j,a} = 1$, 
for all $i \in [n]$ and $j \in [d_i]$.
Thus, for any  behavior profile, $x \in B$, we have 
$F^{\epsilon}_{\calG}(x) \in  B^{\epsilon}$ and $H^{\epsilon}_{\calG}(x) \in B^{\epsilon}$.
So both $F^{\epsilon}_{\calG}$ and $H^{\epsilon}_{\calG}$ indeed map $B$ to $B^{\epsilon}$, and since they are 
clearly also continuous maps, by Brouwer's theorem, they both 
have a fixed point in $B^{\epsilon}$.\footnote{The reason we  specify
the domain of these functions as $B$ instead of $B^{\epsilon}$ is technical.
To place the approximation problems for PE and QPE in $\FIXP_a$, we 
shall need 
make $\epsilon > 0$  {\em very very} small, and we do so by using 
a polynomial sized algebraic circuit to define it.  
However, we shall also need the function domains to be definable by 
linear inequalities having encoding size only polynomial in $|\calG|$.
Both can be achieved by retaining the domain $B$.}

\begin{lemma} 
\label{new-nash}
For $0 < \epsilon < 1/m$:

\begin{enumerate} 
\item Every fixed point of the function 
$F^{\epsilon}_{\calG} : B \rightarrow B^{\epsilon}$
is an $\epsilon$-PE of $\Agent\Normal(\calG)$, and thus also of $\calG$.

\item Every fixed point of the function
$H^{\epsilon}_{\calG} : B \rightarrow B^{\epsilon}$ is a $\epsilon$-QPE of $\calG$.
\end{enumerate}
\end{lemma}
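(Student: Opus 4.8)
The plan is to treat both parts uniformly, exploiting the ``water‑filling'' structure shared by $F^\epsilon_\calG$ and $H^\epsilon_\calG$. For Part (1), let $b$ be a fixed point of $F^\epsilon_\calG$. Since the range of $F^\epsilon_\calG$ lies in $B^{\epsilon}$, we immediately get $b \in B^{\epsilon} \subseteq B^{> 0}$, so $b$ is fully mixed and condition (a) of the definition of $\epsilon$-PE holds. Next I would write the fixed‑point equation coordinatewise: for every $i,j,a$,
\[
b_{i,j,a} \;=\; \max\bigl(b_{i,j,a} + v(b)_{i,j,a} - t_{i,j}(b),\ \epsilon\bigr).
\]
Splitting on two cases: if $b_{i,j,a} > \epsilon$, the maximum is strictly larger than $\epsilon$, so it is attained by its first argument, forcing $v(b)_{i,j,a} = t_{i,j}(b)$; if $b_{i,j,a} = \epsilon$, then $\epsilon + v(b)_{i,j,a} - t_{i,j}(b) \le \epsilon$, forcing $v(b)_{i,j,a} \le t_{i,j}(b)$. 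Since $|\calA_{i,j}|\cdot\epsilon \le m\epsilon < 1 = \sum_{a}b_{i,j,a}$, at least one action $a^*$ has $b_{i,j,a^*} > \epsilon$; combining the two cases then yields the key identity $t_{i,j}(b) = \max_{a' \in \calA_{i,j}} v(b)_{i,j,a'}$.

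Recalling $v(b)_{i,j,a} = U_i(b \mid \pi^a_{i,j}) = U_{(i,j)}(b \mid \pi^a_{i,j})$, this says exactly that whenever $b_{i,j}(a) > \epsilon$ we have $U_{(i,j)}(b \mid \pi^a_{i,j}) = \max_{a'} U_{(i,j)}(b \mid \pi^{a'}_{i,j})$, i.e. $\pi^a_{i,j}$ is a best response for agent $(i,j)$ in $\Agent\Normal(\calG)$ against $b$. Hence $b$ is a mixed $\epsilon$-PE of $\Agent\Normal(\calG)$, and so by Proposition \ref{prop:local-opt-global-opt} also an $\epsilon$-PE of $\calG$, proving Part (1).

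For Part (2), let $b$ be a fixed point of $H^\epsilon_\calG$. As before $b \in B^{\epsilon}$, hence $b \in B^{> 0}$ and, crucially, $\Normalize^{\epsilon}(b) = b$ since $\Normalize^{\epsilon}$ restricts to the identity on $B^{\epsilon}$; this collapses the fixed‑point equation to $b_{i,j,a} = \max(b_{i,j,a} + v'(b)_{i,j,a} - t'_{i,j}(b),\ \epsilon)$, where $v'(b)_{i,j,a} = \MU^{j,a}_i(b)$ is well defined by Lemma \ref{lem:max-expected-poly} because $b \in B^{> 0}$. The identical case analysis as in Part (1) gives $t'_{i,j}(b) = \max_{a' \in \calA_{i,j}} \MU^{j,a'}_i(b)$ and that $b_{i,j}(a) > \epsilon$ implies $\MU^{j,a}_i(b) = t'_{i,j}(b)$. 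Contrapositively, if $\MU^{j,a}_i(b) < \MU^{j,a'}_i(b)$ for some $a'$, then $\MU^{j,a}_i(b) < \max_{a''}\MU^{j,a''}_i(b) = t'_{i,j}(b)$, so we cannot have $b_{i,j}(a) > \epsilon$; thus $b_{i,j}(a) \le \epsilon$. Together with $b \in B^{> 0}$, this is precisely the alternative characterization of $\epsilon$-QPE justified by equation (\ref{eq:equiv-def-of-qpe}), so $b$ is an $\epsilon$-QPE of $\calG$.

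The argument is essentially a transcription of the fixed‑point calculation of \cite{EY07,EHMS14} to this setting, and I do not anticipate a serious obstacle. The two points I would spell out with care are: (i) the water‑filling identity $t_{i,j}(b) = \max_a v(b)_{i,j,a}$ (which requires noting, via $m\epsilon < 1$, that some coordinate strictly exceeds the floor $\epsilon$, and then combining the two sign cases of the $\max$); and (ii) in Part (2), the reduction $\Normalize^{\epsilon}(b) = b$ at a fixed point, which is exactly what licenses the use of Lemma \ref{lem:max-expected-poly} and sidesteps the fact that $v'$ is undefined outside $B^{> 0}$.
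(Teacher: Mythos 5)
Your proposal is correct and follows essentially the same route as the paper's proof: the coordinatewise case analysis on the fixed-point equation (forcing $v(b)_{i,j,a}=t_{i,j}(b)$ when $b_{i,j,a}>\epsilon$ and $v(b)_{i,j,a}\le t_{i,j}(b)$ when $b_{i,j,a}=\epsilon$), the appeal to Proposition \ref{prop:local-opt-global-opt} for Part (1), and the observation $\Normalize^{\epsilon}(b)=b$ together with the alternative characterization of $\epsilon$-QPE for Part (2). The extra identity $t_{i,j}(b)=\max_{a}v(b)_{i,j,a}$ via $m\epsilon<1$ is a harmless (and correct) elaboration that the paper leaves implicit.
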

\begin{proof}  The proof is essentially the same in both cases:
\begin{enumerate}
\item If $x$ is a fixed point of $F^{\epsilon}_\calG$, 
then $x \in B^{\epsilon}$ and
$x_{i,j,a}= \max (x_{i,j,a}+ v(x)_{i,j,a} -t_{i,j}(x),\epsilon)$
for all $(i,j,a)$. 
Recall that $v(x)_{i,j,a}= U_i(x \mid \pi^a_{i,j}) = U_{i,j}(x \mid \pi^a_{i,j})$ is the expected payoff for 
agent $(i,j)$ under profile $(x \mid \pi^a_{i,j})$.

Note that
the equation $x_{i,j,a}= \max (x_{i,j,a}+ U_i(x \mid \pi^a_{i,j})-t_{i,j}(x),\epsilon)$ 
implies that
$U_i(x \mid \pi^a_{i,j})= t_{i,j}(x)$ for all $i,j,a$ such that $x_{i,j,a}> \epsilon$,
and  that $U_i(x \mid \pi^a_{i,j}) \leq t_{i,j}(x)$ for all $i,j,a$ such that 
$x_{i,j,a}=\epsilon$.
Consequently, by definition, $x$ constitutes an $\epsilon$-PE.

\item  If $x$ is a fixed point of $H^{\epsilon}_\calG$, 
then $x \in B^{\epsilon}$, and thus $\Normalize^\epsilon(x) = x$.
Thus,  we have will
$x_{i,j,a}= \max (x_{i,j,a}+ v'(x)_{i,j,a} -t'_{i,j}(x),\epsilon)$
for all $(i,j,a)$, where $v'(x)_{i,j,a}=  \MU^{j,a}_i(x)$.

Note, again, that
the equation $x_{i,j,a}= \max (x_{i,j,a}+ \MU^{j,a}_i(x) -t'_{i,j}(x),\epsilon)$ 
implies that
$\MU^{j,a}_i(x) = t'_{i,j}(x)$ for all $i,j,a$ such that 
$x_{i,j,a} > \epsilon$,
and  that $\MU^{j,a}_i(x) \leq t'_{i,j}(x)$ for all $i,j,a$ 
such that 
$x_{i,j,a}= \epsilon$.
Consequently, by definition, $x$ constitutes an $\epsilon$-QPE.
\end{enumerate}

\vspace*{-0.2in}

\qed
\end{proof}

The following Lemma shows that
we can implement the functions $F^{\epsilon}_\calG(x)$ 
and $H^{\epsilon}_{\calG}(x)$ by a
circuit
which has $x$ and $\epsilon$ as inputs, by using
sorting networks.

\begin{lemma}
\label{new-nash-circuit}
Given $\calG$,
we can construct in polynomial time a  $\{+,*,\max\}$-circuit that computes the function $F^{\epsilon}_\calG(x)$,
where $x$ and $\epsilon > 0$ are inputs to the circuit.
Likewise, we can construct in P-time a $\{+,*,/, \max \}$-circuit that
computes the function $H^{\epsilon}_\calG(x)$, where $x$ and $\epsilon > 0$
are inputs to the circuit.
\end{lemma}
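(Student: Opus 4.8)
The plan is to assemble each of $F^{\epsilon}_\calG$ and $H^{\epsilon}_\calG$ from a handful of polynomial-size sub-circuits, the only non-routine ingredient being a sub-circuit that computes the thresholds $t_{i,j}(x)$ (resp.\ $t'_{i,j}(\cdot)$) via a sorting network. For $F^{\epsilon}_\calG$: by Proposition~\ref{prop:expected-poly} each $v(x)_{i,j,a}=U_i(x\mid\pi^{a}_{i,j})$ is a multilinear polynomial of encoding size polynomial in $|\calG|$, so $h_{i,j,a}(x)=x_{i,j,a}+v(x)_{i,j,a}$ is computed by a polynomial-size $\{+,*\}$-circuit (subtraction and multiplication by rational constants being special cases). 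For each information set $I_{i,j}$, with $K=|\calA_{i,j}|$, feed the $K$ numbers $(h_{i,j,a}(x))_{a\in\calA_{i,j}}$ into a sorting network: a comparator returning $(\max(p,q),\min(p,q))$ is realized by one $\max$-gate together with $\min(p,q)=p+q-\max(p,q)$, and a bubble-sort network with $O(K^2)$ comparators already has polynomial size, so the sorted outputs $g_1\ge g_2\ge\dots\ge g_K$ become available. For $k\in[K]$ set $t^{(k)}_{i,j}(x):=\tfrac1k\big(\sum_{\ell=1}^{k}g_\ell+(K-k)\epsilon-1\big)$ --- using $+$, multiplication of the accumulated sum by the rational constant $1/k$, and multiplication of the circuit input $\epsilon$ by the constant $K-k$ --- and put $t_{i,j}(x):=\max_{1\le k\le K}t^{(k)}_{i,j}(x)$ via a tree of $\max$-gates. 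The output gate is $F^{\epsilon}_\calG(x)_{i,j,a}:=\max(h_{i,j,a}(x)-t_{i,j}(x),\epsilon)$. Everything here is constructible from $\calG$ in polynomial time using only $\{+,*,\max\}$-gates.

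The one thing needing an argument is that this $t_{i,j}(x)$ really is the quantity defined in the text before the lemma, namely the unique root of $f_{i,j,x}(t)=1$. Write $\hat f_k(t):=\sum_{\ell=1}^{k}(g_\ell-t)+(K-k)\epsilon$ for $0\le k\le K$; then $f_{i,j,x}(t)=\sum_{a\in\calA_{i,j}}\max(h_{i,j,a}(x)-t,\epsilon)=\max_{0\le k\le K}\hat f_k(t)$, since the outer maximum is attained by taking $k$ equal to the number of indices $\ell$ with $g_\ell-t\ge\epsilon$. Each $\hat f_k$ with $k\ge1$ is affine with negative slope $-k$, and by construction $\hat f_k(t^{(k)}_{i,j}(x))=1$. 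Let $t^{*}$ be the unique root of $f_{i,j,x}$ (unique because, as noted in the text, $f_{i,j,x}$ is strictly decreasing on the relevant range, using $K\epsilon\le m\epsilon<1$). For every $k\ge1$ we have $\hat f_k(t^{*})\le\max_{k'}\hat f_{k'}(t^{*})=f_{i,j,x}(t^{*})=1=\hat f_k(t^{(k)}_{i,j}(x))$, so strict monotonicity of $\hat f_k$ yields $t^{*}\ge t^{(k)}_{i,j}(x)$; and since $f_{i,j,x}(t^{*})=1>K\epsilon=\hat f_0(t^{*})$, the outer maximum at $t^{*}$ is attained at some $k_0\ge1$, whence $\hat f_{k_0}(t^{*})=1=\hat f_{k_0}(t^{(k_0)}_{i,j}(x))$ forces $t^{*}=t^{(k_0)}_{i,j}(x)$. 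Therefore $t^{*}=\max_{1\le k\le K}t^{(k)}_{i,j}(x)$, as required, and then $\sum_{a}F^{\epsilon}_\calG(x)_{i,j,a}=f_{i,j,x}(t^{*})=1$ with each coordinate $\ge\epsilon$, recovering the claimed range $B\to B^{\epsilon}$.

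For $H^{\epsilon}_\calG$ I would first realize $\Normalize^{\epsilon}$ directly from its defining formula by a polynomial-size $\{+,/,\max\}$-circuit; for $x\in B$ every coordinate lies in $[0,1]$, so each denominator $\sum_{a'}\max(x_{i,j,a'},\epsilon)$ lies in $[\epsilon,K]$, no division by zero occurs, and the output lies in $B^{>0}$ (indeed with every coordinate at least $\epsilon/K$). Next, by Lemma~\ref{lem:max-expected-poly} each $v'(y)_{i,j,a}=\MU^{j,a}_i(y)$ is given by a polynomial-size $\{+,-,*,/,\max\}$-formula valid for all $y\in B^{>0}$, so composing it with $\Normalize^{\epsilon}$ yields a well-defined polynomial-size $\{+,*,/,\max\}$-circuit for $h'_{i,j,a}(\Normalize^{\epsilon}(x))=\Normalize^{\epsilon}(x)_{i,j,a}+v'(\Normalize^{\epsilon}(x))_{i,j,a}$; the threshold $t'_{i,j}(\Normalize^{\epsilon}(x))$ is computed by the identical sorting-network-plus-$\max$ block applied to $(h'_{i,j,a}(\Normalize^{\epsilon}(x)))_{a}$, and the root-uniqueness argument above applies verbatim. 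Outputting $\max(h'_{i,j,a}(\Normalize^{\epsilon}(x))-t'_{i,j}(\Normalize^{\epsilon}(x)),\epsilon)$ gives the desired polynomial-size $\{+,*,/,\max\}$-circuit. The only genuinely substantive step is the sorting-network realization of the thresholds together with the elementary fact that the sought root equals $\max_k t^{(k)}_{i,j}(x)$; the remaining work --- composing the circuits that Proposition~\ref{prop:expected-poly} and Lemma~\ref{lem:max-expected-poly} already provide, checking that $\Normalize^{\epsilon}$ keeps us inside $B^{>0}$ so the $v'$-formula is never evaluated off its domain, and verifying polynomial size --- is bookkeeping.
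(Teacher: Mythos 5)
Your construction coincides with the paper's: compute $h(x)=x+v(x)$ (and, for $H^{\epsilon}_\calG$, compose $\Normalize^{\epsilon}$ with the formulas from Lemma~\ref{lem:max-expected-poly} to get $h'$), sort each block $(h_{i,j,a}(x))_{a\in\calA_{i,j}}$ with a polynomial-size sorting network whose comparators are $\max$ together with $\min(p,q)=p+q-\max(p,q)$, set $t_{i,j}(x)$ to the maximum over $l$ of $\tfrac1l\bigl((\sum_{k\le l} z_{i,j,a_k})+(|\calA_{i,j}|-l)\epsilon-1\bigr)$, and output $\max(h_{i,j,a}(x)-t_{i,j}(x),\epsilon)$, exactly as in the paper. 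Your verification that this maximum is the unique solution of $f_{i,j,x}(t)=1$, via writing $f_{i,j,x}=\max_{0\le k\le K}\hat f_k$ with affine $\hat f_k$ and using their monotonicity, is a correct (and slightly cleaner) repackaging of the paper's interval-by-interval argument, so the proposal is correct and essentially the same as the paper's proof.
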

\begin{proof}

We define the circuits for both $F^{\epsilon}_\calG(x)$ and
$H^{\epsilon}_{\calG}(x)$ together, since they are defined very similarly.

Given a vector $x \in B$, and $\epsilon > 0$ as inputs,
the respective circuits first compute $y=h(x) = x+v(x)$,
and $y' = \Normalize^{\epsilon}(x) +  v'(\Normalize^{\epsilon}(x))$.
It follows from the definition of $v(x)$, $\Normalize^{\epsilon}(x)$, and 
$v'(x)$, and from Lemma \ref{lem:max-expected-poly}, that both
$y$ and $y'$ can be computed by a circuit 
using $\{+, *, /, \max \}$-gates which has size polynomial in $|\calG|$.
For each agent $(i,j)$, let $y_{i,j}$ be the corresponding subvector of $y$
induced by the (local) strategy of agent $(i,j)$. 
Likewise, let $y'_{i,j}$ be the corresponding subvector of $y'$.
Sort the vector $y_{i,j}$  (the vector $y'_{i,j}$) 
in decreasing order, and let $z_{i,j}$  (respectively, $z'_{i,j}$)
be the  resulting sorted vector, i.e. the components of 
$z_{i,j}=(z_{i,j,a_{1}},\ldots,z_{i,j,a_{|\calA_{i,j}|}})$ are the same
as the components of $y_{i,j}$, but they are sorted
(likewise for $z'_{i,j} = (z'_{i,j,a'_1}, \ldots, z'_{i,j,a'_{|\calA_{i,j}|}})$).
In other words,  
we are assuming 
for convenience that $\calA_{i,j} = \{a_1, \ldots, a_{|\calA_{i,j}|}\}$
and that
$z_{i,j,a_{1}} \geq z_{i,j,a_{2}} \geq \ldots \geq z_{i,j,a_{|\calA_{i,j}|}}$,
and likewise that $\calA_{i,j} = \{ a'_1, \ldots, a'_{|\calA_{i,j}|} \}$
and that $z'_{i,j,a'_{1}} \geq z'_{i,j,a'_{2}} \geq \ldots \geq z'_{i,j,a'_{|\calA_{i,j}|}}$,
To obtain the sorted lists $z_{i,j}$ and $z'_{i,j}$, the respective circuits 
use a 
polynomial sized sorting network, 
for each $(i,j)$ 
(see e.g. Knuth \cite{Knuth} for
background on sorting networks). For each comparator gate
of the sorting network we use a $\max$ and a $\min$ gate.

Using this, for each agent $(i,j)$,
we compute $t_{i,j}(x)$  and $t'_{i,j}(\Normalize^{\epsilon}(x))$ 
as the following expressions: 

\begin{equation}
\label{eq:t-i-j}
t_{i,j}(x) := \max \{ (1/l) \cdot  ( (\sum_{k=1}^l z_{i,j,a_{k}}) +  
(|\calA_{i,j}| - l) \cdot \epsilon - 1 ) \mid  l=1,\cdots,|\calA_{i,j}|\}
\end{equation}

\begin{equation}
\label{eq:t-prime-i-j}
t'_{i,j}(\Normalize^{\epsilon}(x)) := \max \{ (1/l) \cdot  
( (\sum_{k=1}^l z'_{i,j,a'_{k}}) +  
(|\calA_{i,j}| - l) \cdot \epsilon - 1 ) \mid  l=1,\cdots,|\calA_{i,j}|\}
\end{equation}

We will show below that this expression does indeed give the correct value of 
$t_{i,j}(x)$.   The proof for $t'_{i,j}(\Normalize^{\epsilon}(x))$ is
virtually identical, so we omit it.

We output 
$F^{\epsilon}_{\calG}(x)_{i,j,a} = \max(y_{i,j,a}-t_{i,j}(x),\epsilon)$,
and $H^{\epsilon}_{\calG}(x)_{i,j,a} =  \max(y'_{i,j,a}-t'_{i,j}(\Normalize^{\epsilon}(x)),\epsilon)$,
 for 
each $i=1,\ldots,n$, $j\in [d_i]$,
and $a \in \calA_{i,j}$.

We now have to establish 
that $t_{i,j}(x)$, defined above, is the correct value.
(Again, we forgo the proof for $t'_{i,j}(\Normalize^{\epsilon}(x))$,
which is virtually identical.)
Consider the function $f_{i,j,x}(t) = \sum_{a \in \calA_{i,j}} \max (z_{i,j,a} -t,\epsilon)$ as $t$ decreases
from $z_{i,j,a_{1}}-\epsilon$ where the function 
value is at its minimum of $|\calA_{i,j}| \cdot \epsilon$, down until the function reaches the value $1$.
In the first interval from $z_{i,j,a_{1}}-\epsilon$ to $z_{i,j,a_{2}}-\epsilon$ the function is
$f_{i,j,x}(t) = z_{i,j,a_{1}}-t + (|\calA_{i,j}|-1)\cdot\epsilon$; in the second 
interval from  $z_{i,j,a_{2}}-\epsilon$ to 
$ z_{i,j,a_{3}}-\epsilon$ it is
$f_{i,j,x}(t) = z_{i,j,a_{1}}+z_{i,j,a_{2}}- 2t + (|\calA_{i,j}|-2)\cdot\epsilon$, and so forth.
In general, in the $l$-th interval, 
$f_{i,j,x}(t) = \sum_{k=1}^l (z_{i,j,a_{k}}- t) + (|\calA_{i,j}| - l) \cdot \epsilon = \sum_{k=1}^l z_{i,j,a_{k}}- lt + 
(|\calA_{i,j}| - l) \cdot \epsilon$.
If the function reaches the value 1 in the $l$'th interval, 
then clearly $t_{i,j}(x) =  ((\sum_{k=1}^l z_{i,j,a_{k}})  + (|\calA_{i,j}| - l) \cdot \epsilon -1)/l$.

In that case, furthermore
for $k'<l$, we have $\sum_{k=1}^{k'} (z_{i,j,a_{k}} -t_i) + (|\calA_{i,j}| - k')\cdot \epsilon 
\leq \sum_{k=1}^l (z_{i,j,a_{k}} -t_{i,j}(x)) + (|\calA_{i,j}| - l)\cdot\epsilon = 1$, 
because in that case we know $(z_{i,j,a_{k}}-t_{i,j}(x)) \geq \epsilon$ for every $a \in \{1,\ldots,l\}$.
Therefore, in this case
$((\sum_{k=1}^{k'} z_{i,j,a_{k}}) + (|\calA_{i,j}|- k')\cdot \epsilon -1)/k' \leq t_{i,j}(x)$.
On the other hand, if $l < |\calA_{i,j}|$, then for $k' > l$ we have $t_i \geq z_{i,j,a_{{k'}}}-\epsilon$, 
i.e., $z_{i,j,a_{{k'}}}- t_i \leq \epsilon$, and thus for all $k' >l$, $k' \leq |\calA_{i,j}|$,
we have
$\sum_{k=1}^{k'} (z_{i,j,a_{k}} -t_{i,j}(x)) + (|\calA_{i,j}| - k')\cdot\epsilon 
\leq \sum_{k=1}^l (z_{i,j,a_{k}} -t_{i,j}(x)) + (|\calA_{i,j}| - l)\cdot\epsilon = 1$.
Thus again $((\sum_{k=1}^{k'} z_{i,j,a_{k}})+ (|\calA_{i,j}| -k')\cdot \epsilon -1)/k' \leq t_{i,j}(x)$.
Therefore, $t_{i,j}(x)= \max \{ (1/l) \cdot  ( (\sum_{k=1}^l z_{i,j,a_{k}}) + (|\calA_{i,j}|-l)\cdot \epsilon -1) | l=1,\cdots,|\calA_{i,j}|\}$. 
\qed
\end{proof}

Lemma \ref{new-nash} and Lemma \ref{new-nash-circuit} together immediately imply Theorem \ref{fixp-no-division}.

\section{Approximating an SE,  PE, and QPE is $\FIXPA$-complete}

\label{approx-of-PE} 

In this section we exploit the algebraically defined function $F^{\epsilon}_{\calG}(x)$
and $H^{\epsilon}_{\calG}(x)$
for a EFGPR, $\calG$,  with input parameter $\epsilon > 0$,
devised in the previous section for $\epsilon$-PEs and $\epsilon$-QPEs, 
and we construct a 
``small enough''  $\epsilon^* > 0$ (using an algebraic circuit, given $\delta > 0$) 
such that any fixed point of $F^{\epsilon^*}_{\calG}(x)$ is
a $\epsilon^*$-PE which is also
$\delta$-close to an actual PE of $\calG$ (in $\ell_\infty$ distance),
and likewise any fixed point of $H^{\epsilon^*}_\calG(x)$ is a $\epsilon^*$-QPE
which is also $\delta$-close to an actual QPE.
In this way, we show that approximating a PE, and a QPE, 
to within given desired precision, $\delta > 0$,  for a given EFGPR is $\FIXPA$-complete.
Since PE  constitutes a refinement of NE and of
SGPE, this of course immediately implies
that approximating a NE or SGPE is also 
$\FIXPA$-complete  (cf. \cite{DFP06}).
Likewise, since QPE constitutes a refinement of NF-PE, this
also implies that approximating a NF-PE is $\FIXPA$-complete.

For SEs, we then also show that for any such $\epsilon^*$-PE, $b''$,  
if $\mu^{b''}$ is
the unique belief system generated by $b''$  then $(b'',\mu^{b''})$ is
$\delta$-close to an actual SE of 
$\calG$  (again in $\ell_\infty$).
Furthermore, using $F^{\epsilon^*}_\calG(x)$,
we define an auxiliary fixed point function 
$G^{\epsilon^*}_\calG(x,z)$ with domain $B \times \SysB$,
such that the Brouwer fixed points 
of $G^{\epsilon^*}_\calG$ are pairs $(b'',\mu^{b''})$,
where $b''$ is a $\epsilon^*$-PE and $\mu^{b''}$ is the belief
system that it generates.    In this way, we show that 
 approximating a SE (including its belief
system)
to within given desired precision $\delta > 0$, for a given EFGPR, is also $\FIXPA$-complete.

\begin{theorem}
\label{thm:main-fixpa-result}
Given as input a EFGPR, $\calG$, and a rational $\delta > 0$:
\begin{enumerate}
\item The problem of computing 
a vector $b' \in B$ such that there is a PE (or NE or SGPE), $b^*$, of 
$\calG$, with $\|b' - b^* \|_\infty < \delta$,
is $\FIXPA$-complete.

\item The problem of computing 
a vector $b' \in B$ such that there is a QPE (or NF-PE), $b^*$, of 
$\calG$, with $\|b' - b^* \|_\infty < \delta$,
is $\FIXPA$-complete.

\item The problem of computing a vector $b' \in B$ and a belief system $\mu'$ 
such that
there is a SE, $(b^*, \mu^*)$ of $\calG$, with $\|(b',\mu') - (b^*,\mu^*)\|_\infty < \delta$,
is $\FIXPA$-complete. 
\end{enumerate}
\end{theorem}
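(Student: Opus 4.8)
The plan is to prove hardness and membership separately; the hardness half is immediate and the membership half (containment in $\FIXPA$) is the new content. For hardness, an $n$-player NFG $\Gamma$ with $n\ge 3$ embeds into the EFGPR $\calE(\Gamma)$, in which every player has a single information set, always reached with probability $1$ under any fully mixed profile; consequently a PE, NF-PE, or QPE of $\calE(\Gamma)$ reduces to a PE of $\Gamma$ (equivalently of $\Normal(\calE(\Gamma))$), and an SE of $\calE(\Gamma)$ reduces to an NE of $\Gamma$ together with its (continuously determined) belief system. Hence approximating any of these notions for EFGPRs is at least as hard as approximating a PE (respectively an NE) of a $3$-player NFG, which is $\FIXPA$-hard by \cite{EHMS14} (respectively \cite{EY07}). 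So it remains to show membership in $\FIXPA$.

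For Part 1 (PE, NE, SGPE), Theorem \ref{fixp-no-division}(1) supplies the polynomial-size $\{+,*,\max\}$-circuit $F^\epsilon_\calG(x)\colon B\to B^\epsilon$, with $\epsilon$ as an input, whose Brouwer fixed points for $0<\epsilon<1/m$ are exactly the $\epsilon$-PEs of $\calG$. I would pick a ``very very small'' $\epsilon^*>0$, computable by a polynomial-size circuit from $\calG$ and $\delta$, such that every $\epsilon^*$-PE of $\calG$ lies within $\ell_\infty$-distance $<\delta/2$ of an actual PE. Such $\epsilon^*$ exists: let $\psi(\epsilon)$ be the supremum, over $\epsilon$-PEs $x$ of $\calG$, of $\mathrm{dist}_\infty(x,\mathrm{PE}(\calG))$; the set $\mathrm{PE}(\calG)$ is nonempty and closed (Selten), and by compactness of $B$ together with the definition of PE as a limit point of $\epsilon$-PEs with $\epsilon\to 0$, every sequence of $\epsilon_k$-PEs with $\epsilon_k\to 0$ has all its limit points in $\mathrm{PE}(\calG)$, so $\psi(\epsilon)\to 0$ as $\epsilon\downarrow 0$. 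Since the $\epsilon$-PE set and $\mathrm{PE}(\calG)$ are semi-algebraic sets whose defining polynomials (built from the $U_i(x)$ of Proposition \ref{prop:expected-poly}) have degree and coefficient bit-size polynomial in $|\calG|$, $\psi$ is a one-variable semi-algebraic function near $0$, and quantitative bounds from real algebraic geometry (\cite{Ren92,BasuPollackRoy2006}, as exploited in \cite{EY07,EHMS14}) give $\psi(\epsilon)\le A\,\epsilon^{1/N}$ for small $\epsilon$, with $\size(A)$ and $\log N$ polynomial in $|\calG|$. Then $\epsilon^*:=\min\{(\delta/(2A))^N,\tfrac{1}{2m}\}$ works and, although possibly doubly-exponentially small, is produced by a polynomial-size $\{+,*\}$-circuit via repeated squaring. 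Substituting this circuit for the $\epsilon$-input of $F^\epsilon_\calG$ yields a polynomial-size $\{+,*,\max\}$-circuit $F^{\epsilon^*}_\calG\colon B\to B^{\epsilon^*}\subseteq B$, all of whose (existing, by Brouwer) fixed points are $<\delta/2$-close to a PE. This places the problem in $\FIXPA$: take the $\FIXP$ function given by polytope $B$, circuit $F^{\epsilon^*}_\calG$, and identity output map; approximation parameter $\delta/2$; and the identity post-processing map. Since PE refines NE and SGPE, the same vector is also $<\delta$-close to an NE and an SGPE.

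Part 2 (QPE, NF-PE) is identical in structure, using $H^\epsilon_\calG(x)$ from Theorem \ref{fixp-no-division}(2): it is a polynomial-size $\{+,*,/,\max\}$-circuit, a legitimate $\FIXP$ circuit, and all its divisions are by quantities strictly positive on its domain (the $\Normalize^\epsilon$-normalized realization probabilities and the normalization sums), so it is a well-defined continuous self-map of $B$. Using van Damme's existence theorem for QPE, the characterization of QPE as a limit point of $\epsilon$-QPEs, Lemma \ref{lem:max-expected-poly} for the polynomial control on the defining formulas, and the same compactness-plus-Renegar argument, one picks $\epsilon^*$ so that every $\epsilon^*$-QPE is $<\delta/2$-close to a QPE, whence membership in $\FIXPA$ follows as above; since QPE refines NF-PE (Proposition \ref{prop:kreps-wilson}(5)), the same algorithm produces a vector $<\delta$-close to an NF-PE. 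For Part 3 (SE) the output is an assessment, so I would fold a belief system into the fixed point. Let $\beta\colon B^{>0}\to\SysB$ be the belief map $\beta(b)_u:=\Prob_b(u)/\Prob_b(I_{\Pl(u),\Info(u)})$, a polynomial-size $\{+,*,/\}$-circuit by Proposition \ref{prop:expected-poly}, continuous on $B^{\epsilon^*}$ (denominators $\ge\epsilon^*$ there). Define $G^{\epsilon^*}_\calG\colon B\times\SysB\to B\times\SysB$ by $G^{\epsilon^*}_\calG(x,z):=\bigl(F^{\epsilon^*}_\calG(x),\,\beta(F^{\epsilon^*}_\calG(x))\bigr)$; it ignores $z$, lands in $B^{\epsilon^*}\times\SysB$, is continuous on the whole (polynomial-size convex) polytope $B\times\SysB$, and hence has a Brouwer fixed point, necessarily of the form $(b'',\mu^{b''})$ with $b''$ an $\epsilon^*$-PE and $\mu^{b''}$ the belief system it generates. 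To make every such pair $<\delta/2$-close to an actual SE, consider the semi-algebraic set $D:=\{(\epsilon,b,\mu^b)\mid \epsilon>0,\ b\text{ an }\epsilon\text{-PE of }\calG\}$ and its ``$\epsilon=0$ fibre'' $\overline{D}_0:=\{(b,\mu)\mid(0,b,\mu)\in\overline{D}\}$; using that an $\epsilon$-PE is also an $\epsilon'$-PE for $\epsilon'\ge\epsilon$ (so the $b^k$ appearing in $D$ may be reindexed to $(1/k)$-PEs), the detailed second statement of Proposition \ref{prop:kreps-wilson}(3) gives $\overline{D}_0\subseteq\mathrm{SE}(\calG)$, and $\overline{D}_0$ is nonempty (Selten). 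Exactly as in Part 1, $\sup_{b\ \epsilon\text{-PE}}\mathrm{dist}_\infty\bigl((b,\mu^b),\overline{D}_0\bigr)\to 0$ as $\epsilon\downarrow 0$ by compactness, and this is a semi-algebraic function of $\epsilon$ with polynomially controlled complexity, hence $\le A''\epsilon^{1/N''}$ with $\size(A''),\log N''$ polynomial in $|\calG|$; choosing $\epsilon^*$ accordingly and using approximation parameter $\delta/2$ in the definition of $\FIXPA$ finishes the proof.

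The main obstacle is the passage to $\FIXPA$: quantifying how small $\epsilon^*$ must be to force the (approximate) fixed points to be genuinely $\delta$-close to the exact-equilibrium set, while keeping $\epsilon^*$ representable by a polynomial-size circuit. This rests on {\L}ojasiewicz/Renegar-type bounds and requires checking that all the relevant semi-algebraic sets — the $\epsilon$-PE and $\epsilon$-QPE sets, the sets of PEs, QPEs, and SEs, and (for SE) the limit set $\overline{D}_0$ — are cut out by polynomials of polynomially bounded degree and coefficient bit-size, which follows from Proposition \ref{prop:expected-poly} and Lemma \ref{lem:max-expected-poly}. The one mildly novel ingredient, for Part 3, is recognizing that $\overline{D}_0\subseteq\mathrm{SE}(\calG)$ (precisely Proposition \ref{prop:kreps-wilson}(3)) and incorporating the continuously computable belief system into the fixed point via $G^{\epsilon^*}_\calG$.
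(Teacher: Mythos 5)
Your proposal is correct and follows essentially the same route as the paper: hardness via the embedding $\calE(\Gamma)$ and refinement of NE, and membership via the circuits $F^{\epsilon}_\calG$, $H^{\epsilon}_\calG$ with a doubly-exponentially small $\epsilon^*$ produced by repeated squaring, justified by the ``almost implies near'' compactness argument combined with quantitative real algebraic geometry, and with the auxiliary map $G^{\epsilon^*}_\calG$ appending the generated belief system for SE. Your {\L}ojasiewicz-style bound on $\psi(\epsilon)$ is just a repackaging of the paper's explicit first-order ``almost implies near'' formula plus quantifier elimination (Lemmas \ref{LEM:AlmostVeryNear} and \ref{LEM:AlmostVeryNear-QPE}, with Proposition \ref{prop:prop-formula-MU} supplying the polynomial first-order encoding of the $\MU^{j,a}_i$ comparisons that your sketch leaves implicit).
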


Note that $\FIXPA$-hardness for these problems follows from the fact that we can encode
any NFG, $\Gamma$, as an EFGPR, $\calE(\Gamma)$, with not much larger 
encoding size, and 
from the fact that approximating a NE
within desired precision for $n$-player NFGs is $\FIXPA$-hard, as shown in \cite{EY07}.
The $\FIXPA$-hardness of approximating a SPGE, PE, QPE, NF-PE, and SE,
then follows because we know that these constitute
refinements of NE.
Thus, we only need to prove containment in $\FIXPA$.
Our proofs follow closely some of the proofs in
\cite{EHMS14} used for characterizing 
the complexity of approximating a PE for NFGs.  Although very similar,
our proof differs in some details (especially for sequential equilibrium).
So, both for clarity and in order to 
be self-contained,
we provide detailed proofs.

\newcommand{\transpose}{\ensuremath{\mathsf{T}}}
\newcommand{\abs}[1]{\ensuremath{\mathopen\lvert #1 \mathclose\rvert}}
\newcommand{\norm}[1]{\ensuremath{\mathopen\lVert #1 \mathclose\rVert}}
\newcommand{\Abs}[1]{\ensuremath{\left| #1 \right|}}
\newcommand{\Norm}[1]{\ensuremath{\left\| #1 \right\|}}
\newcommand{\RR}{\mathbb{R}}

\newcommand{\epsPE}{\ensuremath{\operatorname{EPS-PE}}}
\newcommand{\epsPEBS}{\ensuremath{\operatorname{EPS-PE-BS}}}
\newcommand{\epsQPE}{\ensuremath{\operatorname{EPS-QPE}}}
\newcommand{\PE}{\ensuremath{\operatorname{PE}}}
\newcommand{\QPE}{\ensuremath{\operatorname{QPE}}}
\newcommand{\PESE}{\ensuremath{\operatorname{PE-SE}}}
\newcommand{\PEbound}{\ensuremath{\operatorname{PE-bound}_\delta}}
\newcommand{\QPEbound}{\ensuremath{\operatorname{QPE-bound}_\delta}}
\newcommand{\PESEbound}{\ensuremath{\operatorname{PE-SE-bound}_\delta}}

Before we prove Theorem \ref{thm:main-fixpa-result}, we need some Lemmas.
The following is a special case of a general paradigm 
noted by
Anderson~\cite{TAMS:Anderson86}.
\begin{lemma}
\label{LEM:AlmostNear}
  For any fixed EFGPR, $\calG$, and any $\delta > 0$, there is an
  $\epsilon > 0$, so that any $\epsilon$-(Q)PE, $b'$, of $\calG$
  has $\ell_\infty$-distance at most $\delta$ from some (Q)PE of $\calG$,
 and furthermore, if $\mu^{b'}$ denotes the 
belief system generated by $b'$, then $(b',\mu^{b'})$ has
$\ell_\infty$-distance  at most $\delta$ from some SE of $\calG$.
\end{lemma}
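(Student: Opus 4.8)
The plan is to use a standard compactness / limit argument, of the kind attributed to Anderson \cite{TAMS:Anderson86}, separately for each of the three claimed proximity statements (approach to a PE, to a QPE, and to an SE). I will argue by contradiction. Suppose the claim for PE fails for some fixed $\delta > 0$: then for every $k \in \nat$ there is a $(1/k)$-PE, $b^k \in B^{>0}$, whose $\ell_\infty$-distance to the set of PEs of $\calG$ exceeds $\delta$. Since $B$ is a compact subset of $\real^m$, the sequence $\langle b^k \rangle_{k \in \nat}$ has a convergent subsequence, say $b^{k_\ell} \to b^*$ as $\ell \to \infty$. The limit $b^*$ is, by definition, a limit point of $\epsilon$-PEs with $\epsilon \to 0$, hence a PE of $\calG$. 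But the function $b \mapsto \operatorname{dist}_{\ell_\infty}(b, \PEset)$, where $\PEset$ is the (closed, nonempty) set of PEs, is continuous, so $\operatorname{dist}_{\ell_\infty}(b^*, \PEset) = \lim_\ell \operatorname{dist}_{\ell_\infty}(b^{k_\ell}, \PEset) \geq \delta > 0$, contradicting $b^* \in \PEset$. The same argument, verbatim with ``$\epsilon$-QPE'' in place of ``$\epsilon$-PE'' and the (closed, nonempty) set of QPEs in place of $\PEset$, handles the QPE statement; here I use that $\calG$ has at least one QPE (van Damme \cite{vanDamme84}), and that the set of QPEs, being the set of limit points of $\epsilon$-QPEs, is closed.

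For the SE statement I will combine the above with Proposition \ref{prop:kreps-wilson}, part 3. Suppose the SE claim fails for some fixed $\delta > 0$: then for every $k$ there is a $(1/k)$-PE, $b^k \in B^{>0}$, such that $(b^k, \mu^{b^k})$ has $\ell_\infty$-distance greater than $\delta$ from every SE of $\calG$. By compactness of $B \times \SysB$ (both factors are compact polytopes), pass to a subsequence along which $b^{k_\ell} \to b^*$ and $\mu^{b^{k_\ell}} \to \mu^*$. By Proposition \ref{prop:kreps-wilson}(3), since each $b^{k_\ell}$ is a fully mixed $(1/k_\ell)$-PE, $\mu^{b^{k_\ell}}$ is the belief system generated by $b^{k_\ell}$, and $b^{k_\ell} \to b^*$, $\mu^{k_\ell} \to \mu^*$, the limit assessment $(b^*, \mu^*)$ is an SE of $\calG$. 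Continuity of the distance-to-$\SEset$ function again yields a contradiction with the assumed lower bound $\delta$. (Strictly speaking, I should first observe that $\mu^{b^k}$ is well-defined for every $k$, which it is since $b^k \in B^{>0}$ so every information set is reached with positive probability.)

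One point that needs care, and is the main thing to get right rather than a genuine obstacle: the statement asserts a \emph{single} $\epsilon > 0$ that works simultaneously for all three approximation claims. This is immediate once each claim is established separately — for claim (i) get $\epsilon_1$, for (ii) get $\epsilon_2$, for (iii) get $\epsilon_3$, and take $\epsilon = \min\{\epsilon_1, \epsilon_2, \epsilon_3\} > 0$; any $\epsilon$-(Q)PE with this smaller $\epsilon$ is in particular an $\epsilon_j$-(Q)PE for the relevant $j$, since being an $\epsilon$-(Q)PE is monotone in $\epsilon$ (a smaller perturbation bound is a stronger condition only on which actions may carry probability above the threshold — more precisely, every $\epsilon$-PE is an $\epsilon'$-PE for $\epsilon' \geq \epsilon$, directly from the definition, since $b_{i,j}(a) > \epsilon' \geq \epsilon$ forces the local-best-response condition). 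A second minor point to flag is that the contradiction hypothesis implicitly uses that the relevant equilibrium sets are nonempty (so that ``distance exceeds $\delta$'' is meaningful and a limit point actually lands in the set): PE-existence and SE-existence are Selten \cite{Selten75} and Kreps--Wilson \cite{Kreps-Wilson:1982} respectively, QPE-existence is van Damme \cite{vanDamme84}, all recorded earlier in the excerpt. No real obstacle here — the whole lemma is a soft compactness argument — but it is worth writing the subsequence extraction and the appeal to Proposition \ref{prop:kreps-wilson}(3) explicitly, since the SE case is the only one where the belief system, and not just the behavior profile, has to converge along the chosen subsequence.
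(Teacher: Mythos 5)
Your proposal is correct and follows essentially the same route as the paper: a compactness/limit-point contradiction argument (the paper does it in one combined step over the sequence of assessments $(b^{1/n},\mu^{b^{1/n}})$ with a disjunctive contradiction hypothesis, rather than three separate arguments plus taking the minimum $\epsilon$, but the substance is identical, including the appeal to Proposition \ref{prop:kreps-wilson}). The only detail to add is that for the QPE half of the SE claim you should invoke Part 4 of Proposition \ref{prop:kreps-wilson} (the QPE analogue of Part 3), since your SE argument as written only treats sequences of $(1/k)$-PEs.
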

\begin{proof}
Assume to the contrary that there is a EFGPR, $\calG$, and a $\delta > 0$
so that for all $\epsilon > 0$, 
there is an $\epsilon$-(Q)PE, $b^\epsilon$ 
 of $\calG$ so that there is 
no (Q)PE in the $\delta$-neighborhood (with respect to $\ell_\infty$) of $b^\epsilon$
or that there is no
SE in the $\delta$-neighborhood (with respect to $\ell_\infty$)
of $(b^\epsilon,\mu^{b^\epsilon})$, where $\mu^{b^\epsilon}$
is the belief system generated by $b^{\epsilon}$. 

Consider the sequence of assessments $(b^{1/n},\mu^{b^{1/n}})_{n \in \nat}$.
Since this is a sequence in a compact space (namely, the direct
product of the space of behavior profiles and the space of belief systems),
it has a limit point $(b^*,\mu^*)$.  But then $b^*$ is
a (Q)PE of $\calG$, by definition, since each $b^{1/n}$
is a $1/n$-(Q)PE. 
But this contradicts the statement that
  there is no (Q)PE in a $\delta$-neighborhood of any of
  the behavior profiles $b^{1/n}$.    
Furthermore, it follows from Proposition \ref{prop:kreps-wilson}
(Part 3.) that  $(b^*, \mu^*)$ is a SE.
But this contradicts the statement that there is no SE
in a $\delta$-neighborhood of any of the assessments $(b^{1/n},\mu^{b^{1/n}})$.
\qed
\end{proof}

A priori, we have no bound on $\epsilon$, but we can 
use results in real algebraic geometry 
\cite{BasuPollackRoy2006,BasuPollackRoy2011} 
to obtain 
a specific bound.   We first do this for PE and SE: 
\begin{lemma}
\label{LEM:AlmostVeryNear}
There is a constant $c$, so that for all integers $n,m,k,M \in \nat$ and $\delta \in \rat_+$, 
the following holds. Let $\epsilon \leq \min(\delta,1/(M^{\height^\calG + 1}))^{m^{c m^3}}$. For any $n$-player 
EFGPR, $\calG$, with a combined total of $m$ pure local strategies for all players in the game,
with game tree $T$ having height $\height^\calG$, 
and with $M$ a positive integer
which is at least as large as any (by assumption, necessarily positive) integer payoff of $\calG$
and such that $p_u(a) > 1/M$, for every $u \in P_0$ and every $a \in \Act(u)$. 
Then any $\epsilon$-PE,  $b^\epsilon$, 
of $\calG$ has $\ell_\infty$-distance at most $\delta$ from some PE of $\calG$, and
furthermore  if $\mu^{b^\epsilon}$ is the belief system
generated by $b^\epsilon$, then $(b^\epsilon,\mu^{b^\epsilon})$ has
$\ell_\infty$-distance at most $\delta$ from some SE of $\calG$.\end{lemma}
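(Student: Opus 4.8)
The plan is to recast everything in sight as semi-algebraic sets whose \emph{complexity parameters} --- number of variables, and the degrees and coefficient bit-sizes of the defining polynomials --- are tightly controlled, and then to feed these into effective bounds from real algebraic geometry \cite{BasuPollackRoy2006,BasuPollackRoy2011}. First I would observe, via Proposition~\ref{prop:expected-poly}, that for each player $i$, information set $j\in[d_i]$, and actions $a,a'\in\calA_{i,j}$, the inequality $U_i(b'\mid\pi^a_{i,j})\ge U_i(b'\mid\pi^{a'}_{i,j})$ is a polynomial inequality in the $m$ coordinates of $b'$ of total degree at most $\min(m,\height^\calG)$ whose coefficients, after clearing the chance-node denominators, are integers of bit-size $\tau_0=O\big((\height^\calG+1)\log M+|\calG|\big)$. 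Hence ``$b'$ is an $\epsilon$-PE of $\calG$'' is a quantifier-free formula $\Phi(b',\epsilon)$ over the reals in $m+1$ variables with those degree and bit-size bounds; let $P_\epsilon=\{b'\in B\mid\Phi(b',\epsilon)\}$, and note $P_{\epsilon_1}\subseteq P_{\epsilon_2}$ whenever $\epsilon_1\le\epsilon_2$. Using the limit definition of PE (``$b''$ is a PE iff for every $r>0$ there are $\epsilon\in(0,r)$ and $b'\in P_\epsilon$ with $\|b''-b'\|_\infty<r$''), the set $\operatorname{PE}(\calG)$ is semi-algebraic; likewise, describing the belief map $b'\mapsto\mu^{b'}$ on $B^{>0}$ by the identities $\mu(u)\cdot F_{I_{i,j}}(b')=F_u(b')$ (Proposition~\ref{prop:expected-poly}) and invoking Proposition~\ref{prop:kreps-wilson}(3), the set $\operatorname{SE}^{PE}(\calG)$ of sequential equilibria that arise as limits of assessments $(b^{(k)},\mu^{b^{(k)}})$ with $b^{(k)}$ a $(1/k)$-PE is semi-algebraic and contained in the set of SEs of $\calG$.

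Next I would introduce the two one-variable functions $f(\epsilon)=\sup_{b'\in P_\epsilon}\operatorname{dist}_\infty(b',\operatorname{PE}(\calG))$ and $f_{SE}(\epsilon)=\sup_{b'\in P_\epsilon}\operatorname{dist}_\infty\big((b',\mu^{b'}),\operatorname{SE}^{PE}(\calG)\big)$. Both are semi-algebraic functions of $\epsilon\in(0,1)$, both are non-decreasing (since $\epsilon\mapsto P_\epsilon$ is), both are bounded, and $\lim_{\epsilon\to0^+}f(\epsilon)=\lim_{\epsilon\to0^+}f_{SE}(\epsilon)=0$ --- this last fact is exactly Lemma~\ref{LEM:AlmostNear}, or can be re-derived by the compactness argument used there (a sequence of $(1/k)$-PEs has a convergent subsequence whose limit is a PE by definition, and, passing to a further subsequence, the associated assessments converge to a limit that is an SE by Proposition~\ref{prop:kreps-wilson}(3)). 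Since $\delta>0$ and $f,f_{SE}$ are monotone, the sublevel sets $\{\epsilon\in(0,1)\mid f(\epsilon)\le\delta\}$ and $\{\epsilon\in(0,1)\mid f_{SE}(\epsilon)\le\delta\}$ are intervals with positive left-open endpoints $\epsilon_0,\epsilon_0'>0$, and it remains to prove that $\min(\epsilon_0,\epsilon_0')$ is bounded below by the quantity in the statement.

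To get that explicit lower bound I would track the complexity parameters through the quantifier-elimination steps needed to describe $\operatorname{PE}(\calG)$, $\operatorname{SE}^{PE}(\calG)$, and $f,f_{SE}$ (there are at most three effective quantifier alternations: one to form a closure/limit, one to express ``distance $\ge t$'', one to form the supremum over $P_\epsilon$), so that by the single-exponential quantifier-elimination estimates of \cite{BasuPollackRoy2006} the function $f$ admits a quantifier-free description by univariate polynomials in $\epsilon$ of degree at most $m^{O(m^3)}$ with coefficients of bit-size at most $\tau_0\cdot m^{O(m^3)}$, and similarly for $f_{SE}$. The endpoint $\epsilon_0$ is a breakpoint of $f$ at level $\delta$, hence a root of one of these polynomials with the rational $\delta$ substituted; a Cauchy-type root separation bound then yields $\epsilon_0\ge 2^{-(\tau_0+O(\log(1/\delta)))\cdot m^{O(m^3)}}$, and absorbing the polynomial dependence on the combinatorial size of $\calG$ and on $\log M$ into the exponent and adjusting the universal constant, this is at least $\big(\min(\delta,1/M^{\height^\calG+1})\big)^{m^{cm^3}}$. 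The bound for $\epsilon_0'$ is obtained identically with $f_{SE}$ and $\operatorname{SE}^{PE}(\calG)$ in place of $f$ and $\operatorname{PE}(\calG)$, and the lemma follows by taking the smaller of the two (folding the extra factor of two into $c$).

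The hard part will be the complexity bookkeeping in the third paragraph: verifying that only a constant number of quantifier alternations are really needed and that the degrees and coefficient bit-sizes of the intermediate quantifier-free formulas grow only as $m^{O(m^3)}$ and $\tau_0\cdot m^{O(m^3)}$ respectively, and then checking that the $|\calG|$- and $\log M$-type terms coming out of the root-separation estimate are dominated once the exponent $m^{cm^3}$ is present, so that the bound collapses to the exact stated form. A secondary point needing care is making precise that $\operatorname{PE}(\calG)$ (equivalently $\overline{P_\epsilon}$ for all small $\epsilon$, by semi-algebraic triviality of the nested family $\{\overline{P_\epsilon}\}_{\epsilon>0}$) and $\operatorname{SE}^{PE}(\calG)$ are genuinely semi-algebraic with the claimed complexity --- the former following from the limit characterization of PE above, the latter from Proposition~\ref{prop:kreps-wilson}(3) together with the rational-function description of $b'\mapsto\mu^{b'}$.
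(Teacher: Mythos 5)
Your proposal is correct and follows essentially the same route as the paper: encode $\epsilon$-PE, PE, and SE (as a limit of $(1/k)$-PE assessments together with their generated beliefs) as first-order formulas over the reals with controlled degrees and coefficient bit-sizes, use Lemma~\ref{LEM:AlmostNear} for the qualitative ``almost implies near'' statement, and extract the explicit bound on $\epsilon$ from the effective quantifier-elimination and root/witness-point estimates of Basu--Pollack--Roy, combined with the monotonicity of the $\epsilon$-PE condition in $\epsilon$. Your monotone distance functions $f$, $f_{SE}$ and the root-separation argument at their $\delta$-level breakpoints are a cosmetic repackaging of the paper's single almost-implies-near formula in the one free variable $\epsilon$ and its appeal to Theorem 13.14 of \cite{BasuPollackRoy2011}, and the bookkeeping you flag at the end is handled in the paper exactly as you anticipate, by bounding all coefficient bit-sizes by $\max(k,\tau\cdot(\height^\calG+1))$ with $M=2^\tau$ and $\delta^2=2^{-k}$.
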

\begin{proof}  
The proof involves constructing formulas in the first order theory of real numbers, which 
formalize the statement of Lemma \ref{LEM:AlmostNear}, 
with $\delta$ being ``hardwired" as a constant and $\epsilon$ being the only free variable. 
Then, we apply {\em quantifier elimination} to these formulas. This leads to a quantifier free statement to which we can apply standard theorems bounding the size of an instantiation of the free variable $\epsilon$ making the formula true. We shall apply and refer to theorems in the monograph of Basu, Pollack and Roy \cite{BasuPollackRoy2006,BasuPollackRoy2011}. Note that we specifically refer to theorems and page numbers of the online edition \cite{BasuPollackRoy2011}; these are in general different from the printed edition \cite{BasuPollackRoy2006}.

\paragraph{First-order formula for an extensive form $\epsilon$-perfect equilibrium and for the belief system it
generates:} 

Let $\epsPEBS(x,z,\epsilon)$ be the quantifier-free first-order formula, with free
variables $x \in \RR^m$, $z \in \RR^{|\Internal \setminus P_0|}$, and $\epsilon\in\RR$, defined by the conjunction
of the following formulas, which together express the fact that $x$ is
a behavior profile that is
an extensive form $\epsilon$-PE of the given EFGPR, $\calG$,
and that $z$ is  the (unique) belief system generated by $x$:
\begin{gather*}
x_{i,j,a} > 0,  \quad \text{for } i \in [n] , j \in [d_i] \text{, and }  a \in \calA_{i,j} \enspace ,\\
\sum_{a \in \calA_{i,j}} x_{i,j,a} = 1, \quad \text{for } i \in [n] \text{ and }  j \in [d_i] \ \enspace ,\\
\left(U_i(x \mid \pi^a_{i,j}) \geq U_i(x  \mid  \pi^{a'}_{i,j}) \right) \vee \left(x_{i,j,a} \leq \epsilon \right) ,
\quad \text{for } i \in [n] \text{, } j \in [d_i], \text{ and }   a, a' \in \calA_{i,j} \enspace ,\\
z_{u} \cdot \Prob_x(I_{i,j}) =  \Prob_x(u) ,   \quad \text{for all } u \in V \text{ where } u \in I_{i,j}
\text{ for } i \in [n]  \text{ and }  j \in [d_i] .  
 \enspace 
\end{gather*}

\noindent Note that by Proposition \ref{prop:expected-poly},
$\Prob_x(I_{i,j})$ and $\Prob_x(u)$ are expressible as multilinear polynomials
in the variables $x$ 
(whose encoding size is polynomial in $|\calG|$).

\paragraph{First-order formula for  perfect equilibrium and sequential equilibrium:}
Let $\PESE(x,z)$ denote the following first-order formula with free
variables $x\in \RR^m$, and $z \in \RR^{|\Internal \setminus P_0|}$, 
expressing that $x$ is a behavior profile this a PE of $\calG$,
and that $z$ is a belief system such that $(x,z)$ is a SE
of $\calG$:
\begin{gather*}
\forall \epsilon >0 \: \exists x' \in \RR^m  \: \exists z' \in \RR^{|\Internal \setminus P_0|}
: \epsPEBS(x',z',\epsilon) \wedge \norm{x-x'}^2 < \epsilon \wedge \norm{z - z'}^2 < \epsilon\enspace .
\end{gather*}

\paragraph{First-order formula for ``almost implies near'' statement:}

Given a {\em fixed} $\delta >0$ let $\PESEbound(\epsilon)$ denote the following
first-order formula with free variable $\epsilon\in\RR$, denoting that
any $\epsilon$-perfect equilibrium, $x$, of $\calG$ is $\delta$-close to a
PE (in $\ell_2$-distance, and therefore also in $\ell_\infty$-distance),
and likewise that if $z$ is the belief system generated by $x$, then
$(x,z)$ is $\delta$-close to a SE:
\begin{gather*}
\forall x \in \RR^m  \: \forall z \in \RR^{|\Internal \setminus P_0|} \: \; \exists x^* \in \RR^m \:
\exists z^* \in \RR^{|\Internal \setminus P_0|}: \\
(\epsilon>0) \wedge \left(\neg \epsPEBS(x,z,\epsilon) \vee 
\left(\PESE(x^*,z^*) \wedge \norm{x-x^*}^2 < \delta^2 
\wedge \norm{z - z^*}^2 < \delta^2 \right)\right) \enspace .
\end{gather*}

Suppose $\delta^2 = 2^{-k}$ and that $M=2^\tau$
is a positive integer that satisfies 
the conditions in the statement of the Lemma. Then for this formula we have
\begin{itemize}
\item The total degree of all involved polynomials is at most $\max(2,m)$.
\item The bitsize of coefficients is at most $\max(k,\tau \cdot (\height^\calG + 1))$.
\item The number of free variables is $1$.
\item Since $|\Internal \setminus P_0| \leq m$, 
converting to prenex normal form, the formula has 4 blocks of
  quantifiers, of sizes at most $2m$, $2m$, $1$, $2m$, respectively.
\end{itemize}
 
We now apply quantifier elimination
\cite[Algorithm 14.6, page 555]{BasuPollackRoy2011} to the formula $\PESEbound(\epsilon)$, converting it into an
equivalent quantifier free formula $\PESEbound'(\epsilon)$ with a single free variable $\epsilon$. This is simply a Boolean formula whose atoms are sign conditions on various polynomials in $\epsilon$. The bounds given by 
\cite{BasuPollackRoy2011}
in association with Algorithm 14.6 imply that for this formula:
\begin{itemize}
\item The degree of all involved polynomials (which are univariate polynomials in $\epsilon$) is:\\ \centerline{$\max(2,m)^{O(m^3)} 
= m^{O(m^3)}$.}
\item The bitsize of all coefficients is at most:\\ 
\centerline{$\max(k,\tau \cdot 
(\height^\calG+1))\max(2,m)^{O(m^3)} =
  \max(k,\tau \cdot (\height^\calG + 1))m^{O(m^3)}$.}
\end{itemize}

By Lemma~\ref{LEM:AlmostNear}, we know that there exists an
$\epsilon>0$ so that the formula $\PESEbound'(\epsilon)$ is true. We
now apply Theorem 13.14 of \cite[Page 521]{BasuPollackRoy2011} to the
set of polynomials that are atoms of $\PESEbound'(\epsilon)$ and
conclude that $\PESEbound'(\epsilon^*)$ is true for some $\epsilon^*
\geq 2^{-\max(k,\tau \cdot (\height^\calG+1))m^{\Omega(m^3)}}$. By the
semantics of the formula $\PESEbound(\epsilon)$, we also have that
$\PESEbound(\epsilon')$ is true for all $\epsilon' \leq \epsilon^*$,
and the statement of the lemma follows. \qed
\end{proof}

\noindent {\bf Proof of Theorem \ref{thm:main-fixpa-result}, parts (1.) and (3.)}.
We shall combine the proofs of parts (1.) and (3.) of the Theorem together.
To do so, we shall first define an auxiliary fixed point function
$G^{\epsilon}_\calG(x,z)$ defined in terms of
$F^{\epsilon}_\calG(x)$,  such that the Brouwer fixed points 
of $G^{\epsilon}_\calG$ are pairs $(b'',\mu^{b''})$,
where $b''$ is a $\epsilon$-PE and $\mu^{b''}$ is the belief
system that it generates.  
Specifically,  we define $G^{\epsilon}_\calG : B \times \SysB  \rightarrow   B^\epsilon \times \SysB$ as follows:
For all $(b,z) \in B \times \SysB$,  $G^{\epsilon}_\calG(b,z) := (b',z')$ where $b'_{i,j,a} := F^{\epsilon}_{\calG}(b)$,
for all $i \in [n]$, $j \in [d_i]$ and $a \in \calA_{i,j}$;
and furthermore where $z'_u :=  \frac{\Prob_{b'}(u)}{\Prob_{b'}(I_{i_u,j_u})}$ 
for all $u \in \Internal \setminus P_0$,
and where $u \in I_{i_u,j_u}$.   Note in particular that,
for all $u \in \Internal \setminus P_0$, we can express $z'_u$ as a (efficiently algebraically encodable) rational function of $b$
because, recalling from Proposition \ref{prop:expected-poly} that for all 
$V' \subseteq V$,
there is a efficiently encodable polynomial $F_{V'}(x)$ such that for all $b \in B$ 
$F_{V'}(b) = \Prob_b(V')$ represents the realization probability of $V'$,  we have 
 $z'_u := \frac{\Prob_{b'}(u)}{\Prob_{b'}(I_{i_u,j_u})} =  \frac{F_u(F^{\epsilon}_{\calG}(b))}
{F_{I_{i_u,j_u}}(F^{\epsilon}_{\calG}(b))}$ .

Thus $G^{\epsilon}_{\calG}:  B \times \SysB  \rightarrow 
B^\epsilon \times \SysB$ is a continuous map, and notably $G^{\epsilon}_{\calG}$ is defined in the entire compact
domain $B \times \SysB$, because $b' := F^{\epsilon}_{\calG}(b) \in B^{\epsilon}$ 
and thus
the ratio $\frac{\Prob_{b'}(u)}{\Prob_{b'}(I_{i_u,j_u})}$ is always well defined (we never divide
by $0$, because all nodes have positive realization
probability under a profile $b' \in B^\epsilon$, for all $\epsilon > 0$).
Moreover, by definition
of $G^{\epsilon}_{\calG}$, for all $\epsilon > 0$,  for any Brouwer fixed point $(b'',\mu'') \in B^\epsilon \times \SysB$ of 
$G^{\epsilon}_{\calG}$,  $b''$ must be a $\epsilon$-PE
of $\calG$ and $\mu''$ must be the unique belief system $\mu^{b''}$ generated
by $b''$.

We now prove that computing a PE to within
desired precision is $\FIXPA$-complete, and 
that computing a SE to within desired precision is
$\FIXPA$-complete.
Let $\calG$ be the $n$-player EFGPR given as input. Let $m$ be the combined total number of pure strategies for 
all players. 
Let $M'$ be the minimum positive integer 
such that $p_u(a) > 1/M'$, for every $u \in P_0$ and every $a \in \Act(u)$.
Let $M \in \nat$ be a positive integer
which is the maximum of $M'$ and any (by assumption, necessarily positive) integer payoff of $\calG$.
By the definition of $\FIXPA$, our task is the following. 
Given a parameter $\delta > 0$, we must construct a polytope $P$, 
a circuit $C: P \rightarrow P$, and a number $\delta'$,  so that a
$\delta'$-approximation to a fixed point of $C$ can be 
efficiently transformed into $\delta$-approximation of 
a PE of $\calG$, and a $\delta'$-approximation of a fixed point
of $C$ can also be efficiently transformed into a $\delta$-approximation
of a SE of $\calG$. In fact, we shall let $\delta' = \delta/2$ and 
ensure that $\delta'$-approximations to fixed points of $C$ yield both 
a $\delta$-approximation of a PE and a $\delta$-approximation of a 
SE of $\calG$. 
The polytope $P$ is simply the polytope $B \times \SysB$, i.e., the 
cartesian product of the space of 
behavior profiles of $\calG$ and the space of belief systems; clearly we can 
output the inequalities defining this polytope in polynomial time. 
The circuit $C$ is the following: We construct the circuit for the function 
$G^\epsilon_\calG$ above. 
Then, we construct a circuit for the number $\epsilon^* = 
\min(\delta/2, M^{-h^{\calG}})^{{2^{\lceil c m^3 \lg m \rceil }}} \leq \min(\delta/2,M^{-h^{\calG}})^{m^{c m^3}}$, where $c$ is the 
constant of Lemma \ref{LEM:AlmostVeryNear}: The circuit simply repeatedly 
squares the number $\min(\delta/2, M^{-h^{\calG}})$ (which is a rational constant
that can be computed in P-time given the input $\calG$) 
and thereby consists of exactly $\lceil c m^3 \lg m \rceil$ multiplication 
gates, 
i.e., a polynomially bounded number. We then plug in the circuit for 
$\epsilon^*$ for the parameter $\epsilon$ in the circuit for $G^\epsilon_\Gamma$, 
obtaining the circuit $C$, which is obviously a circuit for $G^{\epsilon^*}_\Gamma$. Now, by the above, any fixed point $(b'',\mu'')$ of $C$ on $P$ is an 
$\epsilon^*$-PE of $\calG$. Therefore, by Lemma \ref{LEM:AlmostVeryNear}, 
in any fixed point $(b'',\mu'')$ of $C$, we know that $b''$ is 
both a $\epsilon^*$-PE and
a $\delta/2$-approximation (in 
$\ell_\infty$-distance) to a PE $b^*$ of $\calG$, and furthermore that $\mu''$ is
the  unique belief system generated by $b''$, and that $\mu''$ 
is a $\delta/2$-approximation (in $\ell_\infty$-distance) of a belief system 
$\mu^*$
such that $(b^*,\mu^*)$ is a SE of $\calG$. Finally, by the triangle 
inequality, any $\delta'=\delta/2$-approximation $(b',\mu')$ to a fixed point $(b'',\mu'')$ of $C$ on 
$P$ is a $\delta/2 + \delta/2 
= \delta$ approximation (in $\ell_\infty$) of some pair $(b^*,\mu^*)$,
such that $b^*$ is a PE of $\calG$ and $(b^*,\mu^*)$ is a SE 
 of $\calG$.   We have thus established Theorem \ref{thm:main-fixpa-result}, parts (1.) and (3.).
\qed

\noindent Next, we want to prove something analogous to Lemma \ref{LEM:AlmostVeryNear}, but for QPEs.
In order to do so, we first need the following:

\begin{proposition}
\label{prop:prop-formula-MU}
For any EFGPR, $\calG$, with $i \in [n]$, $j \in [d_i]$, and 
any $a , a' \in \calA_{i,j}$, 
the inequality $\MU^{j,a}_{i}(x) < \MU^{j,a'}_i(x)$  
can be expressed as formula, $\Phi^{i,j,a,a'}_\calG(x) \equiv \exists y 
\Psi^{i,j,a,a'}_\calG(y, x)$, in the existential theory of reals,
where $\Psi^{i,j,a,a'}_\calG(y, x)$ is quantifier free, 
where the total degree of all polynomials involved in
$\Psi^{i,j,a,a'}_\calG(y, x)$ is 2,
where the
encoding size of $\Phi^{i,j,a}_{\calG}(x)$ is polynomial in $|\calG|$, 
and such that for 
all $b \in B^{> 0}$, $\Phi^{i,j,a,a'}_\calG(b)$ holds true
iff $\MU^{j,a}_{i}(b) < \MU^{j,a'}_i(b)$.
\end{proposition}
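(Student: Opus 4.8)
The plan is to build $\Phi^{i,j,a,a'}_\calG(x)$ by ``flattening'' the $\{+,-,*,/,\max\}$-formula for $v'(x)_{i,j,a}=\MU^{j,a}_i(x)$ supplied by Lemma~\ref{lem:max-expected-poly} (which in turn comes from the dynamic program (\ref{eq:dyn-program})): introduce a fresh real auxiliary variable $y_g$ for essentially every gate $g$ of that formula, and record, for each gate, a polynomial relation of total degree at most $2$ tying $y_g$ to the quantities feeding it. Carrying this out for both the $a$-formula and the $a'$-formula, and then demanding that the two output variables satisfy a strict inequality, yields the desired existential sentence. Since, as noted in the proof of Lemma~\ref{lem:max-expected-poly}, the inductive definition of $\MU^{j,a}_i$ proceeds bottom-up on $\calF_i$ with no re-use of subformulas, the formula really is a tree of size polynomial in $|\calG|$, so the flattening introduces only polynomially many new variables and atoms.

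First I would isolate the building blocks of the formula for $\MU^{j,a}_i(x)$. By Proposition~\ref{prop:expected-poly} and the construction in Lemma~\ref{lem:max-expected-poly}, these are: (i) multilinear monomials of degree at most $\height^\calG$ in $x$ (the probabilities $\Prob_x(u)$, $\Prob_x(I_{i,j})$, $\Prob_{(x\mid\pi^a_{i,j})}(z\mid u)$, $\Prob_{(x\mid\pi^a_{i,j})}(I_{i,j'}\mid I_{i,j})$); (ii) additions, subtractions, and multiplications by the rational constants $r_i(z)$; (iii) binary products of previously computed quantities; (iv) division gates, whose denominator --- as explicitly observed at the end of the proof of Lemma~\ref{lem:max-expected-poly} --- is always the realization probability $\Prob_x(I_{i,j})$ of some information set; and (v) $\max$ gates over sibling actions $a'\in\calA_{i,j'}$. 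Then I would flatten each kind: a monomial $x_{e_1}\cdots x_{e_k}$ becomes a chain $y_1=x_{e_1}x_{e_2}$, $y_\ell=y_{\ell-1}x_{e_{\ell+1}}$ for $2\le\ell\le k-1$ of degree-$2$ equations, costing only polynomially many new variables since $k\le\height^\calG\le|\calG|$; a product gate $w=uv$ contributes the degree-$2$ equation $w=uv$; an additive or scaling gate contributes its obvious degree-$1$ equation; a division gate $w=u/v$ contributes $wv=u$ \emph{together with} the side condition $v>0$; and a gate $w=\max_\ell u_\ell$ contributes $\bigwedge_\ell(w\ge u_\ell)\wedge\bigvee_\ell(w=u_\ell)$. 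Letting $y$ collect all auxiliary variables and $y_{\mathrm{out}},y'_{\mathrm{out}}$ be the two output variables, I set $\Psi^{i,j,a,a'}_\calG(y,x)$ to be the conjunction of all these constraints (for both formulas) together with $y_{\mathrm{out}}<y'_{\mathrm{out}}$, and $\Phi^{i,j,a,a'}_\calG(x)\equiv\exists y\,\Psi^{i,j,a,a'}_\calG(y,x)$.

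Finally I would verify the three stated properties. By construction $\Psi^{i,j,a,a'}_\calG$ is quantifier free and every polynomial occurring in it has total degree at most $2$; its encoding size is polynomial in $|\calG|$, since the formula of Lemma~\ref{lem:max-expected-poly} is already of polynomial size and each gate or monomial contributes only polynomially many fresh variables and atoms. For correctness, fix $b\in B^{>0}$: then every division denominator equals some $\Prob_b(I_{i,j})>0$, so a bottom-up induction over the formula shows that the constraint system has a \emph{unique} solution $y$, namely the tuple of values actually computed by the sub-formulas --- the side conditions $v>0$ discard the spurious solutions of $wv=u$ that arise when $v=0$, and $\bigwedge_\ell(w\ge u_\ell)\wedge\bigvee_\ell(w=u_\ell)$ pins $w$ to $\max_\ell u_\ell$. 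Hence $\exists y\,\Psi^{i,j,a,a'}_\calG(y,b)$ holds iff $\MU^{j,a}_i(b)<\MU^{j,a'}_i(b)$, which is exactly the claim.

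The only non-routine point --- the hard part, such as it is --- will be the handling of the division gates: replacing $w=u/v$ by the quadratic relation $wv=u$ is faithful only after one rules out $v=0$, so one must carry the inequality $v>0$ along and justify it, and this is precisely where we use the structural observation from Lemma~\ref{lem:max-expected-poly} that every denominator appearing in (\ref{eq:dyn-program}) is the reaching probability of an information set and is therefore strictly positive on $B^{>0}$. Expanding high-degree monomials into chains of binary products, and rewriting $\max$ as a degree-$1$ Boolean combination, is then just bookkeeping, and tracking the (polynomially many) auxiliary variables and atoms is straightforward.
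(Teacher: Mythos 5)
Your proposal is correct and takes essentially the same route as the paper: the paper likewise ``flattens'' the $\{+,-,*,/,\max\}$-formula of Lemma~\ref{lem:max-expected-poly} by an induction on formula depth, introducing an existentially quantified auxiliary variable per subformula, encoding $w=u/v$ as the quadratic $wv=u$ and $\max$ via inequalities plus a disjunction, and concluding with the strict inequality between the two output values, with the uniqueness of the solution on $B^{>0}$ resting on the positivity of the denominators $\Prob_b(I_{i,j})$. The only cosmetic difference is your explicit side condition $v>0$ at division gates, which the paper omits because the inductive hypothesis already pins the denominator variable to its (positive) value on $B^{>0}$; including it is harmless.
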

\begin{proof}
Note that $\MU^{j,a}_i(x)
< \MU^{j,a}_i(x)$ is an inequality between two
$\{+,-,*,/,\max \}$-formulas (over the variables $x$)
of encoding size polynomial in $|\calG|$.  
We will show that any such inequality, over any subset of
Euclidean space where the formula is always well-defined (i.e., involves no
division by $0$),  can be expressed by an existential
theory of reals formula whose encoding size is polynomial in the original
inequality (and thus polynomial in $|\calG|$).

Specifically, suppose $x$ is an $m$-vector of variables.
By induction on the depth of any $\{+,-,*,/,\max \}$-formula,
$\zeta(x)$, which is well-defined over the domain $B^{> 0}$
(i.e., which involves no sub formula that performs a 
 division by $0$, when $x$ is anywhere in that domain),
we prove that there is a existential theory
of reals formula 
$\Psi_{\zeta}(y_0, y, x)$, of size linear in the size of $\zeta$, 
with auxiliary variable $y_0$ and a vector of auxiliary variables $y$,
such that  for all $x \in B^{> 0}$, 
$\{ y_0 \in \real \mid \exists y  \Psi_{\zeta}(y_0, y, x) \} = 
\{ \zeta(x) \}$.
In other words, for the values $x$ in the domain $B^{> 0}$,
the formula $\exists y  \Psi_{\zeta}(y_0, y, x)$
``expresses'' a unique value, $y_0 \in \real$, 
which is the same value as $\zeta(x)$.

The base case, when $\zeta(x)$ is a variable from $x$,
or a rational
constant, is trivial.

Inductively, suppose $\zeta(x) :=  \zeta_1(x) \odot
\zeta_2(x)$,   where $\odot \in \{ + , -, *, /, \max \}$.
By the inductive hypothesis,  there is a formula  
$\exists y \Psi_{\zeta_1}(y_0,y, x)$  
using which $y_0$ expresses
$\zeta_1(x)$, and which has size linear in that of $\zeta_1$, and likewise 
there is a formula 
$\exists  y' \Psi_{\zeta_2}(y'_0,y', x)$ using which $y'_0$
expresses 
$\zeta_2(x)$, and which has size linear in that of $\zeta_2$.

We construct a new formula 
$\exists y_0, y'_0, y, y' \Psi_{zeta}(y''_0,y_0,y'_0,y,y',x)$
as follows.  If $\odot \in \{+,*, -\}$, then
$\Psi_{\zeta}(y''_0,y_0, y'_0, y,y',x) := 
(y''_0 =  y_0 \odot y'_0   \wedge \Psi_{\zeta_1}(y_0,y,x) 
\wedge \Psi_{\zeta_2}(y'_0,y',x))$.

If $\odot  \doteq / $,  then
$\Psi_{\zeta}(y''_0,y_0, y'_0, y,y',x) := 
(y''_0 * y'_0 =  y_0   \wedge \Psi_{\zeta_1}(y_0,y,x) 
\wedge \Psi_{\zeta_2}(y'_0,y',x))$.

If $\odot  \doteq  \max $, then 
$\Psi_{\zeta}(y''_0,y_0, y'_0, y,y',x) := 
( y''_0 \geq y_0 \wedge y''_0 \geq y_0 \wedge (y''_0 \leq y_0 \vee 
y''_0 \leq y'_0) \wedge  \Psi_{\zeta_1}(y_0,y,x) 
\wedge \Psi_{\zeta_2}(y'_0,y',x))$.
(The case with $\odot  \doteq \min $ is entirely similar
and symmetric
 to the $\max$ case. )

Note that, by induction, the new formula 
$\exists y_0, y'_0, y, y' \Psi_{\zeta}(y''_0,y_0,y'_0,y,y',x)$
again has encoding size linear in the encoding size of $\zeta(x)$,
and furthermore note that the total degree of all polynomials
in $\Psi_{\zeta}(y''_0,y_0,y'_0,y,y',x)$ remains $2$.

Finally, for $x$ in the domain $B^{> 0}$, let $\MU^{j,a}_i(x)$ be
expressed by  $\exists y \Psi_{\MU^{j,a}_i}(y_0,y,x)$,
and let $\MU^{j,a'}_i(x)$ be
expressed by $\exists y' \Psi_{\MU^{j,a'}_i}(y'_0,y',x)$.
We can express the inequality 
$\MU^{j,a}_i(x) < \MU^{j,a'}_i(x)$
using the following existential theory of reals formula:
$$\Phi^{i,j,a,a'}_\calG(x)  := 
\exists y_0 , y'_0 , y, y'   \ ( \ y_0 < y'_0  \wedge
\Psi_{\MU^{j,a}_i}(y_0,y,x) \wedge 
\Psi_{\MU^{j,a'}_i}(y'_0,y',x) \ ) .$$
 \qed
\end{proof}

\begin{lemma}
\label{LEM:AlmostVeryNear-QPE}
There is a polynomial $q(\cdot)$, such that, 
for any EFGPR, $\calG$, and any $\delta = 2^{-k} > 0$,
where $k$ is a positive integer,
for any $\epsilon \leq  \frac{1}{2^{2^{q(|\calG| + k)}}}$,
any $\epsilon$-QPE of $\calG$ is $\delta$-close (in $\ell_\infty$)
to a QPE.
\end{lemma}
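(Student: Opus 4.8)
The plan is to mirror the proof of Lemma~\ref{LEM:AlmostVeryNear}, with the first-order formula for an extensive-form $\epsilon$-PE replaced by one for an $\epsilon$-QPE. The one genuinely new ingredient is that the $\epsilon$-QPE condition is most conveniently stated (via the alternative characterization around~(\ref{eq:equiv-def-of-qpe})) in terms of the functions $\MU^{j,a}_i(x)$, which are not polynomials but $\{+,-,*,/,\max\}$-formulas that are only well-defined on $B^{>0}$. This is exactly what Proposition~\ref{prop:prop-formula-MU} buys us: by (the construction in) that proof, each comparison $\MU^{j,a}_i(x)\ge\MU^{j,a'}_i(x)$ is expressible by an existential-theory-of-reals formula $\exists y\,\Theta^{i,j,a,a'}_\calG(y,x)$ with $\Theta^{i,j,a,a'}_\calG$ quantifier-free of total degree $2$ and of encoding size polynomial in $|\calG|$ (one expresses each of $\MU^{j,a}_i$ and $\MU^{j,a'}_i$ as the unique value of an auxiliary variable and then asserts the $\ge$ between those variables). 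Note that, unlike Lemma~\ref{LEM:AlmostVeryNear}, we do not need the belief system or the SE part here, since the statement only concerns QPE.

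First I would write down $\epsQPE(x,\epsilon)$ as the conjunction of: $x_{i,j,a}>0$ for all $i,j,a$; $\sum_{a\in\calA_{i,j}}x_{i,j,a}=1$ for all $i,j$; and, for all $i,j$ and all $a,a'\in\calA_{i,j}$, the disjunction $\bigl(\exists y\,\Theta^{i,j,a,a'}_\calG(y,x)\bigr)\vee(x_{i,j,a}\le\epsilon)$. Since the $\exists y$ blocks can be pulled to the front (the disjunct $x_{i,j,a}\le\epsilon$ does not mention $y$), $\epsQPE(x,\epsilon)$ is, after prenexing, $\exists(\mathrm{aux})$ of a quantifier-free formula; on any satisfying assignment $x\in B^{>0}$, so by Proposition~\ref{prop:prop-formula-MU} it holds iff $x$ is an $\epsilon$-QPE of $\calG$, and for $x\notin B^{>0}$ it is false. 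All polynomials have total degree at most $2$, all coefficient bitsizes are polynomial in $|\calG|$, and the number of auxiliary variables is polynomial in $|\calG|$. Then, exactly as in Lemma~\ref{LEM:AlmostVeryNear}, with $\delta=2^{-k}$ hard-wired, I set
\begin{gather*}
\QPE(x)\;:\equiv\;\forall\epsilon>0\ \exists x'\colon\ \epsQPE(x',\epsilon)\wedge\norm{x-x'}^2<\epsilon,\\
\QPEbound(\epsilon)\;:\equiv\;\forall x\ \exists x^*\colon\ (\epsilon>0)\wedge\bigl(\neg\epsQPE(x,\epsilon)\vee(\QPE(x^*)\wedge\norm{x-x^*}^2<\delta^2)\bigr).
\end{gather*}
(Here $\QPE(x)$ correctly captures ``$x$ is a QPE'' because the $\epsilon$-QPE condition is monotone in $\epsilon$, so a sequence of $\epsilon_n$-QPEs with $\epsilon_n\to0$ approaching $x$ witnesses $\QPE(x)$ and conversely.) After prenexing, $\QPEbound(\epsilon)$ has a single free variable $\epsilon$, a constant number of quantifier blocks each of size polynomial in $|\calG|$, all polynomials of degree $O(1)$, and coefficient bitsizes polynomial in $|\calG|+k$.

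Next I would invoke the $\epsilon$-QPE case of Lemma~\ref{LEM:AlmostNear} to conclude that $\QPEbound(\epsilon)$ is satisfiable, and apply quantifier elimination (\cite[Algorithm 14.6]{BasuPollackRoy2011}) to obtain an equivalent quantifier-free Boolean combination of sign conditions on univariate polynomials in $\epsilon$; with a constant number of blocks and polynomially many variables, the degree/bitsize bounds attached to that algorithm give polynomials of degree $2^{\mathrm{poly}(|\calG|)}$ with coefficients of bitsize $2^{\mathrm{poly}(|\calG|+k)}$. Applying the root-separation bound (\cite[Theorem 13.14]{BasuPollackRoy2011}) then yields a satisfying value $\epsilon^*\ge 1/2^{2^{q(|\calG|+k)}}$ for a suitable polynomial $q$, and by the semantics of $\QPEbound$ the conclusion holds for every $\epsilon\le\epsilon^*$. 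The only mildly delicate point, and the place to be careful, is checking that the $\MU$-formulas $\Theta^{i,j,a,a'}_\calG$ cause no trouble when the ambient $\forall x$ ranges outside $B^{>0}$: there the conjunct $x_{i,j,a}>0$ (or $\sum_a x_{i,j,a}=1$) already falsifies $\epsQPE$, so the possibly ill-conditioned division constraints buried inside $\Theta^{i,j,a,a'}_\calG$ are harmless. Everything else is routine bookkeeping mirroring Lemma~\ref{LEM:AlmostVeryNear}, and the main conceptual work — turning the division-and-max structure of $\MU^{j,a}_i$ into a bounded-degree, polynomial-size first-order formula — has already been done in Proposition~\ref{prop:prop-formula-MU}.
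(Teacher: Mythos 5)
Your proposal is correct and follows essentially the same route as the paper: define $\epsQPE$, $\QPE$, and $\QPEbound$ by mirroring Lemma \ref{LEM:AlmostVeryNear}, use Proposition \ref{prop:prop-formula-MU} to express the $\MU^{j,a}_i$-comparisons as bounded-degree, polynomial-size formulas in the theory of reals, then combine Lemma \ref{LEM:AlmostNear} with quantifier elimination and the root-bound theorem of \cite{BasuPollackRoy2011} to extract $\epsilon^*$. The only (harmless) deviation is that you state the comparison positively as an existential formula for $\MU^{j,a}_i(x)\geq\MU^{j,a'}_i(x)$, whereas the paper uses the negation $\neg\Phi^{i,j,a,a'}$ of the strict-inequality formula, yielding a universal $\epsQPE$; both give the same quantifier-block and degree/bitsize bookkeeping.
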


\begin{proof}  
The proof is entirely analogous to that of Lemma \ref{LEM:AlmostVeryNear}.
We spell out the details for completeness.

\paragraph{First-order formula for $\epsilon$-quasi-perfect equilibrium:} 

Let $\epsQPE(x,\epsilon)$ be the first-order formula (a
{\em universal} formula in the theory of reals), with free
variables $x \in \RR^m$ and  $\epsilon \in \RR$, defined by the conjunction
of the following formulas, which together express the fact 
that $x \in B^{> 0}$ 
is
a behavior profile that is
an extensive form $\epsilon$-QPE of the given EFGPR, $\calG$:
\begin{gather*}
x_{i,j,a} > 0,  \quad \text{for } i \in [n] , j \in [d_i] \text{, and }  a \in \calA_{i,j} \enspace ,\\
\sum_{a \in \calA_{i,j}} x_{i,j,a} = 1, \quad \text{for } i \in [n] \text{ and }  j \in [d_i] \ \enspace ,\\
(\neg \Phi^{i,j,a,a'}_i(x)) \vee \left(x_{i,j,a} \leq \epsilon \right) ,
\quad \text{for } i \in [n] \text{, } j \in [d_i], \text{ and }   a, a' \in \calA_{i,j} \enspace .
 \enspace 
\end{gather*}

\noindent Note that by Proposition 
\ref{prop:prop-formula-MU},
$\Phi^{i,j,a,a'}_i(x)$
is expressible as a 
existential formula in the theory of reals, whose
size is polynomial in $|\calG|$.  Thus, the conjunction
$\epsQPE(x,\epsilon)$ of all of the above formulas
is expressible as a universal formula in the theory of reals. 

\paragraph{First-order formula for  quasi-perfect equilibrium:}
Let $\QPE(x)$ denote the following first-order formula with free
variables $x\in \RR^m$, 
expressing that $x$ is a behavior profile that is  a QPE of $\calG$:
\begin{gather*}
\forall \epsilon >0 \: \exists x' \in \RR^m
: \epsQPE(x',\epsilon) \wedge \norm{x-x'}^2 < \epsilon \enspace .
\end{gather*}

\paragraph{First-order formula for ``almost implies near'' statement:}

Given a fixed $\delta > 0$, 
let $\QPEbound(\epsilon)$ denote the following
first-order formula with free variable $\epsilon\in\RR$, denoting that
any $\epsilon$-quasi-perfect equilibrium, $x$, of $\calG$ is $\delta$-close to a
QPE:
\begin{gather*}
\forall x \in \RR^m  \; \exists x^* \in \RR^m : \\
(\epsilon>0) \wedge \left(\neg \epsQPE(x,\epsilon) \vee 
\left(\QPE(x^*) \wedge \norm{x-x^*}^2 < \delta^2 \right)\right) \enspace .
\end{gather*}

Suppose $\delta^2 = 2^{-k}$, for some positive integer $k$, and let 
$q'(\cdot)$ be some fixed polynomial such that
$\tau = q'(|\calG|) + k$ 
is at least the maximum encoding size of any coefficient in 
any of the polynomials involved in   $\QPEbound(\epsilon)$.
(We know that such an explicit polynomial $q'(\cdot)$  exists, given  
 the polynomial bounds as a function of $\calG$ on the encoding size of 
 the various parts of the
 formula $\QPEbound(\epsilon)$.)   

\begin{itemize}
\item The total degree of all involved polynomials is at most $2$.
\item The bitsize of coefficients is at most $\tau$.
\item The number of free variables is $1$.
\item 
Converting to prenex normal form, the formula has 5 blocks of
  quantifiers, of sizes at most $m$, $m$, $1$, $m$, 
and $q''(|\calG|)$, for some fixed polynomial $q''(\cdot)$, respectively.
\end{itemize}
 
We now apply quantifier elimination
\cite[Algorithm 14.6, page 555]{BasuPollackRoy2011} to the formula $\QPEbound(\epsilon)$, converting it into an
equivalent quantifier free formula $\QPEbound'(\epsilon)$ with a single free variable $\epsilon$. 
This yields Boolean formula whose atoms are sign conditions on various polynomials in $\epsilon$. 
Since $m \leq | \calG|$,
the bounds given by 
\cite{BasuPollackRoy2011}
in association with Algorithm 14.6 imply that, for some fixed polynomial $q'''(\cdot)$, we have that in this formula:
\begin{itemize}
\item The degree of all involved polynomials (which are univariate polynomials in $\epsilon$) 
is at most $2^{q'''(|\calG| + k)}$.
\item The bitsize of all coefficients is at most: $2^{q'''(|\calG| + k)}$.
\end{itemize}

By Lemma~\ref{LEM:AlmostNear}, we know that there exists an $\epsilon>0$
so that the formula $\QPEbound'(\epsilon)$ is true. We now apply 
Theorem 13.14 of \cite[Page 521]{BasuPollackRoy2011} 
to the set of polynomials that are atoms of $\QPEbound'(\epsilon)$  and conclude that $\QPEbound'(\epsilon^*)$ is 
true for some
$\epsilon^* \geq 2^{-2^{q'''(|\calG| + k)^2}}$. By the semantics of the formula $\QPEbound(\epsilon)$, we also have that 
$\QPEbound(\epsilon')$ is true for all positive $\epsilon' \leq \epsilon^*$, and the statement of the lemma follows. \qed 
\end{proof}

\noindent {\bf Proof of Theorem \ref{thm:main-fixpa-result}, part (2.)}  
The proof is completely analogous to the proof of parts (1.) and (3.).
We use the algebraically defined functions $H^{\epsilon}_\calG : B \rightarrow B^{\epsilon}$,
which are parametrized by an input variable $\epsilon$.
We ``instantiate'' $\epsilon$ with $\epsilon^* =  2^{-2^{q'''(|\calG| + k)^2}}$,  
where $k = \lceil - \log((\delta/2)^2) \rceil$.
We know we can define $\epsilon^*$
using an algebraic circuit having encoding size $q'''(|\calG| + k)^2$.
We thus can construct an $\{+, - ,*,/,\max,\min \}$-circuit $C(x)$, 
having encoding size polynomial in $|\calG|$ and $\size(\delta)$,
which defines the
function $H^{\epsilon^*}_\calG: B \rightarrow B^{\epsilon^*}$ on the domain $B$, 
and such that  every fixed point of $H^{\epsilon^*}_\calG$
is a $\epsilon^*$-QPE of $\calG$, 
which by Lemma \ref{LEM:AlmostVeryNear-QPE} is also $(\delta/2)$-close (in $\ell_\infty$) to 
an actual QPE.
Thus, applying the triangle inequality, if we
approximate a fixed point of $H^{\epsilon^*}_\calG$ within $\ell_\infty$ distance
$(\delta/2)$, we will have approximated a QPE of $\calG$ within $\ell_\infty$ distance $\delta$.
This shows that $\delta$-approximating a QPE, given $\calG$ and given $\delta > 0$, is in $\FIXPA$.
\qed

\section{Computing a 
$\delta$-almost-$\epsilon$-PE
\& $\delta$-almost-$\epsilon$-QPE is \PPAD-complete.}

\label{sec:delta-almost-epsilon-PE}

In this section we again exploit the functions $F^{\epsilon}_{\calG}(x)$
and $H^{\epsilon}_\calG(x)$,
for a EFGPR, $\calG$, 
devised in Section \ref{sec:epsilon-PE-char}  for $\epsilon$-PEs
and $\epsilon$-QPEs.
This time we do so in order to show that computing
a 
$\delta$-almost-$\epsilon$-PE, 
given $\calG$ and 
$\delta > 0$ and $\epsilon > 0$,
is \PPAD-complete.
We also show that the notion of $\delta$-almost-$\epsilon$-PE suitably 
``refines'' $\delta$-almost-SGPE (and thus also $\delta$-almost-NE), and that 
as a consequence computing a $\delta$-almost-SGPE (or a $\delta$-almost-NE),
given $\calG$ and given $\delta > 0$, is \PPAD-complete 
(\cite{DFP06}).
Furthermore, we also show computing a $\delta$-almost-$\epsilon$-QPE,
given $\calG$, and given $\delta > 0$ and $\epsilon > 0$ is 
PPAD-complete.

We have not yet actual defined the ``almost'' relaxation
for QPE, which we call 
$\delta$-almost-$\epsilon$-QPE.  We do so now:
for $\delta \geq 0$,
a behavior profile $b \in B$ is called a 
{\em $\delta$-almost $\epsilon$-quasi-perfect equilibrium} 
($\delta$-almost-$\epsilon$-QPE) of $\calG$, if it is (a): {\em fully mixed}, 
$b \in B^{> 0}$, and (b):  for all players $i$, 
all $j \in [d_i]$, and all
actions $a, a' \in {\mathcal A}_{i,j}$, 
if  $\MU^{j,a}_i(b) <  \MU^{j,a'}_i(b)  - \delta$   
then  $b_{i,j}(a) \leq \epsilon$.
Note that when $\delta = 0$ this definition is equivalent to
$\epsilon$-QPE (this is because for a {\em fully mixed} profile $b$,
$\MU^{j,a}_i(b) < \MU^{j,a'}_i(b)$ holds {\em if and only if} 
$\max_{b'_i \in B_i} U_i(b \mid_{(i,j)} (b'_i \mid \pi^{a}_{i,j}))  < 
 \max_{b''_i \in B_i} U_i(b \mid_{(i,j)} (b''_i \mid \pi^{a'}_{i,j}))$.\footnote{In fact, as noted earlier, 
van Damme \cite{vanDamme84} defines QPE using the strict inequalities $\MU^{j,a}_i(b) < \MU^{j,a'}_i(b)$
instead of  $\max_{b'_i \in B_i} U_i(b \mid_{(i,j)} (b'_i \mid \pi^{a}_{i,j}))  < 
 \max_{b''_i \in B_i} U_i(b \mid_{(i,j)} (b''_i \mid \pi^{a'}_{i,j}))$.}
Thus, our definition is a reasonable 
``almost'' relaxation of $\epsilon$-QPE.

We will make crucial use of some results and definitions from
\cite{EY07}, which we now recall.
Note that the circuit defining $F^{\epsilon}_{\calG}(x)$  associates a function  $F^{\epsilon}_{\calG}:
B^{\epsilon} \rightarrow B^{\epsilon}$ with each given pair $\langle \calG, \epsilon \rangle$,
where the rational value $\epsilon > 0$ is given in binary 
as part of the input.\footnote{In this section it will be more convenient to
view the domain of the function $F^{\epsilon}_{\calG}$ as $B^{\epsilon}$, rather than $B$, because $\epsilon > 0$
will be explicitly given.} 
Thus $|\calG| + \size(\epsilon)$ is the encoding size of the input from 
which  the algebraic circuit for $F^{\epsilon}_{\calG}(x)$ is generated.

Following  \cite{EY07}, we call 
the family of functions $\langle F^{\epsilon}_{\calG}(x) \rangle_{\{\langle \calG, \epsilon
\rangle\}}$, associated
with input pairs $\langle \calG, \epsilon \rangle$,  {\em polynomially continuous}
in their domain $B^{\epsilon}$,
if there is a polynomial $q(z)$ such that for all input pairs 
$\langle \calG, \epsilon \rangle$,  for every rational $\epsilon_1 > 0$, there is
a rational $\delta_1 > 0$, such that $\size(\delta_1) \leq q(|\calG| + \size(\epsilon) + \size(\epsilon_1))$
and such that for all $b,b' \in B^{\epsilon}$:
$$ \| b - b' \|_\infty < \delta_1  \;  \Longrightarrow \;  
\| F^{\epsilon}_{\calG}(b) - F^{\epsilon}_{\calG}(b') \|_\infty < 
\epsilon_1.$$
Again following \cite{EY07}, we call 
the family of functions  $\langle F^{\epsilon}_{\calG}(x) \rangle_{\{\langle \calG, \epsilon
\rangle\}}$ associated
with input instances $\langle \calG, \epsilon \rangle$,
{\em polynomially computable}  if (a):
the domain $B^{\epsilon}$ of the functions $F^{\epsilon}_{\calG}: B^{\epsilon} \rightarrow B^{\epsilon}$ 
is
a convex polytope described by a set of linear inequalities
with rational coefficients that
can be computed from the input $\langle \calG, \epsilon \rangle$ in
polynomial time (note that this is clearly always the case for $B^\epsilon$, 
because $\epsilon > 0$ is part of the input),
and (b): there is a polynomial $q(z)$ such that there is an algorithm 
that given $\langle \calG, \epsilon \rangle$, and given
a rational vector $b \in B^{\epsilon}$,
computes $F^{\epsilon}_{\calG}(b)$ 
(which is of course also a rational vector)
in time $q(|\calG| + \size(\epsilon) + \size(b))$.
We need the following Lemma:

\begin{lemma}
\label{lem:poly-con-poly-comp}
The family of functions 
$\langle F^{\epsilon}_{\calG}(x) \rangle_{\{\langle \calG, \epsilon
\rangle\}}$ for EFGPRs defined in Section \ref{approx-of-PE} 
(equation (\ref{eq:fixed-point-functions}))
is both (a.) polynomially computable and (b.) polynomially continuous.
\end{lemma}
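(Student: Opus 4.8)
The plan is to prove the two assertions separately, in both cases by working directly from the explicit description of $F^{\epsilon}_{\calG}(x)$ given in equation~(\ref{eq:fixed-point-functions}) and, more concretely, from the sorting-network implementation in the proof of Lemma~\ref{new-nash-circuit}, rather than treating the circuit as a black box.

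\emph{Polynomial computability.} The domain requirement is immediate: $B^{\epsilon}$ is cut out by fewer than $4m$ linear inequalities whose coefficients are among $\{0,1,-1,\epsilon\}$, so they are produced from $\langle \calG,\epsilon\rangle$ in polynomial time. For the evaluation requirement I would track encoding sizes through the computation of $F^{\epsilon}_{\calG}(b)$ at a rational $b\in B^{\epsilon}$: by Proposition~\ref{prop:expected-poly} each entry $v(b)_{i,j,a}=U_i(b\mid\pi^a_{i,j})$ is obtained by evaluating a weighted sum of at most $|\Leaves|\le|\calG|$ multilinear monomials of degree at most $\height^\calG$ with coefficients of size $O(|\calG|)$, so $y=h(b)=b+v(b)$ has encoding size polynomial in $|\calG|+\size(b)$; the sorting networks merely permute these numbers; each $t_{i,j}(b)$ is, by~(\ref{eq:t-i-j}), a maximum over $l\le|\calA_{i,j}|$ of $1/l$ times an integer-weighted sum of at most $|\calA_{i,j}|$ such quantities (plus $\epsilon$-multiples), which adds only $O(\log|\calG|)$ bits; and each output $\max(y_{i,j,a}-t_{i,j}(b),\epsilon)$ costs one subtraction and one comparison. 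Since there are only polynomially many gates in total, the whole evaluation runs in time polynomial in $|\calG|+\size(\epsilon)+\size(b)$. The point to be careful about here is that a generic polynomial-size $\{+,*,\max\}$-circuit can produce numbers of exponential bit-length by repeated squaring, so one genuinely needs the specific bounded-degree / bounded-coefficient / sorting-based structure of this particular circuit, which the bound above exploits.

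\emph{Polynomial continuity.} I would exhibit a single Lipschitz constant $L$ for $F^{\epsilon}_{\calG}$ on $B^{\epsilon}$, with respect to $\|\cdot\|_\infty$, with $\size(L)$ bounded by a fixed polynomial in $|\calG|$ and, crucially, independent of $\epsilon$. The ingredients: (i) on $[0,1]^m\supseteq B^{\epsilon}$ each $U_i(x\mid\pi^a_{i,j})$ has every partial derivative bounded by $|\Leaves|\cdot M_\calG$ (being a sum of at most $|\Leaves|$ multilinear monomials of coefficient magnitude at most $M_\calG$), so $h(x)=x+v(x)$ is Lipschitz with constant $L_1:=1+m\,|\Leaves|\,M_\calG$; (ii) the sorting map realized by the sorting networks is $1$-Lipschitz in $\|\cdot\|_\infty$ (each order statistic is); (iii) each expression $(1/l)\bigl(\sum_{k=1}^{l}z_{i,j,a_k}+(|\calA_{i,j}|-l)\epsilon-1\bigr)$ has $\ell_1$-coefficient sum equal to $1$ in the sorted variables, so $t_{i,j}$, a maximum of such, is $1$-Lipschitz in $y_{i,j}$ (the $\epsilon$-term and the constant $-1$ do not depend on $y$); (iv) the outer map $x\mapsto\max(h_{i,j,a}(x)-t_{i,j}(x),\epsilon)$ is $1$-Lipschitz in $(h_{i,j,a},t_{i,j})$ since $\epsilon$ is a constant there. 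Composing, $F^{\epsilon}_{\calG}$ is Lipschitz on $B^{\epsilon}$ with constant $L:=2L_1$, and $\size(L)$ is polynomial in $|\calG|$ because $m,|\Leaves|\le|\calG|$ and $\size(M_\calG)\le|\calG|$. Given a rational $\epsilon_1>0$, taking $\delta_1:=\epsilon_1/(L+1)$ gives $\|b-b'\|_\infty<\delta_1\Rightarrow\|F^{\epsilon}_{\calG}(b)-F^{\epsilon}_{\calG}(b')\|_\infty\le L\delta_1<\epsilon_1$, with $\size(\delta_1)\le\size(\epsilon_1)+\size(L)+O(1)$, which is bounded by a fixed polynomial $q(|\calG|+\size(\epsilon)+\size(\epsilon_1))$. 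The main (and only mildly delicate) obstacle is checking that the composed Lipschitz constant remains polynomially bounded and does not degrade as $\epsilon\to0$; this reduces to the observation that $\epsilon$ enters $F^{\epsilon}_{\calG}$ only as an additive constant, hence contributes nothing to the modulus of continuity in $x$.
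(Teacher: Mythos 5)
Your proposal is correct and follows essentially the same route as the paper's proof: part (a.) by direct evaluation of the explicit formula with bookkeeping of rational encoding sizes, and part (b.) by composing Lipschitz bounds for $h(x)$, the ($1$-Lipschitz) sort, the $\max$-formula for $t_{i,j}$, and the outer $\max$, then setting $\delta_1$ proportional to $\epsilon_1$ divided by the Lipschitz constant. Your minor variations (bounding partial derivatives of the multilinear payoff polynomials rather than inducting on monomial products, observing that $t_{i,j}$ is $1$-Lipschitz in the sorted vector, and noting the constant is independent of $\epsilon$) yield a slightly sharper constant but do not change the argument in any essential way.
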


\begin{proof}
\mbox{}

\noindent (a.): First, we observe
that the family of functions 
$\langle F^{\epsilon}_{\calG}(x) \rangle_{\{\langle \calG, \epsilon
\rangle\}}$ for EFGPRs
is  polynomially computable.
This follows easily from the definition of 
$F^{\epsilon}_{\calG}(x)$ given
Section \ref{approx-of-PE}  and 
in equations (\ref{eq:fixed-point-functions})
and (\ref{eq:t-i-j}).
Specifically, given a rational vector $b \in B^{\epsilon}$,
to compute $F^{\epsilon}_{\calG}(b)$,  we must first compute
a vector $y := h(b) := b + v(b)$,  where 
$v(b)_{i,j,a} := U_{i,j}(b \mid \pi^a_{i,j})
= U_i(b \mid \pi^a_{i,j})$.
Note that, given a rational vector $b \in B^{\epsilon}$,  each value 
$y_{i,j,a} = h(b)_{i,j,a} = b_{i,j,a} +  U_i(b \mid \pi^a_{i,j})$ is 
clearly computable in P-time,
because $U_i(x \mid \pi^a_{i,j})$ is given by a polynomial
in $x$ whose encoding size, as a sum of multilinear monomials, 
is polynomial in $| \calG | + \size(\epsilon)$.
Note also that the encoding size of the resulting rational vector $y$ is
clearly polynomial in $| \calG | + \size(\epsilon) + \size(b)$.
Next, having computed the vector $y$,  we must sort each subvector $y_{i,j}$,
associated with agent $(i,j)$, into a non-increasing sequence:
$z_{i,j} = (z_{i,j,a_{1}}, z_{i,j,a_{2}}, \ldots, z_{i,j,a_{{|\calA_{i,j}|}}})$.
We can clearly do so in P-time.
Next, for each agent $(i,j)$, we can clearly compute $t_{i,j}(b)$ in P-time 
using  the simple $\{ \max, + \}$ formula over the sorted vector
of inputs $z_{i,j}$
given in  equation (\ref{eq:t-i-j}).
Finally, having computed $t_{i,j}(b)$ and $y = h(b)$ in P-time,  
we have from equation
(\ref{eq:fixed-point-functions})
that $F^{\epsilon}_{\calG}(b)_{i,j,a} = \max ( h_{i,j,a}(b) - t_{i,j}(b) , \epsilon)$.
Thus we can compute $F^{\epsilon}_{\calG}(b)$ in time
polynomial in $|\calG| + \size(\epsilon) + \size(b)$,
given $\calG$, $\epsilon > 0$, and any rational vector $b \in B^{\epsilon}$.

\noindent (b.): Next, we want to show that the function family 
$\langle F^{\epsilon}_{\calG}(x) \rangle_{\{\langle \calG, \epsilon
\rangle\}}$ for EFGPRs
is  polynomially continuous.
We will in fact show that 
in the domain $B^\epsilon$ the function
$F^{\epsilon}_{\calG}(x)$ is
Lipschitz continuous 
with Lipschitz constant 
$2^{q(|\calG| + \size(\epsilon))}$  (with respect to the $\ell_\infty$ norm),
for some polynomial $q(\cdot)$.  In other words,
for all $b, b' \in B^{\epsilon}$,   we have:
\begin{equation}
\label{eq:lip-of-F}
\|F^{\epsilon}_{\calG}(b) -
F^{\epsilon}_{\calG}(b') \|_\infty  \leq 
2^{q(|\calG| + \size(\epsilon))} \cdot \|b - b' \|_\infty .
\end{equation}
Of course, it immediate follows from (\ref{eq:lip-of-F})
is that the family of functions
$\langle F^{\epsilon}_{\calG}(x) \rangle_{\{\langle \calG, \epsilon
\rangle\}}$ is polynomially continuous: in the definition
of polynomially continuity, take $\delta_1 := 
\frac{1}{2^{q(|\calG| + \size(\epsilon))}} \cdot \epsilon_1$,
it then follows from (\ref{eq:lip-of-F}) that 
for all $b, b' \in B^{\epsilon}$,   $\| b - b' \|_\infty < \delta_1
\Longrightarrow   \| F^{\epsilon}_{\calG}(b) - F^{\epsilon}_{\calG}(b') 
\|_\infty < \epsilon_1$.
Furthermore, clearly
$\size(\delta_1)  \leq q^*(|\calG| + \size(\epsilon) + \size(\epsilon_1))$,
for some fixed polynomial $q^*(\cdot)$.
So, we only need to establish (\ref{eq:lip-of-F}).

Consider any $b, b' \in B^{\epsilon}$.
First, let us bound $\| h(b) - h(b') \|_\infty$.
Recall that $h_{i,j,a}(x) = x_{i,j,a} + U_i(x \mid \pi^a_{i,j})$.
Moreover, we know by Proposition
\ref{prop:expected-poly}  that $U_i(x \mid \pi^a_{i,j})$ is given by 
an explicit polynomial (a weighted sum of multilinear monomials) 
in the variables $x$,
with degree bounded by the height $h^\calG$ of the game tree $T$, and 
with encoding size polynomial in $|\calG|$.

First, consider any monomial $f(x) = \alpha \cdot x_{i_1} \ldots x_{i_k}$.
Note that in the domain $B^\epsilon \subseteq [0,1]^d$ (for a suitable dimension
$d$), the monomial
$f(x)$ is Lipschitz continuous 
with Lipschitz constant $|\alpha| k$
(with respect to the $\ell_\infty$ norm).
To see this simple fact, note that for $b, b' \in B^\epsilon$, we have
$|f(b) - f(b')|  \leq 
|\alpha| |b_{i_1} \ldots b_{i_k} - b'_{i_1} \ldots b'_{i_k} |$.
Furthermore, by induction on $k \geq 1$, we have that for 
$b, b' \in [0,1]^k$,
$| b_{1} \ldots b_{k} - b'_{1} \ldots b'_{k} | \leq k \| b - b' \|_\infty$.
The base case, $k=1$, is trivial.
For the inductive case, we have:
\begin{eqnarray*}
|b_{1} \ldots b_{k} - b'_{1} \ldots b'_{k} | 
& = &  | b_1 \ldots b_k - b_1 b'_2 \ldots b'_k  + b_1 b'_2 \ldots b'_k 
- b'_1 \ldots b'_k |\\
& \leq &  | b_1 \ldots b_k - b_1 b'_2 \ldots b'_k |  + | b_1 b'_2 \ldots b'_k 
- b'_1 \ldots b'_k |\\
& = &  |b_1| \cdot |b_2 \ldots b_k - b'_2 \ldots b'_k|  +  
|b'_2 \ldots b'_k| \cdot |b_1 - b'_1|\\
& \leq &  |b_1| \cdot (k-1) \|b - b' \|_\infty  +
|b_1 - b'_1| \cdot |b'_2 \ldots b'_k|    \quad \quad \mbox{(by inductive hypothesis)}\\
& \leq & (k-1) \|b - b' \|_\infty +  |b_1 - b'_1|   \quad \quad
\mbox{(because $|b_1| \in [0,1]$ and $|b'_2 \ldots b'_k| \in [0,1]$)}\\
& \leq & k \| b - b' \|_\infty.
\end{eqnarray*}

Now suppose that the polynomial $h_{i,j,a}(x) = x_{i,j,a} + U_i(x \mid \pi^a_{i,j})$
is the sum of $M_{i,j,a}$ weighted monomials, and that the maximum
absolute value of a coefficient of any of the monomials is $A^{\max}_{i,j,a}$.
Then by the above, for any $b , b' \in B^\epsilon$, we have
$|h_{i,j,a}(b) - h_{i,j,a}(b') |  \leq M_{i,j,a} \cdot A^{\max}_{i,j,a} \|b - b'\|_\infty$.
Let $M^{\max} = \max_{i,j,a} M_{i,j,a}$ , and let $A^{\max}= \max_{i,j,a} A^{\max}_{i,j,a}$.
Then we have $\| h(b) - h(b')\|_\infty \leq M^{\max} \cdot A^{\max} \cdot \|b - b' \|_\infty$.   Thus, clearly $h(x)$ is Lipschitz continuous in domain $B^\epsilon$,
with Lipschitz constant $M^{\max} \cdot A^{\max}$, which is clearly upper bounded
by $2^{q(|\calG|)}$ for some polynomial $q(\cdot)$.

Next, note that the sort function has Lipschitz constant $1$,  with respect to 
$\ell_\infty$.  
In other words,  if $\sort(y)$ is a function that
takes a vector $y \in \real^k$ as input, and yields its (non-increasing)
sort, $\sort(y) \in \real^k$, then for all $y, y' \in \real^k$,
$\| \sort(y) - \sort(y') \|_\infty \leq \| y - y' \|_\infty$.

For completeness, we provide a proof of this easy fact.
Suppose for contradiction that $|\sort(y)_{i^*} - \sort(y')_{i^*} | =
\| \sort(y) - \sort(y') \|_\infty >  \| y - y' \|_\infty$,
for some index $i^* \in [k]$.
Define the permutations $\pi$ and $\pi'$ of $[k]$,
such that
for all $i \in [k]$, $\sort(y)_i = y_{\pi(i)}$  and $\sort(y')_i =  y_{\pi'(i)}$. 
Suppose, wlog, that $y_{\pi(i^*)} = \sort(y)_{i^*} < \sort(y')_{i^*} = y'_{\pi'(i^*)}$.
Since $|\{ \pi(1),\ldots, \pi(i^*) \}| = i^* >  i^*-1 = |\{\pi'(1),\ldots, \pi'(i^*-1)\}|$, there must exist
an $r \in \{1, \ldots, i^*\}$ such that
$\pi(r) \in \{\pi'(i^*), \pi'(i^*+1), \ldots, \pi'(k) \}$.
In other words,   
$y_{\pi(r)} \leq y_{\pi(i^*)} = \sort(y)_{i^*} < \sort(y')_{i^*} = y'_{\pi'(i^*)} \leq
y'_{\pi(r)}$.   Thus $\| \sort(y) - \sort(y') \|_\infty =
| \sort(y)_{i^*} - \sort(y')_{i^*} | \leq  |y'_{\pi(r)} - y_{\pi(r)}| \leq \| y' - y \|_\infty$. 

Note also that the composition $f_1(f_2(x))$ of 
Lipschitz continuous functions $f_1(y)$ and $f_2(x)$,
where $f_1(y)$ has Lipschitz constant $\beta_1$ and 
$f_2(x)$ has Lipschitz constant $\beta_2$
(both with respect to the $\ell_\infty$ norm),
is Lipschitz continuous with constant
$\beta_1 \cdot \beta_2$  (again with respect to $\ell_\infty$).

Now, consider $t_{i,j}(x)$ as defined by equation (\ref{eq:t-i-j}).
The expression defining $t_{i,j}(x)$ is a maximum over linear (affine) expressions
(using $\epsilon$ as a constant)
with at most $| \calA_{i,j} |$ terms over
the sorted vector of variables $z_{i,j}$.  
Since the $\max$ function has Lipschitz constant $1$
(it is just a component of the sort function), 
it follows that
for all $b , b' \in B^\epsilon$, we have $\| t_{i,j}(b) - t_{i,j}(b') \|_\infty \leq 
2^{q'(|\calG| + \size(\epsilon))} \cdot \| b - b' \|_\infty$ for some polynomial
$q'(\cdot)$.

Finally, since we have $F^\epsilon_\calG(x)_{i,j,a} = \max (h_{i,j,a}(x) - t_{i,j}(x),\epsilon)$,
and since $\max$ has Lipschitz constant $1$, and since the sum of 
two Lipschitz functions with Lipschitz constant $\beta_1$ and $\beta_2$
is a Lipschitz function with Lipschitz constant $\leq \beta_1 + \beta_2$,
we are done:  there is a polynomial $q(\cdot)$ such that for all $b,b' \in B^\epsilon$,
$$\| F^{\epsilon}_\calG(b) - F^{\epsilon}_{\calG}(b') \|_\infty 
\leq 2^{q(|\calG| + \size(\epsilon))} \cdot \| b - b'\|_\infty.$$

\vspace*{-0.27in}

\qed
\end{proof}

In fact, let us remark that 
Lemma \ref{lem:poly-con-poly-comp} is a special case
of a more general fact, 
namely that function families defined
by  $\{+,*,\max, \sort \}$-{\em formulas} whose
encoding size is polynomial in the input instance,
over a bounded domain such as $B^\epsilon$, are
necessarily polynomially computable and polynomially continuous.
The proof of the next lemma will argue this more explicitly.

\begin{lemma}
\label{lem:poly-con-poly-comp-qpe}
The family of functions 
$\langle H^{\epsilon}_{\calG}(x) \rangle_{\{\langle \calG, \epsilon
\rangle\}}$ for EFGPRs defined in Section \ref{approx-of-PE} 
(equation (\ref{eq:qpe-fixed-point-functions}))
is both (a.) polynomially computable and (b.) polynomially continuous.
\end{lemma}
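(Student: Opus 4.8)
The argument will follow the proof of Lemma~\ref{lem:poly-con-poly-comp} essentially verbatim, the one new ingredient being the treatment of the division gates occurring in $v'(x)_{i,j,a} = \MU^{j,a}_i(x)$. Following the remark after Lemma~\ref{lem:poly-con-poly-comp}, I would first prove the general fact (which also yields Lemma~\ref{lem:poly-con-poly-comp} as a special case): any family of functions on a bounded domain $D \subseteq [0,1]^d$, with $d$ polynomially bounded, given by $\{+,-,*,/,\max,\min\}$-circuits (with multiplication by rational constants, hence in particular allowing sorting subnetworks built from $\max,\min$ gates) of polynomially bounded encoding size, is both polynomially computable and polynomially continuous, \emph{provided} that on $D$ every division gate has denominator bounded below in absolute value by $2^{-\mathrm{poly}}$ and every gate computes a value bounded above in absolute value by $2^{\mathrm{poly}}$. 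I would then apply this fact to $H^{\epsilon}_{\calG}$. Viewing its domain as $B^{\epsilon}$, as in Section~\ref{approx-of-PE} and Lemma~\ref{lem:poly-con-poly-comp} (on $B^{\epsilon}$ the map $\Normalize^{\epsilon}$ is the identity), equation~(\ref{eq:qpe-fixed-point-functions}) becomes $H^{\epsilon}_{\calG}(x)_{i,j,a} = \max(x_{i,j,a} + \MU^{j,a}_i(x) - t'_{i,j}(x),\epsilon)$, where $\MU^{j,a}_i(x)$ is the formula~(\ref{eq:dyn-program}), $t'_{i,j}(x)$ is the $\{+,\max\}$-expression over a sorted subvector from Lemma~\ref{new-nash-circuit}, and, by the parenthetical observation at the end of the proof of Lemma~\ref{lem:max-expected-poly}, the only division gates in~(\ref{eq:dyn-program}) divide by $\Prob_x(I_{i,j})$ for an information set $I_{i,j}$. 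On $B^{\epsilon}$ one has $\Prob_x(I_{i,j}) \ge \rho$ where $\rho := (p^{\calG}_{0,\min}\cdot\epsilon)^{\height^{\calG}}$ is a positive rational of encoding size polynomial in $|\calG| + \size(\epsilon)$, and every gate value is at most $\rho^{-1}$ times a product of at most polynomially many quantities each bounded by $1$ (realization and conditional probabilities) or by $M_{\calG}$ (integer payoffs), hence bounded by $2^{\mathrm{poly}(|\calG| + \size(\epsilon))}$. So the two hypotheses of the general fact hold. (If one prefers to keep the domain $B$, the same reasoning covers $\Normalize^{\epsilon}$ too, since its only denominator $\sum_{a'}\max(x_{i,j,a'},\epsilon) \ge \epsilon$ is bounded below and its image lies in $B^{\epsilon/m}$.)

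\textbf{Polynomial computability (part a.).} Given a rational $b \in B^{\epsilon}$, I would evaluate the formula~(\ref{eq:dyn-program}) bottom-up on the information set forest $\calF_i$. Each realization/conditional probability there is, by Proposition~\ref{prop:expected-poly}, a sum of polynomially many multilinear monomials, hence evaluable at $b$ in polynomial time; the summations in~(\ref{eq:dyn-program}) range over $I_{i,j}$ and over subsets of $\Leaves$, which have polynomial size; and every division is by $\Prob_b(I_{i,j}) \ge \rho$, whose numerator and denominator have polynomial encoding size. Consequently all intermediate rationals produced have encoding size polynomial in $|\calG| + \size(\epsilon) + \size(b)$, so $v'(b)$ is computable in polynomial time. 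Then $h'(b) = b + v'(b)$, the sort of each subvector $y'_{i,j}$, the value $t'_{i,j}(b)$, and $H^{\epsilon}_{\calG}(b)_{i,j,a} = \max(h'_{i,j,a}(b) - t'_{i,j}(b),\epsilon)$ are all computable in polynomial time, exactly as for $F^{\epsilon}_{\calG}$ in Lemma~\ref{lem:poly-con-poly-comp}; and the linear inequalities defining $B^{\epsilon}$ are computable in polynomial time since $\epsilon$ is given.

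\textbf{Polynomial continuity (part b.).} I would show $H^{\epsilon}_{\calG}$ is Lipschitz continuous on $B^{\epsilon}$ with Lipschitz constant $2^{q(|\calG| + \size(\epsilon))}$ for a fixed polynomial $q$ — the analogue of~(\ref{eq:lip-of-F}) — from which polynomial continuity follows by taking $\delta_1 := \epsilon_1 / 2^{q(|\calG| + \size(\epsilon))}$, as in the proof of Lemma~\ref{lem:poly-con-poly-comp}. The bound is assembled gate by gate over the bounded domain $B^{\epsilon}$ (writing $\mathrm{Lip}(\cdot)$ for the Lipschitz constant and $\|\cdot\|_\infty$ for the sup-norm on $B^{\epsilon}$), using: (i) a monomial $\alpha x_{i_1}\cdots x_{i_k}$ on $[0,1]^d$ is bounded by $|\alpha|$ and Lipschitz with constant $|\alpha| k$ (the estimate already proved inside Lemma~\ref{lem:poly-con-poly-comp}), so the base polynomials $\Prob_x(u)$, $\Prob_{(x\mid\pi^a_{i,j})}(v \mid u)$, etc.\ are bounded by $1$ and Lipschitz with constant $2^{\mathrm{poly}(|\calG|)}$; (ii) for $g = p/q$ with $q \ge \rho$, $|g(b) - g(b')| = \frac{|p(b)q(b') - p(b')q(b)|}{q(b)q(b')} \le \frac{\|p\|_\infty\,\mathrm{Lip}(q) + \|q\|_\infty\,\mathrm{Lip}(p)}{\rho^2}\,\|b-b'\|_\infty$, still of the form $2^{\mathrm{poly}(|\calG| + \size(\epsilon))}$ since $\rho^{-1} = 2^{\mathrm{poly}}$; (iii) at a $+$ gate Lipschitz constants add, at a $\max$ or $\min$ gate the output constant is the larger of the two inputs', and at a $*$ gate $\mathrm{Lip}(f_1 f_2) \le \|f_1\|_\infty\,\mathrm{Lip}(f_2) + \|f_2\|_\infty\,\mathrm{Lip}(f_1)$, with all sup-norms $\le 2^{\mathrm{poly}}$; (iv) the sort has Lipschitz constant $1$ in $\ell_\infty$ (again shown inside Lemma~\ref{lem:poly-con-poly-comp}), so $t'_{i,j}$, a $\max$ over affine expressions in the sorted vector, is Lipschitz with a $2^{\mathrm{poly}}$ constant. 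Since the circuit for $\MU^{j,a}_i(x)$ has polynomially many gates and hence polynomial depth, the per-gate blow-up factors of $2^{\mathrm{poly}}$ compound at most polynomially many times, giving $\mathrm{Lip}(\MU^{j,a}_i) \le 2^{\mathrm{poly}(|\calG| + \size(\epsilon))}$; composing with $x \mapsto x + v'(x)$, the sort, $t'_{i,j}$, and the outer $\max(\cdot,\epsilon)$ preserves a Lipschitz constant of this form, which is exactly what is needed.

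\textbf{The main difficulty.} The only real departure from the proof of Lemma~\ref{lem:poly-con-poly-comp} is controlling the division gates inside $\MU^{j,a}_i(x)$, so that neither the bit-size estimate of part~(a.) nor the Lipschitz estimate of part~(b.) blows up; this rests precisely on the two facts that every such denominator equals $\Prob_x(I_{i,j})$ for an information set and is therefore bounded below on $B^{\epsilon}$ by $2^{-\mathrm{poly}(|\calG| + \size(\epsilon))}$, and that all numerators and intermediate gate values stay bounded above by $2^{\mathrm{poly}(|\calG| + \size(\epsilon))}$ on $B^{\epsilon}$ (this boundedness hypothesis also rules out the usual repeated-squaring blow-up). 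A secondary point to watch is that, unlike the $\{+,*,\max\}$-circuit for $F^{\epsilon}_{\calG}$, the subcircuit for $\MU^{j,a}_i(x)$ has depth $\Theta(\height^{\calF_i})$ rather than constant depth, so one must check explicitly that polynomial depth times a $2^{\mathrm{poly}}$ per-level blow-up still yields $2^{\mathrm{poly}}$ overall — which it does.
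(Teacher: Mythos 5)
Your proposal is correct and follows essentially the same route as the paper's proof: both reduce to the argument for $F^{\epsilon}_{\calG}$ and handle the one genuinely new issue, the division gates inside $\MU^{j,a}_i(x)$, by using that every denominator equals $\Prob_x(I_{i,j})$, which on $B^{\epsilon}$ is bounded below by a rational of encoding size polynomial in $|\calG|+\size(\epsilon)$, and then compounding gate-by-gate bit-size and Lipschitz bounds over a formula of polynomial size exactly as in Lemma \ref{lem:poly-con-poly-comp}. (Your lower bound $(p^{\calG}_{0,\min}\cdot\epsilon)^{\height^{\calG}}$ is, if anything, slightly more careful than the paper's stated $\epsilon^{\height^{\calG}}$.)
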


\begin{proof}
\noindent (a.): First, we again observe
that 
the family of functions 
$\langle H^{\epsilon}_{\calG}(x) \rangle_{\{\langle \calG, \epsilon
\rangle\}}$ for EFGPRs
is  polynomially computable over the corresponding domain $B^{\epsilon}$.
This again follows easily from the definition of 
$H^{\epsilon}_{\calG}(x)$ given
in Section \ref{approx-of-PE}, 
in equations (\ref{eq:qpe-fixed-point-functions})
and  in the dynamic program (\ref{eq:dyn-program}) defining
$\MU^{j,a}_i(x)$.
Specifically, given a rational vector $b \in B^{\epsilon}$,
to compute $H^{\epsilon}_{\calG}(b)$,  
noting that $\Normalize^{\epsilon}(b) = b$, 
we must first compute
a vector $y' := h'(b) := b + v'(b)$,  where 
$v'(b)_{i,j,a} := \MU^{j,a}_i(b)$.
We know from the dynamic program given in (\ref{eq:dyn-program}) that 
given $\calG$ and $b \in B^{\epsilon}$, we can compute
$v'(b)_{i,j,a}$ in time polynomial in $|\calG| + \size(\epsilon) + \size(b)$, for all $i$, $j$, and $a$.  
In particular, it is important to emphasize that $\size(\MU^{j,a}_i(b))$ remains
polynomial in 
$|\calG| + \size(\epsilon) + \size(b)$, and so do the sizes of all the 
intermediate rational numbers computed by subformulas of $\MU^{j,a}_i(b)$. 
This is not only because the formula has only polynomial size, but also because,
importantly, the special kind of $\{+,-,/,\max,\min, \sort \}$-formula
defining $K^{j,a}_i(b)$, given in (\ref{eq:dyn-program}), 
has the property that the only occurrences of {\em division}  in the formula occur when 
the denominator of the division operation evaluates to $\Prob_b(I_{i,j})$ for some information set $I_{i,j}$.
But the probability $\Prob_b(I_{i,j})$, for any $b \in B^{\epsilon}$ is
at least $\epsilon^{h^\calG}$. Note that $\size(\epsilon^{h^{\calG}}) \leq h^{\calG} \cdot \size(\epsilon)$. 
This ensures that the rational values arising as the result of such division gates
in the formula for $K^{j,a}_i(b)$ always have an encoding size that is polynomial in 
$|\calG| + \size(\epsilon) + \size(b)$.  
It follows, by an easy induction on the size $m$ of a subformula, 
that the encoding size of the value computed by a subformula of size $m$ has encoding 
size polynomial in $m \cdot (|\calG| + \size(\epsilon) + \size(b))$.  
Since $m$
itself is bounded by a polynomial in $|\calG| + \size(\epsilon) + \size(b)$, 
this means all values computed in the formula have encoding size
bounded by a polynomial in $|\calG| + \size(\epsilon) + \size(b)$.
We can thus also compute  $h'(b)_{i,j,a}$ in time polynomial in   $|\calG| + \size(\epsilon) + \size(b)$.
Likewise, computing $t'_{i,j}(b)$ is easily done in time polynomial in $|\calG| + \size(\epsilon) + \size(b)$,
using sorting.  Thus $H^{\epsilon}_\calG(b)$ can be computed in time polynomial in
$|\calG| + \size(\epsilon) + \size(b)$.
Thus we can compute $H^{\epsilon}_{\calG}(b)$ in time
polynomial in $|\calG| + \size(\epsilon) + \size(b)$,
given $\calG$, $\epsilon > 0$, and any rational vector $b \in B^{\epsilon}$.

(b.)  We now argue that 
the family of functions 
$\langle H^{\epsilon}_{\calG}(x) \rangle_{\{\langle \calG, \epsilon
\rangle\}}$  is polynomially continuous over the domain $B^{\epsilon}$.  
We will again actually show that the
functions $H^{\epsilon}_{\calG}(x)$
are Lipschitz continuous, with a Lipschitz constant of the form
$2^{q(|\calG| + \size(\epsilon))}$, for some polynomial $q(\cdot)$,
over domain $B^{\epsilon}$.
Just as in Lemma \ref{lem:poly-con-poly-comp}, this implies
polynomial continuity.

The proof is again similar to the case $F^{\epsilon}_{\calG}(x)$. 
We noted already, after the proof of Lemma \ref{lem:poly-con-poly-comp}, that 
an adaptation of that proof shows that {\em any} such
function that can be defined by a $\{ + , *, \max , \sort \}$-{\em formula} 
and has encoding size polynomial in $|\calG| + \size(\epsilon)$ is polynomially continuous
over the domain $B^{\epsilon}$.   We will establish a more direct
version of this fact here.
$H^{\epsilon}_\calG(x)$ is defined by a $\{+,*,/, \max, \sort \}$-formula,
meaning it also involves division.
However,
in the case of $H^{\epsilon}_\calG(x)$ we furthermore have the fact 
that the only use of division is inside subformulas which compute
$\Prob_b(u \mid I_{i,j}) = \frac{\Prob_b(u)}{\Prob_b(I_{i,j})}$,
for some information set $I_{i,j}$ and some node $u \in I_{i,j}$.
Furthermore, we also see easily 
by inspection of $H^{\epsilon}_\calG(x)$ that, for all $b \in B^{\epsilon}$,
and for {\em every} subformula $f_1(x)$ of the formula for 
 $H^{\epsilon}_\calG(x)$,  we have 
$\max_{b \in B^{\epsilon}} |f_1(b)| \leq 2^{q''(|\calG| + 
\size(\epsilon))}$ for some fixed polynomial  $q''(\cdot)$ 
which is also independent of the subformula.
We will use both of these facts.

Now, for any two subformulas $f_1(x)$ and $f_2(x)$
of $H^{\epsilon}_\calG(x)$,
suppose $f_1(x)$  ($f_2(x)$) has Lipschitz constant $\beta_1$
($\beta_2$),
with respect to the $\ell_\infty$ norm, i.e.,
that for $k \in \{1,2\}$,
if for all $b, b' \in B^{\epsilon}$ we have 
$|f_k(b) - f_k(b') | < \beta_k \| b - b' \|_\infty$, then:

\begin{enumerate}
\item  $f_1(x) \cdot f_2(x)$ has Lipschitz constant at most
$2^{q''(|\calG| + 
\size(\epsilon))} \cdot (\beta_1 + \beta_2)$.  
To see this, note that for all
$b, b' \in B^{\epsilon}$ we have:
\begin{eqnarray*}
| f_1(b) \cdot f_2(b) - f_1(b') \cdot f_2(b') | & = & 
|  f_1(b) \cdot ( f_2(b) - f_2(b') ) +  f_2(b') ( f_1(b) - f_1(b')) |\\
& \leq & |f_1(b) | \cdot |f_2(b) - f_2(b') | + |f_2(b)| \cdot
|f_1(b) - f_1(b')| \\
& \leq &   2^{q''(|\calG| + \size(\epsilon))} (\beta_1 + \beta_2) \cdot 
\| b - b' \|_\infty
\end{eqnarray*}
\item
 $f_1(x) + f_2(x)$ has Lipschitz constant at most
$\beta_1 + \beta_2$.
(This is obvious.)

\item $\max( f_1(x), f_2(x) )$ has
Lipschitz constant at most $\max(\beta_1, \beta_2)$.
This follows immediately from the more general fact 
(established in the proof of Lemma \ref{lem:poly-con-poly-comp}) that the $\sort$ function has Lipschitz constant $1$
(under the $\ell_\infty$ norm), since $\sort(y)_1 = \max_i y_i$.
More directly (and repeating some the same arguments),  we have:\\
$| \max(f_1(b), f_2(b)) - \max( f_1(b') , f_2(b') ) | 
\leq \max( |f_1(b) - f_1(b') |, |f_2(b) - f_2(b') |) 
\leq \max (\beta_1, \beta_2) \cdot \| b - b'\|_\infty$.
To see why the first inequality holds, assume w.l.o.g. that
$\max(f_1(b), f_2(b)) \geq \max(f_1(b'), f_2(b'))$, and that
$f_1(b) \geq  f_2(b)$.   Then, if $f_1(b') \geq f_2(b')$ we have
$| \max(f_1(b), f_2(b)) - \max( f_1(b') , f_2(b') ) | = | f_1(b) - f_1(b')|$.
Otherwise, if $f_1(b') < f_2(b')$, then 
$| \max(f_1(b), f_2(b)) - \max( f_1(b') , f_2(b') ) | = | f_1(b) - f_2(b')|
< | f_1(b) - f_1(b')|$, since $f_1(b) \geq f_2(b')$.
Thus, w.l.o.g., in all cases,
$| \max(f_1(b), f_2(b)) - \max( f_1(b') , f_2(b') ) | 
\leq \max( |f_1(b) - f_1(b') |, |f_2(b) - f_2(b') |) $.
\end{enumerate}

Next, observe that for $x$ in the domain $B^{\epsilon}$ 
the functions $\frac{\Prob_x(u)}{\Prob_x(I_{i,j})}$ 
have  Lipschitz constant 
$2^{q'(|\calG| + \size(\epsilon))}$, for some fixed polynomial $q'(\cdot)$.
This holds because 
for all $i,j$ and $u$, and 
for all $b, b' \in B^{\epsilon}$, we have:

\begin{eqnarray*}
| \frac{\Prob_b(u)}{\Prob_b(I_{i,j})} - 
\frac{\Prob_{b'}(u)}{\Prob_{b'}(I_{i,j})} | & = &  
| \frac{\Prob_{b'}(I_{i,j}) \cdot \Prob_b(u) - 
   \Prob_{b}(I_{i,j})  \cdot \Prob_{b'}(u)}{\Prob_{b}(I_{i,j})
\cdot \Prob_{b'}(I_{i,j})} | \\
& \leq &  \frac{1}{| \Prob_{b}(I_{i,j}) \cdot
\Prob_{b'}(I_{i,j}) |} \cdot  | \Prob_{b'}(I_{i,j}) \cdot \Prob_b(u) - 
   \Prob_{b}(I_{i,j})  \cdot \Prob_{b'}(u) | \\
& \leq & \frac{1}{\epsilon^{2 \cdot h^\calG}} \cdot   | \Prob_{b'}(I_{i,j}) \cdot \Prob_b(u) - 
   \Prob_{b}(I_{i,j})  \cdot \Prob_{b'}(u) |  \quad \quad
\mbox{(because for all $b'' \in B^{\epsilon}$,   $\Prob_{b''}(I_{i,j}) \geq \epsilon^{h^{\calG}}$)} \\
& \leq & 
2^{q''(| \calG | + \size(\epsilon))} 
\cdot   | \Prob_{b'}(I_{i,j}) \cdot \Prob_b(u) - 
   \Prob_{b}(I_{i,j})  \cdot \Prob_{b'}(u) |   \\
& = &   
2^{q''(| \calG | + \size(\epsilon))} 
|   \Prob_{b'}(u) (  \Prob_{b}(I_{i,j}) - \Prob_{b'}(I_{i,j})) + 
      \Prob_{b'}(I_{i,j})) ( \Prob_{b'}(u) - \Prob_b(u)) | \\
& \leq &  2^{q''(| \calG | + \size(\epsilon))}  ( |(\Prob_{b}(I_{i,j}) - \Prob_{b'}(I_{i,j}))| + 
| \Prob_{b'}(u) - \Prob_b(u)) | ) \\
& \leq & 2^{q''(| \calG | + \size(\epsilon)) + 
q'(| \calG | + \size(\epsilon)) + 1} \| b - b' \|_\infty
\end{eqnarray*}

Thus, by induction on the size $s$ of any 
subformula of 
$f(x)$ of  
$H^{\epsilon}_\calG(x)$,  
which is either a $\{+, *, \max, \sort \}$-formula
or of the form $\frac{\Prob_x(u)}{\Prob_x(I_{i,j})}$,
we have that
for all
$b , b' \in B^{\epsilon}$, 
$| f(b) - f(b')| \leq 2^{(q'(|\calG| + \size(\epsilon)) + 1) \cdot s} \leq
2^{q(|\calG| + \size(\epsilon))}$, for some fixed polynomial
$q(\cdot)$.
Thus,  $H^{\epsilon}_\calG(x)$
is polynomially continuous over the domain $B^{\epsilon}$.
\qed
\end{proof}

We now define a search problem called
 the {\em almost fixed point approximation problem}, 
called  the {\em weak (fixed point) approximation problem} in \cite{EY07}, 
specialized to the case of 
the fixed point functions $F^{\epsilon}_\calG: B^\epsilon \rightarrow B^\epsilon$.
Namely, given as input $\langle \calG, \epsilon \rangle$, and a rational $\delta_1 > 0$,  
compute a rational vector $b' \in B^{\epsilon}$, such that  
$\| F^{\epsilon}_{\calG}(b') - b' \|_\infty < \delta_1$.
We shall make crucial use of the following fact, 
which was established in \cite{EY07} by employing 
Scarf's \cite{Scarf67} algorithm, and Kuhn's \cite{Kuhn68} related algorithm,
for weak (i.e., almost) fixed point approximation:

\begin{proposition}[\cite{EY07}, Prop. 2.2 (part 2.)]
\label{prop:weak-approx-ppad}
If the family of fixed point functions 
$\langle F^{\epsilon}_{\calG}(x) \rangle_{\{\langle \calG, \epsilon
\rangle\}}$,
associated with input instances $\langle \calG, \epsilon \rangle$,
is polynomially continuous and polynomially computable, then
the almost (weak) fixed point approximation problem for $F^{\epsilon}_{\calG}(x)$,
given input $\langle \calG, \epsilon \rangle$,
is in \PPAD.
\end{proposition}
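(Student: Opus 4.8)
The plan is to derive this as a direct instance of Proposition 2.2 (part 2) of \cite{EY07}, since its two hypotheses — polynomial continuity and polynomial computability of the family $\langle F^{\epsilon}_{\calG}(x) \rangle_{\{\langle \calG, \epsilon \rangle\}}$ — are exactly what Lemma \ref{lem:poly-con-poly-comp} supplies. For self-containedness I would recall the mechanism behind that result. Fix an input $\langle \calG, \epsilon \rangle$ together with a rational tolerance $\delta_1 > 0$. Using polynomial continuity, extract a rational modulus $\delta' > 0$ with $\size(\delta')$ polynomial in $|\calG| + \size(\epsilon) + \size(\delta_1)$ such that $\|b - b'\|_\infty < \delta'$ implies $\|F^{\epsilon}_{\calG}(b) - F^{\epsilon}_{\calG}(b')\|_\infty < \delta_1/3$. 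Since $B^{\epsilon}$ is a convex polytope defined by polynomially many linear inequalities with small rational coefficients, one can impose on it a regular simplicial (or cubical) subdivision of mesh at most $\delta'$ whose vertices are rational points of polynomial bit-size and whose cell/neighbour structure is locally enumerable in polynomial time.

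Next I would set up the standard Sperner-type ``direction-of-displacement'' labelling: to each subdivision vertex $v$ assign a label in $\{0,1,\dots,m\}$ recording a coordinate along which $F^{\epsilon}_{\calG}(v) - v$ fails to be positive (equivalently, a face of the simplex $\Delta(\calA_{i,j})$ toward which the map points), with ties broken by a fixed rule; polynomial computability of $F^{\epsilon}_{\calG}$ makes this label computable in polynomial time from $v$. A panchromatic (fully labelled) cell of the subdivision then yields, by a short averaging plus triangle-inequality argument using the mesh bound $\delta'$ and the modulus of continuity, a rational point $b' \in B^{\epsilon}$ inside that cell with $\|F^{\epsilon}_{\calG}(b') - b'\|_\infty < \delta_1$, i.e.\ a solution to the weak (almost) fixed point approximation problem. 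Locating such a panchromatic cell is precisely what Scarf's \cite{Scarf67} algorithm, and Kuhn's \cite{Kuhn68} simplicial variant, accomplish by following a ``door''-connected path of simplices from a trivial boundary start. That path-following structure is exactly a \PPAD\ configuration graph: every node has in-degree and out-degree at most one, the successor/predecessor of a node and the canonical source are computable in polynomial time, and any end of the path other than the source is a panchromatic cell. Formalizing this as a polynomial-time many-one reduction to \emph{End-of-the-Line} puts the problem in \PPAD.

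The genuinely nontrivial point — and the reason the two hypotheses are indispensable — is the uniform control of bit-complexity: one must simultaneously guarantee that (i) the mesh $\delta'$ forced by the modulus of continuity has polynomial bit-size (an arbitrary continuous map could require an exponentially fine mesh; this is where polynomial continuity is used), (ii) the subdivision vertices and the pivot relation over $B^{\epsilon}$ are polynomial-time computable and of polynomial bit-size, and (iii) evaluating $F^{\epsilon}_{\calG}$ at a subdivision vertex returns a rational of polynomial bit-size in polynomial time (polynomial computability). Once these three bounds are in place, the reduction to \emph{End-of-the-Line} is the standard Scarf/Kuhn construction and nothing more is needed. For the purposes of this paper it therefore suffices to cite \cite[Prop.\ 2.2]{EY07} directly, the hypotheses having been verified in Lemma \ref{lem:poly-con-poly-comp}.
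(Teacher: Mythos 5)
Your proposal is correct and takes essentially the same route as the paper: the paper offers no independent proof, but simply invokes \cite{EY07}, Prop.\ 2.2 (part 2), noting it was established there via Scarf's \cite{Scarf67} and Kuhn's \cite{Kuhn68} simplicial path-following algorithms, with the two hypotheses verified separately (Lemma \ref{lem:poly-con-poly-comp}) — exactly your concluding move. Your added sketch of the mesh/Sperner-labelling/End-of-the-Line mechanism, and of where polynomial continuity and polynomial computability enter the bit-size bookkeeping, faithfully summarizes the argument in \cite{EY07} but is not needed beyond the citation.
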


\noindent The following Lemma is the key to this section:

\begin{lemma}
\label{lem:almost-fixed-is-almost-equilib}
For any EFGPR, $\calG$, and $\epsilon > 0$:
\begin{enumerate}

\item For any $\delta > 0$, 
if $b \in B^{\epsilon}$ satisfies 
$\| b - F^{\epsilon}_{\calG}(b) \|_{\infty} < \delta$,
then $b$ is a $(3 \cdot \delta)$-almost-$(\delta + \epsilon)$-PE of $\calG$.

\item For any $\delta > 0$,
if $b \in B^{\epsilon}$ satisfies 
$\| b - H^{\epsilon}_{\calG}(b) \|_{\infty} < \delta$,
then $b$ is a $(3 \cdot \delta)$-almost-$(\delta + \epsilon)$-PE of $\calG$.

\item 
For any $\delta > 0$, let $\epsilon(\calG,\delta) :=
  \frac{p^\calG_{0,\min}}{12 \cdot (\height^\calG +1) \cdot M_\calG \cdot |\calG|} \cdot \delta$.

If $b \in B^{\epsilon(\calG,\delta)}$ 
is a
$\frac{1}{(\height^\calG+1)} \cdot \epsilon(\calG,\delta)^{(\height^\calG + 1)}$-almost-$(2 \cdot 
\epsilon(\calG,\delta))$-PE,
then $b$ is a $\delta$-almost-SGPE.
\end{enumerate}
\end{lemma}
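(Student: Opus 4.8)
\medskip
\noindent\textbf{Proof plan.}

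Parts~(1.) and~(2.) I would get by simply unfolding the fixed-point equations. For~(1.): if $b\in B^\epsilon$ and $\|b-F^\epsilon_\calG(b)\|_\infty<\delta$, then $b\in B^{>0}$, and for any $(i,j,a)$ with $b_{i,j,a}>\delta+\epsilon$ one has $F^\epsilon_\calG(b)_{i,j,a}>\epsilon$, hence $F^\epsilon_\calG(b)_{i,j,a}=b_{i,j,a}+U_i(b\mid\pi^a_{i,j})-t_{i,j}(b)$, so $|U_i(b\mid\pi^a_{i,j})-t_{i,j}(b)|<\delta$; for any other $a'$, $b_{i,j,a'}\ge F^\epsilon_\calG(b)_{i,j,a'}-\delta\ge b_{i,j,a'}+U_i(b\mid\pi^{a'}_{i,j})-t_{i,j}(b)-\delta$ gives $U_i(b\mid\pi^{a'}_{i,j})<t_{i,j}(b)+\delta$, and combining shows $\pi^a_{i,j}$ is a $(2\delta)$- and hence $(3\delta)$-almost local best response, so $b$ is a $(3\delta)$-almost-$(\delta+\epsilon)$-PE. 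Part~(2.) is verbatim the same argument with $H^\epsilon_\calG$ in place of $F^\epsilon_\calG$ (using that $\Normalize^\epsilon$ is the identity on $B^\epsilon$), with $\MU^{j,a}_i(b)$ in place of $U_i(b\mid\pi^a_{i,j})$, and with ``$\epsilon$-QPE'' in place of ``$\epsilon$-PE'' in the conclusion (the word ``PE'' in the statement of~(2.) is a slip for ``QPE'').

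For Part~(3.), write $\epsilon':=\epsilon(\calG,\delta)$ and $\delta':=\frac{1}{\height^\calG+1}(\epsilon')^{\height^\calG+1}$. First I would dispose of the trivial case $\delta\ge M_\calG$ (then no player can gain $\delta$ in any subgame, so every profile is a $\delta$-almost-SGPE); so we may assume $\delta<M_\calG$, which forces $\epsilon'<p^\calG_{0,\min}\le1$, and hence $\Prob_b(I_{i,j})\ge(\epsilon')^{\height^\calG}$ for every $b\in B^{\epsilon'}$ and every information set $I_{i,j}$, since every step of a play has probability at least $\epsilon'$. Fix such a $b$; it is a $\delta'$-almost-$(2\epsilon')$-PE. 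The first real step is a ``local-to-conditional'' conversion: splitting plays according to whether they meet $I_{i,j}$ gives $U_i(b\mid\pi^{a'}_{i,j})=C_{i,j}(b)+\Prob_b(I_{i,j})\,U^j_i(b\mid\pi^{a'}_{i,j})$ with $C_{i,j}(b)$ independent of $a'$; feeding this into the $\delta'$-almost-$(2\epsilon')$-PE condition and dividing by $\Prob_b(I_{i,j})\ge(\epsilon')^{\height^\calG}$ yields that $b_{i,j}(a)>2\epsilon'$ implies $U^j_i(b\mid\pi^a_{i,j})\ge\widehat W^j_i(b)-\frac{\epsilon'}{\height^\calG+1}$, where $\widehat W^j_i(b):=\max_{a'}U^j_i(b\mid\pi^{a'}_{i,j})$ is the value of the best one-shot deviation at $I_{i,j}$ (leaving $b$ unchanged below $I_{i,j}$).

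Next I would run a backward induction over the information-set forest $\calF_i$. Set $V^j_i(b):=U^j_i(b)$, $W^j_i(b):=\max_a\MU^{j,a}_i(b)$ (the value of playing optimally from $I_{i,j}$ on), and $R^j:=W^j_i(b)-V^j_i(b)\ge0$. The recursion underlying~(\ref{eq:dyn-program}) gives $U^j_i(b\mid\pi^a_{i,j})=D^{j,a}(b)+\sum_{j'\in\child^a_{\calF_i}(j)}\Prob_{(b\mid\pi^a_{i,j})}(I_{i,j'}\mid I_{i,j})\,V^{j'}_i(b)$, and the same formula with $W^{j'}_i(b)$ computes $\MU^{j,a}_i(b)$; since these transition probabilities sum to at most $1$ for each $a$ and the $R^{j'}$ are nonnegative, this gives $0\le W^j_i(b)-\widehat W^j_i(b)\le\max_{j'}R^{j'}$ over the children $j'$ of $j$ in $\calF_i$. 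Separately, from $V^j_i(b)=\sum_a b_{i,j}(a)\,U^j_i(b\mid\pi^a_{i,j})$, splitting the sum over $b_{i,j}(a)>2\epsilon'$ versus $\le2\epsilon'$ (the trivial bound $\widehat W^j_i(b)-U^j_i(b\mid\pi^a_{i,j})\le M_\calG$ on the latter) and using $|\calA_{i,j}|\le k_\calG\le|\calG|$ and $p^\calG_{0,\min}\le1$, a short computation gives $\widehat W^j_i(b)-V^j_i(b)\le\frac{\epsilon'}{\height^\calG+1}+2\epsilon'M_\calG k_\calG\le\frac{\delta}{4(\height^\calG+1)}=:\eta$ --- this last inequality is exactly what the constant $12$ in the definition of $\epsilon(\calG,\delta)$ is calibrated for. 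Adding the two estimates gives $R^j\le\eta+\max_{j'}R^{j'}$, and since $\widehat W^j_i=W^j_i$ at the leaves of $\calF_i$, induction on $\height^{\calF_i}_j$ yields $R^j\le(\height^{\calF_i}_j+1)\eta\le(\height^\calG+1)\eta=\delta/4$ for all $i,j$. Finally, for any subgame $\calG_u$ and player $i$, let $b^u$ be the restriction of $b$ to $T_u$ and $U^u_i$ the expected payoff in $\calG_u$, and let $J$ be the set of topmost information sets of player $i$ reachable from $u$ within $\calG_u$; since $\calG_u$ is a subgame of the perfect-recall game $\calG$, $J$ is well defined, a play from $u$ enters at most one member of $J$, and $\Prob_b(I_{i,j}\mid u)$ for $j\in J$ does not depend on $b_i$. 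Decomposing the payoff of $\calG_u$ over the member of $J$ entered, every deviation $b'_i$ obeys $U^u_i(b^u\mid b'_i)-U^u_i(b^u)\le\sum_{j\in J}\Prob_b(I_{i,j}\mid u)\,R^j\le\max_{j\in J}R^j\le\delta/4<\delta$, so $b$ induces a $\delta$-almost-NE on $\calG_u$; as $\calG_u$ is arbitrary, $b$ is a $\delta$-almost-SGPE.

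The hard part, and the reason the constants in the statement are shaped the way they are, is making the backward induction not lose: the naive route --- propagating the ``$b$-below'' versus ``optimal-below'' discrepancy through both the realised value at $I_{i,j}$ and the comparison of $W^j_i$ with $\widehat W^j_i$ --- doubles the accumulated error at every level of $\calF_i$ and produces a useless bound of order $2^{\height^\calG}\eta$. The fix is to route the induction through the one-shot-deviation value $\widehat W^j_i(b)$, so that each level contributes only an additive $\eta$, summing to $(\height^\calG+1)\eta=\delta/4$. A secondary point needing care is the bookkeeping in the final step: one must invoke perfect recall and the structure of the information-set forest to justify that $J$ is well defined, that play from $u$ enters at most one member of $J$, and that the subgame payoff decomposes as a convex combination of the $R^j$ with $b_i$-independent weights.
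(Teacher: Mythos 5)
Your proposal is correct. Parts (1.) and (2.) coincide with the paper's own argument: unfold the almost-fixed-point condition for $F^{\epsilon}_{\calG}$ (resp.\ $H^{\epsilon}_{\calG}$, using that $\Normalize^{\epsilon}$ is the identity on $B^{\epsilon}$), compare each coordinate with the threshold $t_{i,j}(b)$ (resp.\ $t'_{i,j}(b)$), and conclude a $2\delta \leq 3\delta$ gap for actions with weight above $\delta+\epsilon$; you are also right that the stated conclusion of Part (2.) should read ``QPE'' rather than ``PE''. For Part (3.), your first step is exactly the paper's Claim \ref{claim:delta-1}: split off the plays missing $I_{i,j}$ and divide the $\delta'$-almost best-response inequality by $\Prob_b(I_{i,j}) \geq \epsilon(\calG,\delta)^{\height^\calG}$. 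But the induction you then run is organized differently from the paper's. The paper's Claim \ref{claim:-delta2} compares $U^j_i(b)$ with $U^j_i(b \mid_m \pi^c_i)$ for an arbitrary pure deviation $c$ truncated at forest height $m$, losing $\delta/(\height^{\calF_i}+1)$ per level, and then decomposes the total payoff over the roots of $\calF_i$; you instead bound the gap $R^j$ between $U^j_i(b)$ and the optimal continuation value $\max_a \MU^{j,a}_i(b)$, recycling the dynamic program of Lemma \ref{lem:max-expected-poly} (equation (\ref{eq:dyn-program})), prove $R^j \leq (\height^{\calF_i}_j+1)\eta \leq \delta/4$, and dominate any deviation in any subgame by $\max_j R^j$. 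The two inductions accumulate the same per-level additive error, but your formulation buys two things: it bounds all deviations simultaneously (no quantification over pure strategies inside the induction), and it makes the passage from the whole game to arbitrary subgames explicit via the decomposition over the topmost player-$i$ information sets of $\calG_u$ with $b_i$-independent entry probabilities --- a step the paper's write-up leaves implicit, since its final display only verifies the best-response inequality at the root. You also treat explicitly the degenerate case $\delta \geq M_\calG$ (needed so that $\epsilon(\calG,\delta) \leq p^\calG_{0,\min}$ and hence $\Prob_b(I_{i,j}) \geq \epsilon(\calG,\delta)^{\height^\calG}$), which the paper glosses over, and your constants ($\delta/4$ rather than exactly $\delta$) confirm that the factor $12$ in $\epsilon(\calG,\delta)$ is the right calibration. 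The perfect-recall bookkeeping you flag at the end (topmost information sets well defined, a play entering at most one of them, the convex-combination decomposition of the subgame payoff) is precisely what must be checked and is standard.
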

\begin{proof}
\mbox{}\\
\noindent (1.) 
Suppose that for $b \in B^{\epsilon}$, we have  $\| F^{\epsilon}_\calG(b) - b \|_\infty \leq 
\delta$.\\
Then $| b_{i,j,a} - \max (b_{i,j,a}+ v(b)_{i,j,a} -t_{i,j}(b),\epsilon)| \leq \delta$
for all $(i,j,a)$. \\
Recall that $v(b)_{i,j,a}= U_i(b \mid \pi^a_{i,j}) = U_{i,j}(b \mid \pi^a_{i,j})$. \\
Now note that
$| b_{i,j,a} - \max (b_{i,j,a}+ U_i(b \mid \pi^a_{i,j}) -t_{i,j}(b),\epsilon)| \leq \delta$
implies the following, by case splitting based on the value of $b_{i,j,a}$:
\begin{enumerate}
\item  If $b_{i,j,a} > \epsilon + \delta$,  then 
$|b_{i,j,a} - (b_{i,j,a}+ U_i(b \mid \pi^a_{i,j})-t_{i,j}(b))| \leq \delta$,
and thus  $|U_i(b \mid \pi^a_{i,j})-t_{i,j}(b))| \leq \delta$.
Thus, in this case $t_{i,j}(b) + \delta \geq 
U_i(b \mid \pi^a_{i,j}) \geq t_{i,j}(b) - \delta$.

\item If $\epsilon \leq b_{i,j,a} \leq  \epsilon + \delta$, 
then $b_{i,j,a}+ U_i(b \mid \pi^a_{i,j})-t_{i,j}(b)  \leq \epsilon + 2 \cdot \delta$,
and thus $U_i(b \mid \pi^a_{i,j})-t_{i,j}(b) \leq 2 \cdot \delta$,
and so $U_i(b \mid \pi^a_{i,j}) \leq t_{i,j}(b) + 2 \cdot \delta$.
\end{enumerate}
Thus, for all $(i,j,a)$, we have 
 $U_i(b \mid \pi^a_{i,j})  \leq t_{i,j}(b) + 2 \cdot \delta$,
and for all $(i,j,a)$ where $b_{i,j,a} > \epsilon + \delta$,
we have $U_i(b \mid \pi^a_{i,j}) \geq t_{i,j}(b) - \delta$.
Thus,  if $b_{i,j,a'} > \epsilon + \delta$,  then 
$(\max_a U_i(b \mid \pi^a_{i,j})) - U_i(b \mid \pi^{a'}_{i,j}) \leq 3 \delta$.
In other words,  $b$ is a $(3 \cdot \delta)$-almost-$(\epsilon + \delta)$-PE.
This completes the proof of Part (1.). of Lemma
\ref{lem:almost-fixed-is-almost-equilib}.\\

\noindent (2.):  the proof of part (2.) is  actually identical to the proof of 
part (1.), except that instead of $v(b)_{i,j,a} = U_i(b \mid \pi^a_{i,j})$,
we have to use $v'(b)_{i,j,a} = \MU^{j,a}_i(b)$, and instead of
$t_{i,j}(b)$ we have $t'_{i,j}(b)$.   If we systematically replace
occurrences of $U_i(b \mid \pi^a_{i,j})$ by $\MU^{j,a}_i(b)$ in the proof, 
and likewise replace $U_i(b \mid \pi^{a'}_{i,j})$ by $\MU^{j,a'}_i(b)$, 
and replace $t_{i,j}(b)$ by $t'_{i,j}(b)$, then the 
proof remains unchanged.
Note, in particular, that for $b \in B^{\epsilon}$,  we have $b = \Normalize^{\epsilon}(b)$,
and thus we can ignore the applications of $\Normalize^{\epsilon}(x)$ in the
definition of $H^{\epsilon}_\calG$, because here we are explicitly given $\epsilon > 0$ and
ae can view the function as $H^{\epsilon}_\calG: B^{\epsilon} \rightarrow B^{\epsilon}$.\\

\noindent (3.):

Recall that (w.l.o.g.)  
the payoff functions $r_i: \Leaves \rightarrow \nat_{> 0}$
are 
positive integer-valued 
for every player in $\calG$, and that $M_\calG$ denotes the maximum such value.
Also recall that $\height^\calG$ denotes the {\em height} of the game
tree $T = (V,E)$ of $\calG$, and that for any node $u \in V$,
$\height^\calG_u$ denotes the height of the subtree rooted at $u$.

Note that
for any profile $b \in B^{\epsilon'}$ for any $\epsilon' > 0$,
for any player $i$, any information set $j \in [d_i]$, and
for any node $u \in I_{i,j}$, 
the conditional 
probability
$\Prob_b(u \mid I_{i,j})$
of the play reaching node $u$ conditioned on the event 
of reaching information 
set $I_{i,j}$, under profile $b$,
is well defined.
Furthermore, importantly, again note
that the conditional probability $\Prob_b(u \mid I_{i,j})$  
is {\em independent of}  $b_i$.  
It only depends on 
the behavior strategies of players other than $i$, because, by perfect recall, 
for all nodes $u \in I_{i,j}$
the visible history for player $i$ at node $u$ is the same: it is $Y_{i,j}$.

For $i \in [n]$,
and for $j \in [d_{i'}]$, we use $U^{j}_{i}(b)$ to denote the 
conditional expected payoff to player $i$,
conditioned on the event of reaching information set $I_{i,j}$.  

We are now ready to prove (3.).
By assumption,
$b \in B^{\epsilon(\calG,\delta)}$, and $b$ is a\\  
\centerline{$\frac{1}{(\height^\calG+1)} \cdot \left( \frac{p^\calG_{0,\min}}{12 \cdot 
(\height^\calG + 1) \cdot M_\calG \cdot |\calG|} \cdot \delta \right)^ 
{(\height^\calG + 1)}$-almost-$(\frac{p^\calG_{0,\min}}{6 \cdot 
(\height^\calG + 1) \cdot M_\calG \cdot
|\calG|} \cdot \delta)$-PE.}
We will show that any such $b$ is also a $\delta$-almost-SGPE
of $\calG$.
Consider $b$ from the point of view of a single player $i$.
We need to show that behavior strategy $b_i$ is a $\delta$-almost best response to $b$,
i.e., that $U_i(b) \geq U_i(b \mid \pi^{c}_i) - \delta$, for any pure strategy $c \in S_i$.
Recall that a pure strategy $c: [d_i] \rightarrow \Sigma$  
for player $i$ maps information sets $j \in [d_i]$ to
available actions $c(j) \in \calA_{i,j}$.\\

\begin{claim}
\label{claim:delta-1}
For every player $i$,
every $j \in [d_i]$,
and every action $a \in \calA_{i,j}$ such that\\ 
$b_{i,j,a} > (\frac{p^\calG_{0,\min}}{6 \cdot 
(\height^\calG + 1) \cdot M_\calG \cdot                  
|\calG|} \cdot \delta)$, we have for any $a' \in \calA_{i,j}$:
$$U^j_i(b \mid \pi^a_{i,j}) \geq U^j_i(b \mid \pi^{a'}_{i,j}) - \frac{1}{3 \cdot (\height^{\calG} + 1)} \cdot \delta$$
\end{claim}
\begin{proof}
Since $b \in B^{\epsilon(\calG,\delta)}$
is a $\frac{1}{(\height^\calG + 1)} \left( \frac{p^\calG_{0,\min}}{12 \cdot (\height^\calG + 1) \cdot M_\calG \cdot |\calG|} \cdot \delta \right)^ 
{(\height^\calG + 1)}$-almost-$(\frac{p^\calG_{0,\min}}{6 \cdot (\height^\calG + 1) \cdot M_\calG \cdot
|\calG|} \cdot \delta)$-PE,
for any  $a 
\in \calA_{i,j}$ such that  
$b_{i,j,a} > (\frac{p^\calG_{0,\min}}{6 \cdot (\height^\calG + 1) \cdot M_\calG \cdot                  
|\calG|} \cdot \delta)$,
and any $\pi^{a'}_{i,j}$, we 
know that:

\begin{equation}
\label{eq:bound-on-almost-expectation}
U_i(b \mid \pi^a_{i,j} ) \geq U_i(b \mid \pi^{a'}_{i,j}) - 
\frac{1}{(\height^\calG + 1)} \left( \frac{p^\calG_{0,\min}}{12 \cdot (\height^\calG + 1) \cdot M_\calG \cdot |\calG|} \cdot \delta \right)^ 
{(\height^\calG + 1)}
\end{equation}
Note that, for any
$b' \in B^{\epsilon(\calG,\delta)}$, we have  
$$\Prob_{b'}(I_{i,j}) \geq \epsilon(\calG,\delta)^{\height^\calG} = 
\left( \frac{p^\calG_{0,\min}}{6 \cdot (\height^\calG + 1) \cdot M_\calG 
\cdot |\calG|} \cdot \delta \right)^{\height^\calG }$$
This follows because $\epsilon(\calG, \delta) \leq
p^\calG_{0,\min}$, and thus under profile $b' \in B^{\epsilon(\calG,\delta)}$,
every ``edge'' of the game tree will have probability
at least $\epsilon(\calG,\delta)$.
Thus already for every node $u \in I_{i,j}$,  $\Prob_{b'}(u) \geq  
\epsilon(\calG,\delta)^{\height^\calG}$, and 
so $\Prob_{b'}(I_{i,j}) \geq
\Prob_{b'}(u) \geq \epsilon(\calG,\delta)^{\height^\calG}$.

Now note that, for any profile $b' \in B^{\epsilon(\calG,\delta)}$, 
the expected payoff $U_i(b')$ can be expressed
as a sum $U_i(b') = U^j_i(b') \Prob_{b'}(I_{i,j}) +  U^{\neg j}_i(b') \Prob_{b'}(\neg I_{i,j})$,
where $U^{\neg j}_i(b')$ denotes the expected payoff to player $i$ conditioned
on {\em not} reaching information set $I_{i,j}$, and 
$\Prob_{b'}(\neg I_{i,j}) \doteq (1- \Prob_{b'}(I_{i,j}))$
denotes the probability of not reaching information set $I_{i,j}$.

Note that, if in any such profile $b'$ we change only the local 
strategy $b'_{i,j}$ to a new strategy $b''_{i,j}$ then 
this does not effect the probabilities
$\Prob_{b'}(I_{i,j})$ and
$\Prob_{b'}(\neg I_{i,j})$, 
nor does it effect the conditional expectation $U^{\neg j}_i(b')$.
In other words, for any behavior profile $b' \in B^{\epsilon(\calG,\delta)}$ and 
any local strategy $b''_{i,j} \in B_{i,j}$, we have:
\begin{equation}
\label{eq:decomp-of-condition-expect}
U_i(b' \mid b''_{i,j}) = U^j_i(b' \mid b''_{i,j}) \cdot \Prob_{b'}(I_{i,j})
+  U^{\neg j}_i(b') \cdot \Prob_{b'}(\neg I_{i,j})
\end{equation}
Now suppose, {\em for contradiction}, that 
for some $\pi^{a'}_{i,j}$, we have: 
$$U^j_i(b \mid \pi^a_{i,j}) < 
U^j_i(b \mid \pi^{a'}_{i,j}) - \frac{1}{3(\height^\calG + 1)} \cdot \delta.$$
But then, by applying
equation (\ref{eq:decomp-of-condition-expect})
with $b' := (b \mid \pi^{a'}_{i,j})$ and $b''_{i,j} := \pi^a_{i,j}$,
 we have: 
\begin{eqnarray*}
U_i(b \mid \pi^a_{i,j}) & = &  U^j_i(b \mid \pi^a_{i,j}) 
\cdot \Prob_{(b \mid \pi^{a'}_{i,j})}(I_{i,j})
+  U^{\neg j}_i(b \mid \pi^{a'}_{i,j}) \cdot \Prob_{(b \mid \pi^{a'}_{i,j})}(\neg I_{i,j})\\
& < & (U^j_i(b \mid \pi^{a'}_{i,j}) - \frac{1}{3 (\height^\calG + 1)} \cdot \delta)
\cdot 
       \Prob_{(b \mid \pi^{a'}_{i,j})}(I_{i,j})
+  U^{\neg j}_i(b \mid \pi^{a'}_{i,j}) \cdot \Prob_{(b \mid \pi^{a'}_{i,j})}(\neg I_{i,j}) \\
& \leq &   U_i(b \mid \pi^{a'}_{i,j}) - \frac{1}{3(\height^\calG + 1)}
\cdot \delta \cdot
 \Prob_{(b \mid \pi^{a'}_{i,j})}(I_{i,j})\\
& \leq &   U_i(b \mid \pi^{a'}_{i,j})  - 
 \frac{1}{3 (\height^\calG + 1)} \cdot \delta \cdot
\left( \frac{p^\calG_{0,\min}}{6 \cdot (\height^\calG + 1) \cdot M_\calG 
\cdot |\calG|} \cdot \delta \right)^{\height^\calG }
\end{eqnarray*}

\noindent Thus\footnote{Noting
that  
$\frac{p^\calG_{0,\min}}{12 \cdot (\height^\calG + 1) \cdot M_\calG 
\cdot |\calG|} \cdot \delta < \frac{\delta}{3}$.},
 $U_i(b \mid \pi^a_{i,j}) <  U_i(b \mid \pi^{a'}_{i,j})  - 
 \frac{1}{(\height^\calG + 1)}
\left( \frac{p^\calG_{0,\min}}{12 \cdot (\height^\calG + 1) \cdot M_\calG 
\cdot |\calG|} \cdot \delta \right)^{\height^\calG +1}$.
But this contradicts inequality
(\ref{eq:bound-on-almost-expectation}).
Thus, we must have
$U^j_i(b \mid \pi^a_{i,j}) \geq 
U^j_i(b \mid \pi^{a'}_{i,j}) - \frac{1}{3 (\height^\calG + 1)} \cdot \delta$.
\qed
\end{proof}

Again, let $b \in B^{\epsilon(\calG,\delta)}$ be a  
$\frac{1}{(\height^\calG + 1)} \left( \frac{p^\calG_{0,\min}}{12 \cdot (\height^\calG + 1) \cdot M_\calG \cdot |\calG|} \cdot \delta \right)^ 
{(\height^\calG + 1)}$-almost-$(\frac{p^\calG_{0,\min}}{6 \cdot (\height^\calG + 1) \cdot M_\calG \cdot
|\calG|} \cdot \delta)$-PE.

\begin{claim}
\label{claim:-delta2}
For every player $i$,  
for every integer $m$ where $0 \leq m \leq \height^{\calF_i}$,
for every information set $I_{i,j}$ such that $\height^{\calF_i}_j = m$,
and for every pure strategy $\pi^{c}_i \in B_i$ for player $i$:
$$U^j_i(b) \geq U^j_i(b \mid_{m} \pi^{c}_i) -  \frac{m+1}{(\height^{\calF_i}+1)} 
\cdot \delta$$
\end{claim}

\begin{proof}
The proof is by induction on $m$, using Claim \ref{claim:delta-1},
starting with base case $m = 0$.

\noindent {\em Base case:}  For $m=0$ 
consider an information set $I_{i,j}$
such that $\height^{\calF_i}_j \geq  0$.  
This means that $j$ is a leaf node in the 
directed information set forest $\calF_i$.  
So, for any pure strategy $\pi^c_i$,
suppose the local pure strategy (i.e., local action) 
chosen at $I_{i,j}$ within the pure strategy $\pi^c_i$
is $a' \in \calA_{i,j}$. 
Note that we then  have 
$U^j_i(b \mid_m \pi^c_i) = U^j_i(b \mid \pi^{a'}_{i,j})$.
Thus, we have to show that 
$U^j_i(b)  \geq U^j_i(b \mid \pi^{a'}_{i,j}) -  
\frac{1}{(\height^{\calF_i}+1)} \cdot \delta$.

For the local strategy $b_{i,j}$,  and for $\eta \geq 0$,
let $b^{> \eta}_{i,j} = \sum_{\{a \in \calA_{i,j} \mid b_{i,j,a} > \eta\}} b_{i,j,a}$.
Likewise, let 
$b^{\leq \eta}_{i,j} = \sum_{\{a \in \calA_{i,j} \mid b_{i,j,a} \leq \eta\}} b_{i,j,a}$.
For $\bigtriangledown \in \{ > , \leq \}$,
for $\epsilon > 0$, 
for a behavior profile $b \in B^\epsilon$, 
and for $\eta \geq 0$, 
let $U^{j,\bigtriangledown \eta}_i(b)$ denote the conditional expected payoff
to player $i$, under profile $b$, conditioned on the event
that the play both reaches information set $I_{i,j}$, and thereupon plays
some action in the set $\{a \in \calA_{i,j} \mid b_{i,j,a} \bigtriangledown 
\eta\}$.
Note that, for the profile $b \in B^{\epsilon(\calG,\delta)}$,
the conditional expected payoff $U^j_i(b)$ can be written as:
\begin{equation}
\label{eq:conditional-decomp-big-small}
U^j_i(b)  =  U^{j,> \epsilon(\calG,\delta)}_i(b) \cdot b^{> 
\epsilon(\calG,\delta)}_{i,j} + 
U^{j, \leq \epsilon(\calG,\delta)}_i(b) \cdot b^{\leq \epsilon(\calG,\delta)}_{i,j}
\end{equation}
But then, for any $a' \in \calA_{i,j}$, we have
\begin{eqnarray*}
U^j_i(b) & = & U^{j,> \epsilon(\calG,\delta)}_i(b) \cdot b^{> 
\epsilon(\calG,\delta)}_{i,j} + 
U^{j, \leq \epsilon(\calG,\delta)}_i(b) \cdot b^{\leq \epsilon(\calG,\delta)}_{i,j}\\
& \geq &  U^{j,> \epsilon(\calG,\delta)}_i(b) \cdot b^{> 
\epsilon(\calG,\delta)}_{i,j} \\
& \geq & (U^j_i(b \mid \pi^{a'}_{i,j}) - \frac{1}{3 \cdot (\height^\calG + 1)} \cdot \delta)
\cdot b^{> 
\epsilon(\calG,\delta)}_{i,j}  \quad \quad \mbox{(by Claim \ref{claim:delta-1})}\\
& \geq &  (U^j_i(b \mid \pi^{a'}_{i,j}) - \frac{1}{3 \cdot (\height^\calG + 1)} \cdot \delta)
\cdot (1- |\calA_{i,j}| \cdot \epsilon(\calG,\delta))\\
& = &     (U^j_i(b \mid \pi^{a'}_{i,j}) - \frac{1}{3 \cdot (\height^\calG + 1)} \cdot \delta)
\cdot (1- |\calA_{i,j}| \cdot   \frac{p^\calG_{0,\min}}{12 \cdot (\height^\calG + 1) \cdot M_\calG \cdot |\calG|} \cdot \delta )\\ 
& \geq &  U^j_i(b \mid \pi^{a'}_{i,j})  - \frac{1}{3 \cdot (\height^\calG + 1)} \cdot \delta 
- U^j_i(b \mid \pi^{a'}_{i,j}) \cdot (|\calA_{i,j}| \cdot   \frac{p^\calG_{0,\min}}{12 \cdot (\height^\calG + 1) \cdot M_\calG \cdot |\calG|} \cdot \delta)\\
& \geq &   U^j_i(b \mid \pi^{a'}_{i,j})  - \frac{1}{3 \cdot (\height^\calG + 1)} \cdot \delta  
-   \frac{1}{3 \cdot (\height^\calG + 1)} \cdot \delta \\
& &    \quad \quad \quad \quad 
\mbox{(because  $U^j_i(b \mid \pi^{a'}_{i,j}) \leq M_\calG$ and  $|\calA_{i,j}| \leq |\calG|$ and
$p^\calG_{0,\min} \leq 1$)}\\
& \geq &  U^j_i(b \mid \pi^{a'}_{i,j})  -  \frac{1}{(\height^{\calF_i} + 1)} \cdot \delta
\quad \quad \mbox{(because $\height^{\calF_i} \leq \height^{\calG}$, for all $i$, 
and $\frac{2}{3} \leq 1$.)}
\end{eqnarray*}

Thus $U^j_i(b) \geq  U^j_i(b \mid \pi^{a'}_{i,j})  -  \frac{1}{(\height^{\calF_i} + 1)} \cdot \delta$,
which completes the proof of the base case.\footnote{Let us remark that
we could have opted for a proof that renders the base case trivial,
and ``swallows'' it into the inductive
case, but we felt this would have come at the expense of clarity.}

\noindent {\em Inductive case:}  Assume 
the claim is true for $m-1$ such that  $0 \leq m-1 < \height^{\calF_i}$.  
We want to show it holds for $m$.
Again, consider any pure strategy $\pi^c_i$ for player $i$,
and suppose that $\pi^c_i(j) = a'$.  In other words, in information
set $I_{i,j}$, the action chosen by $\pi^c_i$ is $a'$.

Let $J^{i}(j,a') = \{ j' \in [d_i] \mid  (j,a',j') \in E^{\calF_i} \}$ denote the set of children $j'$ of $j$ in the forest $\calF_i$, such
that the edge from $j$ to $j'$ is labeled by $a'$.
(In other words, $J^{i}(j,a')$ denotes the information
 sets belonging to player $i$ that could possibly be the
 next information set for that player which is reached, after 
 reaching information set $j$.)
For $j' \in J^{i}(j,a')$,  let 
$\Probb^i_{(b \mid \pi^{a'}_{i,j})}(j' \mid j)$
denote the conditional probability of reaching
information set $I_{i,j'}$,  conditioned on
event of reaching information set $I_{i,j}$ and thereupon
taking action $a' \in \calA_{i,j}$, under profile $b$.
Furthermore, let $\Probb^i_{(b \mid \pi^{a'}_{i,j})}( \neg J^{i}(j,a') 
\mid j)$ denote the conditional probability of 
not reaching any information set in $J^{i}(j,a')$,
conditioned on the event of reaching $I_{i,j}$ and thereupon taking
action $a'$.
Finally, let $U_i^{j, \neg J^{i}(j,a')}(b \mid \pi^{a'}_{i,j})$
denote the conditional expected payoff (under profile $b$), conditioned on
reaching $I_{i,j}$ and thereupon playing $a'$, but thereafter
{\em not} reaching any information set in $J^i(j,a')$.
Note that for all $b \in B^{\epsilon(\calG,\delta)}$,
and every $a' \in \calA_{i,j}$, 
we have:\\
$$(\sum_{j' \in J^i(j,a')} \Probb^i_{(b \mid \pi^{a'}_{i,j})}(j' 
\mid j)) + \Probb^i_{(b \mid \pi^{a'}_{i,j})}( \neg
 J^i(j,a')  \mid  j) = 1.$$

\noindent Note furthermore that:
\begin{eqnarray}
\label{eq:decomp-cond-expect-children-in-info-forest}
U^j_i(b \mid \pi^{a'}_{i,j}) = 
\left( \sum_{j' \in J^{i}(j,a')} U^{j'}_i(b \mid \pi^{a'}_{i,j})
\cdot \Probb^i_{(b \mid \pi^{a'}_{i,j})}(j' \mid j) \right) +
U_i^{j, \neg J^i(j,a')}(b \mid \pi^{a'}_{i,j}) \cdot
\Probb^i_{(b \mid \pi^{a'}_{i,j})}( \neg J^i(j,a')   
\mid  j ) .
\end{eqnarray}

\noindent We now use equation (\ref{eq:decomp-cond-expect-children-in-info-forest}),
the inductive hypothesis, and equation
(\ref{eq:conditional-decomp-big-small}), 
in order to establish
that for any pure strategy $\pi^c_i$ for player $i$, we have
$U^j_i(b) \geq U^j_i(b \mid_{m} \pi^{c}_i) -  
\frac{m+1}{(\height^{\calF_i}+1)} 
\cdot \delta$. 

Suppose that the pure strategy $\pi^c_i$ has $\pi^c_i(j) = a'$.
Observe that in this case: 
\begin{equation}
\label{eq:various-fixing-forms}
(b \mid_m \pi^c_i) = ((b \mid \pi^{a'}_{i,j}) 
\mid_{m-1} \pi^c_i)  = ((b \mid_{m-1} \pi^c_i) \mid
\pi^{a'}_{i,j})
\end{equation} 
Also observe that:
\begin{equation}
\label{eq:doesnt-change-expt-to-mod}
U_i^{j, \neg J^i(j,a')}(b \mid \pi^{a'}_{i,j}) =
U_i^{j, \neg J^i(j,a')}((b \mid \pi^{a'}_{i,j}) \mid_{m-1}
\pi^c_i)
\end{equation}
because this conditional expectation does not change when we
change the strategy $b_i$ in local strategies (at $J^i(j,a')$ 
and below) which we have conditioned
on not reaching.  
We thus have:

\begin{eqnarray*}
U^j_i(b) & = & U^{j,> \epsilon(\calG,\delta)}_i(b) \cdot b^{> 
\epsilon(\calG,\delta)}_{i,j} + 
U^{j, \leq \epsilon(\calG,\delta)}_i(b) \cdot b^{\leq \epsilon(\calG,\delta)}_{i,j}  \quad (by  (\ref{eq:conditional-decomp-big-small}) )\\
& \geq &  U^{j,> \epsilon(\calG,\delta)}_i(b) \cdot b^{> 
\epsilon(\calG,\delta)}_{i,j} \\
& \geq & (U^j_i(b \mid \pi^{a'}_{i,j}) - \frac{1}{3 \cdot (\height^\calG + 1)} \cdot \delta)
\cdot b^{> 
\epsilon(\calG,\delta)}_{i,j}  \quad \quad \mbox{(by Claim \ref{claim:delta-1})}\\
& \geq &  (U^j_i(b \mid \pi^{a'}_{i,j}) - \frac{1}{3 \cdot (\height^\calG + 1)} \cdot \delta)
\cdot (1- |\calA_{i,j}| \cdot \epsilon(\calG,\delta))\\
& = &     (U^j_i(b \mid \pi^{a'}_{i,j}) - \frac{1}{3 \cdot (\height^\calG + 1)} \cdot \delta)
\cdot (1- |\calA_{i,j}| \cdot   \frac{p^\calG_{0,\min}}{12 \cdot (\height^\calG + 1) \cdot M_\calG \cdot |\calG|} \cdot \delta )\\ 
& \geq &  U^j_i(b \mid \pi^{a'}_{i,j})  - \frac{1}{3 \cdot (\height^\calG + 1)} \cdot \delta 
- U^j_i(b \mid \pi^{a'}_{i,j}) \cdot (|\calA_{i,j}| \cdot   \frac{p^\calG_{0,\min}}{12 \cdot (\height^\calG + 1) \cdot M_\calG \cdot |\calG|} \cdot \delta)\\
& \geq &   U^j_i(b \mid \pi^{a'}_{i,j})  - \frac{1}{3 \cdot (\height^\calG + 1)} \cdot \delta  
-   \frac{1}{3 \cdot (\height^\calG + 1)} \cdot \delta \\
& &    \quad \quad \quad \quad 
\mbox{(because  $U^j_i(b \mid \pi^{a'}_{i,j}) \leq M_\calG$ and  $|\calA_{i,j}| \leq |\calG|$ and
$p^\calG_{0,\min} \leq 1$)}\\
& \geq &  U^j_i(b \mid \pi^{a'}_{i,j})  -  \frac{1}{(\height^{\calF_i} + 1)} \cdot \delta
\quad \quad \mbox{(because $\height^{\calF_i} \leq \height^{\calG}$, for all $i$)} \\
& = & 
(\sum_{j' \in J^{i}(j,a')} U^{j'}_i(b \mid \pi^{a'}_{i,j})
\cdot \Probb^i_{(b \mid \pi^{a'}_{i,j})}(j' \mid j))\\
&& 
 + \  U_i^{j, \neg J^i(j,a')}(b \mid \pi^{a'}_{i,j}) \cdot
\Probb^i_{(b \mid \pi^{a'}_{i,j})}( \neg J^i(j,a')   
\mid j) - \frac{1}{(\height^{\calF_i} + 1)} \cdot \delta
\quad \quad \mbox{(by equality 
(\ref{eq:decomp-cond-expect-children-in-info-forest}))}\\
& \geq & 
(\sum_{j' \in J^{i}(j,a')} (U^{j'}_i((b \mid \pi^{a'}_{i,j}) \mid_{m-1}
\pi^c_i) - \frac{m}{(\height^{\calF_i} + 1)} \cdot \delta) 
\cdot \Probb^i_{(b \mid \pi^{a'}_{i,j})}(j' \mid j))\\ 
&&
 + \  U_i^{j, \neg J^i(j,a')}((b \mid \pi^{a'}_{i,j}) \mid_{m-1} \pi^c_i)
- \frac{m}{(\height^{\calF_i} + 1)} \cdot \delta) \cdot                      
\Probb^i_{(b \mid \pi^{a'}_{i,j})}( \neg J^i(j,a')                             
\mid j) - \frac{1}{(\height^{\calF_i} + 1)} \cdot \delta\\
&& (\mbox{by inductive hypothesis, and by (\ref{eq:doesnt-change-expt-to-mod})})\\
& =  & U_i^{j}(b \mid_m \pi^c_i) - \frac{m}{(\height^{\calF_i} + 1)} \cdot \delta - \frac{1}{(\height^{\calF_i} + 1)} \cdot \delta  \quad \quad
(\mbox{by (\ref{eq:various-fixing-forms}) and
(\ref{eq:decomp-cond-expect-children-in-info-forest})})\\
& = & U_i^{j}(b \mid_m \pi^c_i) - 
\frac{m+1}{(\height^{\calF_i} + 1)} \cdot \delta
\end{eqnarray*}

Thus $U_i^j(b) \geq U_i^{j}(b \mid_m \pi^c_i) -                   
\frac{m+1}{(\height^{\calF_i} + 1)} \cdot \delta$. 
This completes the proof of Claim \ref{claim:-delta2}.
\qed
\end{proof}

Part (2.) of Lemma \ref{lem:almost-fixed-is-almost-equilib} 
now follows readily
from Claim \ref{claim:-delta2}.    To see this,
let $J^{\calF_i}$ denote the set of root vertices in the 
information set forest $\calF_i$.  
Let $\Probb^i_b(\neg J^{\calF_i})$ denote
the probability, under
profile $b$, of not reaching any information set in $J^{\calF_i}$.
Finally, let $U^{\neg J^{\calF_i}}_i(b)$ denote 
the conditional expected payoff to player $i$, 
under profile $b$, conditioned 
on the event of not reaching any of the information sets in
$J^{\calF_i}$, and if this event has probability zero, then
by definition we let $U^{\neg J^{\calF_i}}_i(b) := 0$.

Then, for any pure strategy $\pi^c_i$ for player $i$,
we have:

\begin{eqnarray*}
U_i(b) & =  &
(\sum_{j' \in J^{\calF_i}} U^{j'}_i(b)
\cdot \Prob_{b}(I_{i,j'})) \ \ + \ \    U_i^{\neg J^{\calF_i}}(b) \cdot
\Probb^i_{b}( \neg J^{\calF_i})\\
& \geq & (\sum_{j' \in J^{\calF_i}} (U^{j'}_i(b \mid \pi^c_i) - \delta) 
\cdot \Prob_{b}(I_{i,j'})) \ \  + \ \   (U_i^{\neg J^{\calF_i}}(b \mid \pi^c_i) - \delta) \cdot
\Probb^i_{b}( \neg J^{\calF_i})\\ & &  \quad \quad \quad \quad \quad  \quad \quad \mbox{(by applying Claim 
\ref{claim:-delta2}, and since 
$U_i^{\neg J^{\calF_i}}(b) = U^{\neg J^{\calF_i}}_i(b \mid \pi^c_i) )$} \\
&& \mbox{}\\
& =  & U_i(b \mid \pi^c_i) - \delta .
\end{eqnarray*}
Thus $U_i(b) \geq U_i(b \mid \pi^c_i) - \delta$,
which completes the proof of Part (2.) of Lemma 
\ref{lem:almost-fixed-is-almost-equilib}.
\qed
\end{proof}

\noindent Applying Lemma
\ref{lem:almost-fixed-is-almost-equilib},
Proposition 
\ref{prop:weak-approx-ppad}, and Lemma
\ref{lem:poly-con-poly-comp},
we obtain the 
main results of this section:

\begin{theorem}
\label{thm:almost-ppad-complete}
\mbox{}
\begin{enumerate}
\item The problem of computing, given a EFGPR, $\calG$, and given rationals $\delta > 0$ and $\epsilon >0$,
a $\delta$-almost-$\epsilon$-PE of $\calG$,  is \PPAD-complete.

Likewise, the problem of computing, given a EFGPR, $\calG$, and given rationals $\delta > 0$ and $\epsilon >0$,
a $\delta$-almost-$\epsilon$-QPE of $\calG$,  is \PPAD-complete.

\item (cf. \cite{DFP06})  The problem of computing, given a EFGPR, $\calG$, and given a rational $\delta > 0$,
a $\delta$-almost-SGPE of $\calG$ is \PPAD-complete.
\end{enumerate}
\end{theorem}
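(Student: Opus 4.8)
The plan is to assemble Theorem~\ref{thm:almost-ppad-complete} from the three parts of Lemma~\ref{lem:almost-fixed-is-almost-equilib}, from Proposition~\ref{prop:weak-approx-ppad}, and from the polynomial-continuity/computability Lemmas~\ref{lem:poly-con-poly-comp} and~\ref{lem:poly-con-poly-comp-qpe}. All the genuine work is already in those statements, so what remains is a choice of parameters plus elementary bookkeeping. I will also use the trivial monotonicity fact that a $\delta_1$-almost-$\epsilon_1$-(Q)PE is also a $\delta_2$-almost-$\epsilon_2$-(Q)PE whenever $\delta_2\ge\delta_1$ and $\epsilon_2\ge\epsilon_1$ (immediate from the definitions, since a $\delta_1$-almost local best response is a $\delta_2$-almost local best response).

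\emph{Hardness.} This direction is immediate. Any NFG $\Gamma$ embeds, with only polynomial blowup, as the EFGPR $\calE(\Gamma)$; in $\calE(\Gamma)$ each player has a single information set with no information set below it in $\calF_i$, so $\MU^{j,a}_i(b)=U_i(b\mid\pi^a_{i,j})$ and hence a $\delta$-almost-$\epsilon$-PE and a $\delta$-almost-$\epsilon$-QPE of $\calE(\Gamma)$ are the same object; moreover the only subgame of $\calE(\Gamma)$ is the whole game, so a $\delta$-almost-SGPE of $\calE(\Gamma)$ is just a $\delta$-almost-NE. Given $\langle\Gamma,\delta''\rangle$, put $\delta:=\delta''/2$ and $\epsilon:=\delta''/(2(\max_i|S_i|)M)$, where $M$ is the largest payoff of $\Gamma$; a one-line estimate (write $U_i(\sigma)=\sum_a\sigma_i(a)U_i(\sigma\mid\pi^a_i)$, drop the nonnegative terms with $\sigma_i(a)\le\epsilon$, and use that every $a$ with $\sigma_i(a)>\epsilon$ is a $\delta$-almost best response) shows that the mixed profile induced by any $\delta$-almost-$\epsilon$-PE of $\calE(\Gamma)$ is a $\delta''$-almost-NE of $\Gamma$, and likewise any $\delta$-almost-SGPE of $\calE(\Gamma)$ is a $\delta''$-almost-NE of $\Gamma$. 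Since computing a $\delta''$-almost-NE for NFGs (given $\delta''>0$) is $\PPAD$-hard \cite{DasGP09,Chen-Deng06}, all three problems in the theorem are $\PPAD$-hard, so only containment in $\PPAD$ remains to be shown.

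\emph{Containment, part (1).} Given $\langle\calG,\delta,\epsilon\rangle$, set $\epsilon':=\epsilon/2$ and $\delta':=\min(\delta/3,\epsilon/2)$, both of polynomial encoding size. By Lemma~\ref{lem:poly-con-poly-comp} the family $\langle F^{\epsilon}_{\calG}\rangle$ is polynomially continuous and polynomially computable, so by Proposition~\ref{prop:weak-approx-ppad} the almost fixed point approximation problem for $F^{\epsilon'}_{\calG}$ is in $\PPAD$: one computes in $\PPAD$ a rational $b\in B^{\epsilon'}$ with $\|b-F^{\epsilon'}_{\calG}(b)\|_\infty<\delta'$. By Lemma~\ref{lem:almost-fixed-is-almost-equilib}(1) this $b$ is a $(3\delta')$-almost-$(\delta'+\epsilon')$-PE of $\calG$, hence by monotonicity (since $3\delta'\le\delta$ and $\delta'+\epsilon'\le\epsilon$) a $\delta$-almost-$\epsilon$-PE. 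The QPE case is identical with $H^{\epsilon'}_{\calG}$ replacing $F^{\epsilon'}_{\calG}$: the family $\langle H^{\epsilon}_{\calG}\rangle$ is polynomially continuous and computable by Lemma~\ref{lem:poly-con-poly-comp-qpe}, the same argument (Proposition~\ref{prop:weak-approx-ppad}, which holds for any polynomially continuous and polynomially computable family of fixed point functions) puts the almost fixed point approximation problem for $H^{\epsilon'}_{\calG}$ in $\PPAD$, and Lemma~\ref{lem:almost-fixed-is-almost-equilib}(2) turns the almost fixed point into a $(3\delta')$-almost-$(\delta'+\epsilon')$-QPE, hence by monotonicity a $\delta$-almost-$\epsilon$-QPE.

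\emph{Containment, part (2), and the main obstacle.} Given $\langle\calG,\delta\rangle$, set $\epsilon':=\epsilon(\calG,\delta)=\frac{p^\calG_{0,\min}}{12(\height^\calG+1)M_\calG|\calG|}\,\delta$ and $\delta':=\min\!\bigl(\tfrac{1}{3(\height^\calG+1)}\,\epsilon(\calG,\delta)^{\height^\calG+1},\ \epsilon(\calG,\delta)\bigr)$; since $\size(\epsilon(\calG,\delta)^{\height^\calG+1})\le(\height^\calG+1)\,\size(\epsilon(\calG,\delta))$, both parameters have encoding size polynomial in $|\calG|+\size(\delta)$ — this is precisely why $\epsilon(\calG,\delta)$ is taken ``polynomially small'' rather than the exponentially small $\epsilon^*$ used for the $\FIXPA$ results. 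As in part~(1), compute in $\PPAD$ a rational $b\in B^{\epsilon'}$ with $\|b-F^{\epsilon'}_{\calG}(b)\|_\infty<\delta'$; by Lemma~\ref{lem:almost-fixed-is-almost-equilib}(1), $b$ is a $(3\delta')$-almost-$(\delta'+\epsilon')$-PE, and since $3\delta'\le\frac{1}{\height^\calG+1}\epsilon(\calG,\delta)^{\height^\calG+1}$, $\delta'+\epsilon'\le2\epsilon(\calG,\delta)$, and $b\in B^{\epsilon'}=B^{\epsilon(\calG,\delta)}$, monotonicity makes $b$ a $\frac{1}{\height^\calG+1}\epsilon(\calG,\delta)^{\height^\calG+1}$-almost-$(2\epsilon(\calG,\delta))$-PE lying in $B^{\epsilon(\calG,\delta)}$; Lemma~\ref{lem:almost-fixed-is-almost-equilib}(3) then gives that $b$ is a $\delta$-almost-SGPE of $\calG$ (and a fortiori a $\delta$-almost-NE). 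The only steps requiring care are thus (i) keeping the two chosen parameters of polynomial encoding size and (ii) matching the $(3\delta')$-almost-$(\delta'+\epsilon')$-(Q)PE to the hypotheses of the appropriate part of Lemma~\ref{lem:almost-fixed-is-almost-equilib}; both are handled by the monotonicity observation above. The real difficulty lies entirely in the already-proved Lemma~\ref{lem:almost-fixed-is-almost-equilib}(3), whose induction over the information set forests $\calF_i$ (together with the polynomial-continuity estimates of Lemmas~\ref{lem:poly-con-poly-comp} and~\ref{lem:poly-con-poly-comp-qpe}, notably the fact that divisions in the formula for $\MU^{j,a}_i$ are only by quantities bounded below by $\epsilon^{\height^\calG}$) is the delicate part; once these are in hand, the present theorem is routine.
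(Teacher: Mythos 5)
Your proposal is correct and takes essentially the same route as the paper: \PPAD-membership via almost-fixed-point computation for $F^{\epsilon}_{\calG}$ and $H^{\epsilon}_{\calG}$ (Proposition \ref{prop:weak-approx-ppad} together with Lemmas \ref{lem:poly-con-poly-comp} and \ref{lem:poly-con-poly-comp-qpe}), converted to almost-(Q)PEs and then, with the same polynomially small choice of $\epsilon(\calG,\delta)$ and $\delta'$, to a $\delta$-almost-SGPE via Lemma \ref{lem:almost-fixed-is-almost-equilib}, with hardness via the embedding $\calE(\Gamma)$ and the \PPAD-hardness of $\delta$-almost-NE for NFGs. Your hardness paragraph is in fact a bit more explicit than the paper's (which spells out only the SGPE/NE case), since you give the concrete choice of $\epsilon$ and the averaging estimate converting a $\delta$-almost-$\epsilon$-PE of $\calE(\Gamma)$ into a $\delta''$-almost-NE of $\Gamma$.
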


\begin{proof}
First, we establish containment in $\PPAD$ for all the problems:
\begin{enumerate}
\item 
The fact that computing a $\delta$-almost-$\epsilon$-PE, and computing a $\delta$-almost-$\epsilon$-QPE
for a given EFGPR, $\calG$, and given $\delta > 0$ and $\epsilon > 0$,
is in $\PPAD$ follows 
immediately from Lemma \ref{lem:almost-fixed-is-almost-equilib},
Parts (1.) and (2.), 
Proposition 
\ref{prop:weak-approx-ppad}, and Lemma
\ref{lem:poly-con-poly-comp}.

Specifically, by  Lemma \ref{lem:almost-fixed-is-almost-equilib},
Parts (1.),
for $0 < \delta < 1$ and $0 < \epsilon < 1$, 
a profile $b \in B^{\epsilon/2}$, such 
that $\| b - F^{\epsilon/2}_\calG(b) \|_\infty <  \frac{\epsilon \cdot \delta}{3}$,
is also a $\delta$-almost-$\epsilon$-PE.
Likewise,  profile $b \in B^{\epsilon/2}$, such 
that $\| b - H^{\epsilon/2}_\calG(b) \|_\infty <  \frac{\epsilon \cdot \delta}{3}$,
is a $\delta$-almost-$\epsilon$-QPE.

But by Proposition \ref{prop:weak-approx-ppad} and Lemma
\ref{lem:poly-con-poly-comp}, since the functions $F^\epsilon_{\calG}(b)$
and $H^{\epsilon}_{\calG}(b)$ are
polynomially computable and polynomially continuous
(with respect to the input $\langle G , \epsilon \rangle$),
the problem of computing such a profile $b$ is in $\PPAD$.

\item 
For $\delta > 0$, let $\epsilon(\calG,\delta) :=
  \frac{p^\calG_{0,\min}}{12 \cdot (\height^\calG +1) \cdot M_\calG \cdot |\calG|} \cdot \delta$.
Let $\delta' = \frac{1}{3 \cdot
    (\height^\calG + 1)} \cdot \epsilon(\calG,\delta)^{(\height^\calG +
    1)}$.

Since $\delta' <  \epsilon(\calG,\delta)$,
we have $(\delta' + \epsilon(\calG,\delta)) \leq (2 \cdot 
 \epsilon(\calG,\delta))$.
It thus follows from 
Lemma \ref{lem:almost-fixed-is-almost-equilib}, Part 1., that
if $b \in B^{\epsilon(\calG,\delta)}$ 
satisfies $\| b -
  F^{\epsilon(\calG,\delta)}_{\calG}(b) \|_{\infty} < \delta'$, then\\ $b$
is a
$\frac{1}{(\height^\calG+1)} \cdot  
\epsilon(\calG,\delta)^{(\height^\calG + 1)}$-almost-$(2 \cdot
\epsilon(\calG,\delta))$-PE.
But then Lemma 
\ref{lem:almost-fixed-is-almost-equilib}, Part 2.,
implies that $b$ is also a 
$\delta$-almost subgame perfect equilibrium of $\calG$.

Thus, the problem computing a $\delta$-almost-SGPE of $\calG$ is P-time reducible to the
problem of computing a $b \in B^{\epsilon(\calG,\delta)}$
such that 
$\| b -
  F^{\epsilon(\calG,\delta)}_{\calG}(b) \|_{\infty} < \delta'$.
But since both $\epsilon(\calG,\delta) > 0$
and $\delta' > 0$  
are rational numbers both of whose encoding size
(in binary) is polynomial in the encoding size of the
input $\langle \calG, \delta \rangle$,
by Proposition
\ref{prop:weak-approx-ppad}, computing a $\delta$-almost-SGPE
is in $\PPAD$.
\end{enumerate}

Finally, to see that both problems are $\PPAD$-hard,  recall
that Daskalakis, Goldberg, and Papadimitriou \cite{DasGP09}
established that computing a $\delta$-almost NE (a.k.a., a
$\delta$-NE, in the terminology they used), given a $n$-player 
{\em normal form} game, $\Gamma$, and given $\delta > 0$, is $\PPAD$-hard.  
Now recall that a $n$-player NFG, $\Gamma$, is trivially encodable
as a $n$-player EFGPR, $\calE(\Gamma)$, 
and note that a $\delta$-almost-SGPE of $\calE(\Gamma)$ is also a $\delta$-almost-NE of $\Gamma$.
\qed 
\end{proof}

A simple corollary of Theorem \ref{thm:almost-ppad-complete} is that
computing an $\delta$-almost-$\epsilon$-PE for a NFG
is also $\PPAD$-complete.

\begin{corollary}
The problem of computing, given a NFG, $\Gamma$, and given rationals $\delta > 0$ and $\epsilon >0$,
a $\delta$-almost-$\epsilon$-PE of $\Gamma$,  is \PPAD-complete.
\end{corollary}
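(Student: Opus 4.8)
The plan is to obtain both directions from Theorem \ref{thm:almost-ppad-complete} together with the trivial embedding $\calE(\Gamma)$ of an NFG into an EFGPR described in Section \ref{sec:background}.

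For containment in $\PPAD$, given an input $\langle \Gamma, \delta, \epsilon \rangle$ I would form the EFGPR $\calE(\Gamma)$, whose encoding size is polynomial (essentially linear) in $|\Gamma|$. In $\calE(\Gamma)$ every player $i$ has a single information set, so the agent associated with player $i$ is player $i$ itself, the payoffs agree, and hence $\Agent\Normal(\calE(\Gamma))$ is (isomorphic to) $\Gamma$; moreover a behavior profile of $\calE(\Gamma)$ is literally a mixed profile of $\Gamma$ under the payoff-respecting one-to-one correspondence noted just after the definition of $\calE(\Gamma)$. By Proposition \ref{prop:almost-local-opt-global-opt}, a behavior profile $b$ is a $\delta$-almost-$\epsilon$-PE of $\calE(\Gamma)$ iff $b$, read as a mixed profile, is a $\delta$-almost-$\epsilon$-PE of $\Agent\Normal(\calE(\Gamma)) = \Gamma$. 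Therefore running the $\PPAD$ procedure of Theorem \ref{thm:almost-ppad-complete}, Part 1, on $\langle \calE(\Gamma), \delta, \epsilon \rangle$ and interpreting the output as a mixed profile of $\Gamma$ solves the NFG problem, placing it in $\PPAD$.

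For $\PPAD$-hardness I would reduce from computing a $\delta'$-almost-NE of an $n$-player NFG, which is $\PPAD$-hard by Daskalakis, Goldberg, and Papadimitriou \cite{DasGP09}. Given $\langle \Gamma, \delta' \rangle$, let $M_\Gamma$ be the largest payoff and $S_{\max} = \max_i |S_i|$, and set $\delta := \delta'/2$ and $\epsilon := \delta'/(2\, S_{\max}\, M_\Gamma)$; both are rationals of encoding size polynomial in $|\Gamma| + \size(\delta')$. The key (routine) claim is that every $\delta$-almost-$\epsilon$-PE $\sigma$ of $\Gamma$ is already a $\delta'$-almost-NE of $\Gamma$: fixing a player $i$ and writing $v^* = \max_{c'} U_i(\sigma \mid \pi^{c'}_i)$, every pure strategy $c$ with $\sigma_i(c) > \epsilon$ satisfies $U_i(\sigma \mid \pi^c_i) \geq v^* - \delta$, while the remaining pure strategies carry total probability at most $S_{\max}\, \epsilon$ and, by positivity of payoffs, payoff at least $0 \geq v^* - M_\Gamma$; hence $U_i(\sigma) \geq (v^* - \delta)(1 - S_{\max}\, \epsilon) \geq v^* - \delta - S_{\max}\, \epsilon\, M_\Gamma = v^* - \delta'$, so $\sigma_i$ is a $\delta'$-almost best response to $\sigma$. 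Thus any $\delta$-almost-$\epsilon$-PE of $\Gamma$ can be returned as the desired $\delta'$-almost-NE.

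The only step requiring care is the quantitative claim in the hardness argument: one must resist assuming that a $\delta$-almost-$\epsilon$-PE is automatically a $\delta$-almost-NE, since up to $\sum_i |S_i|\,\epsilon$ of the probability mass may sit on pure strategies that are far from optimal, which is precisely why $\epsilon$ must be chosen small relative to $\delta'$, the number of pure strategies, and the largest payoff. Everything else — the polynomial size bound on $\calE(\Gamma)$, the identification $\Agent\Normal(\calE(\Gamma)) = \Gamma$, and the verification that both reductions are polynomial-time — is immediate from the definitions and from results already established in the excerpt.
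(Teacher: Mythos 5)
Your proposal is correct, and the containment half coincides with the paper's argument: the paper likewise passes to $\calE(\Gamma)$ and invokes Theorem \ref{thm:almost-ppad-complete} (Part 1) together with the payoff-preserving one-to-one correspondence between mixed profiles of $\Gamma$ and behavior profiles of $\calE(\Gamma)$, which makes the $\delta$-almost-$\epsilon$-PEs of the two games literally the same; your detour through $\Agent\Normal(\calE(\Gamma)) \cong \Gamma$ and Proposition \ref{prop:almost-local-opt-global-opt} is just another way of saying the same thing, since each player of $\calE(\Gamma)$ has a single information set. Where you genuinely diverge is on hardness: the paper's proof of the corollary does not spell this out at all (it simply leans on the completeness statement of the theorem, whose own hardness paragraph explicitly treats only the $\delta$-almost-SGPE case via the observation that a $\delta$-almost-SGPE of $\calE(\Gamma)$ is a $\delta$-almost-NE of $\Gamma$; for $\delta$-almost-$\epsilon$-PE one would otherwise have to route through Lemma \ref{lem:almost-fixed-is-almost-equilib}, Part 3, with its height- and $p^\calG_{0,\min}$-dependent parameters). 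Your direct NFG-specific reduction—take $\delta=\delta'/2$, $\epsilon=\delta'/(2 S_{\max} M_\Gamma)$, and show any $\delta$-almost-$\epsilon$-PE is a $\delta'$-almost-NE because the sub-$\epsilon$ strategies carry mass at most $S_{\max}\epsilon$ and payoffs lie in $[0,M_\Gamma]$—is a cleaner, self-contained way to get the hardness, and it correctly identifies the one point that must not be glossed over (a $\delta$-almost-$\epsilon$-PE is not automatically a $\delta$-almost-NE). The only nitpick is the degenerate cases in your chain of inequalities (e.g.\ $v^*-\delta<0$, or $\epsilon$ so large that no pure strategy exceeds it), but these are harmless since payoffs are nonnegative and in that regime $\delta'\geq M_\Gamma$ makes every profile a $\delta'$-almost-NE, so the bound $U_i(\sigma)\geq v^*-\delta-S_{\max}\,\epsilon\,M_\Gamma$ survives; it would be worth one sentence to dispose of them explicitly.
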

\begin{proof}
This follows by applying Theorem \ref{thm:almost-ppad-complete} (
Part 1.) to the ``equivalent'' EFGPR,  $\calE(\Gamma)$, which 
we can easily construct from $\Gamma$,
and from  the fact 
that $\calE(\Gamma)$
has exactly the 
same $\delta$-almost-$\epsilon$-PEs  (in behavior strategies) 
as $\Gamma$ does (in mixed strategies).
This follows easily from the payoff-preserving one-to-one correspondence
between the mixed profiles of $\Gamma$ and the behavior profiles of $\calE(\Gamma)$.
\qed
\end{proof}

We have suggested that the notion of 
a $\delta$-almost-$\epsilon$-PE,  is a 
reasonable ``almost'' relaxation of ($\epsilon$-)PE, allowing
for its computation in $\PPAD$  (i.e., using path following
algorithms), in the same way that 
$\delta$-NE (= $\delta$-almost-NE) serves as a relaxation of NE. 

We have thusfar
not defined a ``almost'' relaxation for {\em sequential
equilibrium} (SE).  Since PE ``refines'' SE  (see Proposition 
\ref{prop:kreps-wilson}), 
a possible definition is this:  ``an
assessment $(b',\mu^{b'})$, where the behavior profile
$b'$ is 
a $\delta$-almost-$\epsilon$-PE, and where $\mu^{b'}$ is 
the belief system generated by $b'$ ''.
This is well-defined, because for $\epsilon > 0$, any $\delta$-almost-$\epsilon$-PE, $b'$,
is fully mixed, and thus 
the belief system $\mu^{b'}$ that it generates is uniquely
defined; and we can compute $\mu^{b'}$ efficiently, given $b'$ and $\calG$.
So, we can take this as our definition of a 
``almost'' relaxation of SE.
Theorem \ref{thm:almost-ppad-complete} then implies that
computing such an ``almost'' SE, 
given $\calG$, and 
given $\delta > 0$ and 
$\epsilon > 0$, is $\PPAD$-complete.

\section{Conclusions}

We have characterized the complexity of approximating 
various refinements of equilibrium,
and ``almost equilibrium'',
for extensive form games of perfect recall with $n \geq 3$ players.

Specifically, we have shown that the complexity of approximate 
(or almost) equilibrium 
computation for extensive form games of perfect recall, 
with $n \geq 3$ players, 
including for fundamental refinements such as sequential and (quasi-)perfect equilibrium,
is the same as that of approximate (or almost) Nash equilibrium 
computation for normal form games with 3 players.
Namely, these problems are, respectively, $\FIXPA$-complete and $\PPAD$-complete.

Although our results establish that approximating 
a  PE  for a $n$-player EFGPR,  is in $\FIXP_a$,  
our results {\em do not} imply 
that computing an actual ({\em real-valued}) 
PE  for an $n$-player EFGPR is in $\FIXP$. 
We leave this as an open question, although the more relevant
question, from the point of view of the standard (Turing) model of computation, 
is containment in $\FIXPA$ (in $\PPAD$) for approximation 
(respectively, ``almost'' computation), which we have 
established.

\noindent Some natural open questions suggest themselves:

\begin{enumerate}
\item
The complexity of approximating
a {\em proper equilibrium} for $n$-player NFGs.
Proper equilibrium,
defined by Myerson in \cite{Myerson78}, is an important refinement of PE for 
NFGs\footnote{Peter Bro Miltersen, in conversation with the author, 
has referred to proper equilibrium as 
{\em ``the mother of all''} refinements of equilibrium for NFGs.}
, which Myerson showed always exists for any NFG.

It is defined as follows:  for an NFG, $\Gamma$, and for $\epsilon > 0$,
a mixed strategy profile $\sigma = (\sigma_1, \ldots, \sigma_n)$
is called a {\em $\epsilon$-proper equilibrium} if it is (a.): fully mixed, and (b.):   for every two pure strategies
$c, c'$ of any player $i$,   if $U_i(\sigma \mid \pi^c_i) < U_i(\sigma \mid \pi^{c'}_i)$ then
$\sigma_i(c) \leq \epsilon \cdot \sigma_i(c')$.
A {\em proper equilibrium} is defined to be a limit point of a sequence of $\epsilon_k$-proper equilibria,
where $\epsilon_k > 0$ for all $k \in \nat$, and where  $\lim_{k \rightarrow \infty} \epsilon_k = 0$.

There are connections between proper equilibrium for NFGs and
QPEs of EFGPRs.
In particular,  van Damme \cite{vanDamme84} showed
that a proper equilibrium for an NFG, $\Gamma$, induces a QPE in every EFGPR whose 
(standard) normal form is $\Gamma$.   However, the other direction does not hold:
there are EFGPRs with a QPE (or PE) which is not induced by a proper equilibrium in
a corresponding normal form game. 
S{\o}rensen \cite{Sorensen12} has given a Lemke-like algorithm for computing a proper
equilibrium for 2-player NFGs.\footnote{For NFGs with $n \geq 3$ players, Yamamoto 
\cite{Yamamoto93}
outlined a procedure for approximating a proper equilibrium
based on a {\em continuous} homotopy path following approach,
but as indicated by S{\o}rensen in \cite{Sorensen12}, even for 2-player NFGs it is 
unclear under what conditions Yamamoto's procedure is guaranteed to converge to
an approximate
proper equilibrium.}
Can we approximate a proper equilibrium for  $n$-player NFGs in $\FIXPA$?

\item  One can adapt Myerson's definition of ($\epsilon$-)proper equilibrium
in a natural way,
to define a notion of {\em extensive form} ($\epsilon$-){\em proper equilibrium} (PropE)
as well as ($\epsilon$-){\em quasi-proper equilibrium} (QPropE) for EFGPRs.
PropE refines PE, and  likewise QPropE refines QPE, for EFGPRs.
Such refinements for EFGPRs were already alluded to briefly by van Damme in \cite{vanDamme84}\footnote{As 
van Damme remarks in \cite{vanDamme84}, his main result actually shows that every 
proper equilibrium of  
an NFG, $\Gamma$, induces a QPropE in every EFGPR which has $\Gamma$ as its 
(standard) normal form.},
but we are unaware of any subsequent study of them.
Myerson's existence proof of proper equilibrium for NFGs 
 can be suitably adapted to show 
existence of both a PropE and a QPropE for any EFGPR.
Can we approximate a PropE, and a QPropE, for $n$-player EFGPRs in $\FIXPA$?
\end{enumerate}

\noindent We believe the answer to both of the above questions is ``Yes''.\\
Even if the answers are ``yes'',  it is not entirely
clear what the suitable ``$\delta$-almost'' relaxations of (qausi-)proper equilibrium 
should be. We need such relaxations to place
the problems in $\PPAD$, i.e., to enable discrete path following
algorithms that compute a suitably refined ``almost equilibrium''.
One natural attempt is to define such a relaxation as follows:
a {\em $\delta$-almost-$\epsilon$-proper equilbrium} for NFGs is
a mixed strategy profile $\sigma$ which is (a.): fully mixed, and (b.):   for every two pure strategies
$c, c'$ for any player $i$,   if $U_i(\sigma \mid \pi^c_i) < U_i(\sigma \mid \pi^{c'}_i) - \delta$ then
$\sigma_i(c) \leq \epsilon \cdot \sigma_i(c')$.
It remains to be seen whether this definition is the ``right'' one, and in particular whether computing
such a $\delta$-almost relaxation can be placed in PPAD.

We want to again highlight that we believe our results
can potentially
provide a  ``practical'' computation method for computing a
``almost'' ($\epsilon$-perfect, and $\epsilon$-quasi-perfect) equilibrium for
EFGPRs, with $n \geq 3$ players, by applying Scarf-like
discrete path following algorithms
on the ``small'' algebraic
fixed point functions that we have developed for $n$-player EFGPRs.
We believe this is a promising computational approach 
that should be implemented and
explored experimentally.

\vspace*{0.1in}

\noindent{\bf Acknowledgements.}
Thanks to Peter Bro Miltersen for several helpful comments.
Thanks to Mihalis Yannakakis for our collaboration on \cite{EY07}, which 
provides both the perspective, and a number of the tools, used in this paper.
Thanks to my co-authors
 Peter Bro Miltersen,
 Kristoffer Hansen, and 
Troels S{\o}rensen on the recent paper \cite{EHMS14}, on which this paper 
builds directly.
Thanks to Costis Daskalakis for clarifications about the result
in \cite{DFP06} for extensive form games.

\bibliographystyle{plain}

\end{document}